\theoremstyle{plain}
\newtheorem{lem}{\protect\lemmaname}
\theoremstyle{plain}
\newtheorem{prop}{\protect\propositionname}
\theoremstyle{remark}
\newtheorem*{rem*}{\protect\remarkname}
\theoremstyle{plain}
\newtheorem{cor}{\protect\corollaryname}
\theoremstyle{definition}
\newtheorem{defn}{\protect\definitionname}
\theoremstyle{plain}
\newtheorem{conjecture}{\protect\conjecturename}
\providecommand{\conjecturename}{Conjecture}
\providecommand{\corollaryname}{Corollary}
\providecommand{\definitionname}{Definition}
\providecommand{\lemmaname}{Lemma}
\providecommand{\propositionname}{Proposition}
\providecommand{\remarkname}{Remark}
\begin{document}
\begin{doublespace}
\begin{center}
\textbf{\textcolor{black}{\large{}The Perfect Marriage and Much More:
Combining Dimension Reduction, Distance Measures and Covariance}}{\large\par}
\par\end{center}

\begin{center}
\textbf{Ravi Kashyap }
\par\end{center}

\begin{center}
\textbf{SolBridge International School of Business / City University
of Hong Kong }
\par\end{center}

\begin{center}
\begin{center}
\today
\par\end{center}
\par\end{center}

\begin{center}
Keywords: Dimension Reduction; Johnson-Lindenstrauss; Bhattacharyya;
Distance Measure; Covariance; Distribution; Uncertainty
\par\end{center}

\begin{center}
JEL Codes: C44 Statistical Decision Theory; C43 Index Numbers and
Aggregation; G12 Asset Pricing 
\par\end{center}

\begin{center}
Mathematics Subject Codes: 60E05 Distributions: general theory; 62-07
Data analysis; 62P20 Applications to economics
\par\end{center}

\begin{center}
\textbf{\textcolor{blue}{\href{https://doi.org/10.1016/j.physa.2019.04.174}{Edited Version: Kashyap, R. (2019). The perfect marriage and much more: Combining dimension reduction, distance measures and covariance. Physica A: Statistical Mechanics and its Applications, XXX, XXX-XXX. }}}
\par\end{center}

\begin{center}
\tableofcontents{}\pagebreak{}
\par\end{center}
\end{doublespace}
\begin{doublespace}

\section{Abstract}
\end{doublespace}

\begin{doublespace}
We develop a novel methodology based on the marriage between the Bhattacharyya
distance, a measure of similarity across distributions of random variables,
and the Johnson-Lindenstrauss Lemma, a technique for dimension reduction.
The resulting technique is a simple yet powerful tool that allows
comparisons between data-sets representing any two distributions.
The degree to which different entities, (markets, universities, hospitals,
cities, groups of securities, etc.), have different distance measures
of their corresponding distributions tells us the extent to which
they are different, aiding participants looking for diversification
or looking for more of the same thing. We demonstrate a relationship
between covariance and distance measures based on a generic extension
of Stein's Lemma. We consider an asset pricing application and then
briefly discuss how this methodology lends itself to numerous market-structure
studies and even applications outside the realm of finance / social
sciences by illustrating a biological application. We provide numerical
illustrations using security prices, volumes and volatilities of both
these variables from six different countries.
\end{doublespace}
\begin{doublespace}

\section{Introduction}
\end{doublespace}

\begin{doublespace}
The varying behavior of participants in a social system, which can
also be viewed as unpredictable actions, will give rise to unintended
consequences and as long as participants are free to observe the results
and modify their actions, this effect will persist (Kashyap 2017;
appendix \ref{sec:Appendix:-Uncertainty-and} has a discussion of
uncertainty and unintended consequences). Unintended consequences
and its siamese twin, uncertainty, are fostering a trend of collecting
data to improve decision making, which is perhaps leading to more
analysis of the data and more actions, leading to a cycle of increasing
data collection and actions, giving rise to information explosion
(Dordick \& Wang 1993; Korth \& Silberschatz 1997; Sweeney 2001; Fuller
2010; Major \& Savin-Baden 2010; Beath, Becerra-Fernandez, Ross \&
Short 2012). (Kashyap 2015) consider ways to reduce the complexity
of social systems, which could be one way to mitigate the effect of
unintended outcomes. While attempts at designing less complex systems
are worthy endeavors, reduced complexity might be hard to accomplish
in certain instances and despite successfully reducing complexity,
alternate techniques at dealing with uncertainty are commendable complementary
pursuits (Kashyap 2016).

While it might be possible to observe historical trends (or other
attributes) and make comparisons across fewer number of entities;
in large systems where there are numerous components or contributing
elements, this can be a daunting task. In this present paper, we present
quantitative measures across aggregations of smaller elements that
can aid decision makers by providing simple yet powerful metrics to
compare large groups of entities.

\textbf{\textit{We consider a measure of similarity, the Bhattacharyya
distance, across distributions of variables. We develop a novel methodology
based on the marriage between the Bhattacharyya distance and the Johnson-Lindenstrauss
Lemma (JL-Lemma), a technique for dimension reduction; providing us
with a simple yet powerful tool that allows comparisons between data-sets
representing any two distributions. The degree to which different
entities, (markets, universities, hospitals, countries, cities, farms,
forests, groups of securities, etc.), have different distance measures
of their corresponding distributions tells us the extent to which
they are different, aiding participants looking for diversification
or looking for more of the same thing. To calculate the distance measure
between any two random variables (both of which could be multi-variate
with either different number of component random variables or observations),
we utilize the JL-Lemma to reduce the dimensions of one of the random
variables (either the number of component random variables or the
number of observations) such that the two random variables have the
same number of dimensions allowing us to apply distance measures.}}
\end{doublespace}

\textbf{\textit{We demonstrate a relationship between covariance and
distance measures based on a generic extension of Stein's Lemma. We
consider an asset pricing application and also briefly discuss how
this methodology lends itself to numerous market-structure and asset
pricing studies and even applications outside the realm of finance
/ social sciences by illustrating a biological application (sections
\ref{sec:Marketstructure,-Microstructure-}, \ref{subsec:Asset-Pricing-Application},
\ref{subsec:Biological-Application}). We illustrate some ways in
which our techniques can be used in practice by comparing stock prices,
trading volumes, price volatilities and volume volatilities across
six markets (sections \ref{subsec:Comparison-of-Security}; \ref{sec:Comparison-Volumes-Prices-Volatilities}).}}

\begin{doublespace}
An unintended consequence of our efforts, has become a review of the
vast literature on distance measures and related statistical techniques,
which can be useful for anyone that attempts to apply the corresponding
techniques to the problems mentioned here and also to the many unmentioned
but possibly related ones. The results and the discussion draw upon
sources from statistics, probability, economics / finance, communication
systems, pattern recognition and information theory; becoming one
example of how elements of different fields can be combined to provide
answers to the questions raised by a particular field. All the propositions
are new results and they depend on existing results which are given
as lemmas without proof. Such an approach ensures that the results
are instructive and immediately applicable to a wider audience.
\end{doublespace}
\begin{doublespace}

\section{Literature Review of Methodological Fundamentals}
\end{doublespace}
\begin{doublespace}

\subsection{Bhattacharyya Distance}
\end{doublespace}

\begin{doublespace}
We use the Bhattacharyya distance (Bhattacharyya 1943; 1946) as a
measure of similarity or dissimilarity between the probability distributions
of the two entities we are looking to compare. These entities could
be two securities, groups of securities, markets or any statistical
populations that we are interested in studying. The Bhattacharyya
distance is defined as the negative logarithm of the Bhattacharyya
coefficient. 
\begin{equation}
D_{BC}\left(p_{i},p_{i}^{\prime}\right)=-\ln\left[\rho\left(p_{i},p_{i}^{\prime}\right)\right]
\end{equation}
Here, $D_{BC}\left(p_{i},p_{i}^{\prime}\right)$, is the Bhattacharyya
Distance between two multinomial populations each consisting of $k$
categories or classes with associated probabilities $p_{1},p_{2},...,p_{k}$
and $p_{1}^{\prime},p_{2}^{\prime},...,p_{k}^{\prime}$ respectively.
The Bhattacharyya coefficient, $\rho\left(p_{i},p_{i}^{\prime}\right)$,
is calculated as shown below for discrete and continuous probability
distributions. 
\begin{equation}
\rho\left(p_{i},p_{i}^{\prime}\right)=\sum_{i}^{k}\sqrt{p_{i}p_{i}^{\prime}}
\end{equation}
\begin{equation}
\rho\left(p_{i},p_{i}^{\prime}\right)=\int\sqrt{p_{i}\left(x\right)p_{i}^{\prime}\left(x\right)}dx
\end{equation}

Bhattacharyya’s original interpretation of the measure was geometric
(Derpanis 2008). He considered two multinomial populations each consisting
of $k$ categories or classes with associated probabilities $p_{1},p_{2},...,p_{k}$
and $p_{1}^{\prime},p_{2}^{\prime},...,p_{k}^{\prime}$ respectively.
Then, as $\sum_{i}^{k}p_{i}=1$ and $\sum_{i}^{k}p_{i}^{\prime}=1$,
he noted that $(\sqrt{p_{1}},...,\sqrt{p_{k}})$ and $(\sqrt{p_{1}^{\prime}},...,\sqrt{p_{k}^{\prime}})$
could be considered as the direction cosines of two vectors in $k-$dimensional
space referred to a system of orthogonal co-ordinate axes. As a measure
of divergence between the two populations Bhattacharyya used the square
of the angle between the two position vectors. If $\theta$ is the
angle between the vectors then: 
\begin{equation}
\rho\left(p_{i},p_{i}^{\prime}\right)=cos\theta=\sum_{i}^{k}\sqrt{p_{i}p_{i}^{\prime}}
\end{equation}
Thus if the two populations are identical: $cos\theta=1$ corresponding
to $\theta=0$, hence we see the intuitive motivation behind the definition
as the vectors are co-linear. Bhattacharyya further showed that by
passing to the limiting case a measure of divergence could be obtained
between two populations defined in any way given that the two populations
have the same number of variates. The value of coefficient then lies
between $0$ and $1$. 
\begin{equation}
0\leq\rho\left(p_{i},p_{i}^{\prime}\right)=\sum_{i}^{k}\sqrt{p_{i}p_{i}^{\prime}}=\sum_{i}^{k}p_{i}\sqrt{\frac{p_{i}^{\prime}}{p_{i}}}\leq\sqrt{\sum_{i}^{k}p_{i}^{\prime}}=1
\end{equation}
\begin{equation}
0\leq D_{BC}\left(p_{i},p_{i}^{\prime}\right)\leq\infty
\end{equation}
Here, the last inequality follows from Jensen's inequality. (Comaniciu,
Ramesh \& Meer 2003) modify this as shown below and prove that this
alternate measure, $d\left(p_{i},p_{i}^{\prime}\right)$ termed the
modified Bhattacharyya Metric, follows all the metric axioms: positive,
symmetric, is zero for the same two elements and satisfies the triangle
inequality. 
\begin{equation}
d\left(p_{i},p_{i}^{\prime}\right)=\sqrt{1-\rho\left(p_{i},p_{i}^{\prime}\right)}
\end{equation}
We get the following formulae (Lee and Bretschneider 2012) for the
Bhattacharyya distance when applied to the case of two univariate
normal distributions. 
\begin{equation}
D_{BC-N}(p,q)=\frac{1}{4}\ln\left(\frac{1}{4}\left(\frac{\sigma_{p}^{2}}{\sigma_{q}^{2}}+\frac{\sigma_{q}^{2}}{\sigma_{p}^{2}}+2\right)\right)+\frac{1}{4}\left(\frac{(\mu_{p}-\mu_{q})^{2}}{\sigma_{p}^{2}+\sigma_{q}^{2}}\right)
\end{equation}

Here, $D_{BC-N}(p,q)$ is the Bhattacharyya distance between $p$
and $q$ normal distributions or classes. $\sigma_{p}^{2}$ is the
variance of the $p-$th distribution, $\mu_{p}$ is the mean of the
$p-$th distribution, and $p,q$ are two different distributions.

The original paper on the Bhattacharyya distance (Bhattacharyya 1943)
mentions a natural extension to the case of more than two populations.
For an $M$ population system, each with $k$ random variates, the
definition of the coefficient becomes, 
\begin{equation}
\rho\left(p_{1},p_{2},...,p_{M}\right)=\int\cdots\int\left[p_{1}\left(x\right)p_{2}\left(x\right)...p_{M}\left(x\right)\right]^{\frac{1}{M}}dx_{1}\cdots dx_{k}
\end{equation}

For two multivariate normal distributions, $D_{BC-MN}\left(p_{1},p_{2}\right)$
is the Bhattacharyya distance between two multivariate normal distributions,
$\boldsymbol{p_{1}},\boldsymbol{p_{2}}$ where $\boldsymbol{p_{i}}\sim\mathcal{N}(\boldsymbol{\mu}_{i},\,\boldsymbol{\Sigma}_{i})$.
\begin{equation}
D_{BC-MN}\left(p_{1},p_{2}\right)=\frac{1}{8}(\boldsymbol{\mu}_{1}-\boldsymbol{\mu}_{2})^{T}\boldsymbol{\Sigma}^{-1}(\boldsymbol{\mu}_{1}-\boldsymbol{\mu}_{2})+\frac{1}{2}\ln\,\left(\frac{\det\boldsymbol{\Sigma}}{\sqrt{\det\boldsymbol{\Sigma}_{1}\,\det\boldsymbol{\Sigma}_{2}}}\right)
\end{equation}

$\boldsymbol{\mu}_{i}$ and $\boldsymbol{\Sigma}_{i}$ are the means
and covariances of the distributions, and $\boldsymbol{\Sigma}=\frac{\boldsymbol{\Sigma}_{1}+\boldsymbol{\Sigma}_{2}}{2}$.
We need to keep in mind that a discrete sample could be stored in
matrices of the form $A$ and $B$, where, $n$ is the number of observations
and $m$ denotes the number of variables captured by the two matrices.
\begin{equation}
\boldsymbol{A_{m\times n}}\sim\mathcal{N}\left(\boldsymbol{\mu_{1}},\boldsymbol{\varSigma_{1}}\right)
\end{equation}
\begin{equation}
\boldsymbol{B_{m\times n}}\sim\mathcal{N}\left(\boldsymbol{\mu_{2}},\boldsymbol{\varSigma_{2}}\right)
\end{equation}

The Bhattacharyya measure finds numerous applications in communications,
pattern recognition and information theory (Mak and Barnard 1996;
Guorong, Peiqi \& Minhui 1996). Other measures of divergence include
the Mahalanobis distance (Mahalanobis 1936) which is a measure of
the distance between a point and a distribution, Kullback-Leibler
(KL) divergence (Kullback \& Leibler 1951) and the Hellinger or Matusita
measure, $D_{H-M}\left(p_{i},p_{i}^{\prime}\right)$, (Hellinger 1909;
Matusita 1955) which is related to the Bhattacharyya distance since
minimizing the Matusita distance is equivalent to maximizing the Bhattacharyya
distance. The KL measure is not symmetric and is discussed in (Duchi
2007; Contreras-Reyes \& Arellano-Valle 2012) when applied to normal
or skew normal distributions. In (Aherne, Thacker \& Rockett 1998)
it is shown that the Bhattacharyya coefficient approximates the $\chi^{2}-$measure
(Chi-Squared measure), while avoiding the singularity problem that
occurs when comparing instances of the distributions that are both
zero. 
\begin{equation}
D_{H-M}\left(p_{i},p_{i}^{\prime}\right)=\sum_{i}^{k}\left(\sqrt{p_{i}}-\sqrt{p_{i}^{\prime}}\right)^{2}=2-2\rho\left(p_{i},p_{i}^{\prime}\right)
\end{equation}
\begin{equation}
\chi^{2}\left(p_{i},p_{i}^{\prime}\right)=\frac{1}{2}\sum_{i}^{k}\frac{\left(p_{i}-p_{i}^{\prime}\right)^{2}}{\left(p_{i}+p_{i}^{\prime}\right)}
\end{equation}
For discrete probability distributions $P$ and $Q$, the Kullback–Leibler
divergence, $D_{\mathrm{KL}}(P\|Q)$ , of $Q$ from $P$ is defined
to be,
\begin{equation}
D_{\mathrm{KL}}(P\|Q)=\sum_{i}P(i)\,\log\frac{P(i)}{Q(i)}
\end{equation}
For distributions $P$ and $Q$, of continuous random variables, the
Kullback–Leibler divergence is defined (Bishop 2006) to be the integral
below, where $p$ and $q$ denote the densities of $P$ and $Q$.
(Huzurbazar 1955) proves a remarkable property that for all distributions
admitting sufficient statistics the exact forms of the KL divergence
come out as explicit functions of the parameters of the distribution.
\begin{equation}
D_{\mathrm{KL}}(P\|Q)=\int_{-\infty}^{\infty}p(x)\,\log\frac{p(x)}{q(x)}\,{\rm d}x\!
\end{equation}
(Schweppe 1967b) develops new expressions for the KL divergence and
the Bhattacharyya distance, in terms of the effects of various conditional
expectation filters (physically realizable linear systems) acting
on Gaussian processes. In particular, the distances are given by time
integrals of the variances and mean values of the outputs of filters
designed to generate the conditional expectations of certain processes.
Defining a matched filter to be one whose output is the conditional
expectation of the signal contained in the input, then the performance
of a matched filter is much easier to analyze than the performance
of a mismatched filter. The divergence involves mismatched filters
while the Bhattacharyya distance uses only matched filters. Hence
the Bhattacharyya distance is easier to analyze. In (Kailath 1967)
it is shown that the two measures give similar results for Gaussian
processes with unequal mean value functions and that the Bhattacharyya
distance yields better results when the processes have unequal covariance
functions, where in fact the divergence measure fails. In (Schweppe
1967a) the Bhattacharyya distance is specialized to Markov-Gaussian
processes. (Cha 2007) is a comprehensive survey on distance/similarity
measures between probability density functions. 

For a discrete sample in a two class scenario, (Jain 1976) shows that
any estimate of the Bhattacharyya coefficient is biased and consistent
using a Taylor series expansion around the neighborhood of the points.
We can write the Bhattacharyya coefficient as, 
\begin{equation}
\rho\left(\alpha,\beta\right)=\sum_{i=1}^{N}\sqrt{\alpha_{i}\beta_{i}}
\end{equation}
Here, $\alpha$ and $\beta$ denote the sets of parameters $\left\{ \alpha_{1},...,\alpha_{N}\right\} $
and $\left\{ \beta_{1},...,\beta_{N}\right\} $ respectively. The
class conditional densities are given by,
\begin{equation}
p_{i}=\sum_{i=1}^{N}\alpha_{i}\delta\left(x-i\right)
\end{equation}
\begin{equation}
p_{i}^{\prime}=\sum_{i=1}^{N}\beta_{i}\delta\left(x-i\right)
\end{equation}
\[
\text{such that }\qquad\sum_{i=1}^{N}\alpha_{i}=\sum_{i=1}^{N}\beta_{i}=1
\]
\begin{equation}
\delta\left(i-j\right)=\delta_{ij}=\begin{cases}
0 & \text{if }i\neq j,\\
1 & \text{if }i=j.
\end{cases}
\end{equation}
Let $\hat{\alpha_{i}}$ and $\hat{\beta_{i}}$ denote the maximum
likelihood estimates of $\alpha$and $\beta$, respectively, $i=1,...,N$,
based on $m$ samples available from each of the two classes: 
\begin{equation}
\hat{\alpha_{i}}=\frac{m_{i1}}{m}\qquad\hat{\beta_{i}}=\frac{m_{i2}}{m}
\end{equation}
where $m_{i1}$ and $m_{i2}$ are the numbers of samples for which
$x$ takes the value $i$ from class $c_{1}$ and $c_{2}$, respectively.
We define a sample estimate of the Bhattacharyya coefficient as,
\begin{equation}
\hat{\rho}=\rho\left(\hat{\alpha},\hat{\beta}\right)=\sum_{i=1}^{N}\sqrt{\hat{\alpha}_{i}\hat{\beta}_{i}}
\end{equation}
where, $\hat{\alpha}=\left\{ \hat{\alpha}_{1},...,\hat{\alpha}_{N}\right\} $
and $\hat{\beta}=\left\{ \hat{\beta}_{1},...,\hat{\beta}_{N}\right\} $
respectively. (Djouadi, Snorrason \& Garber 1990) derive closed-form
expressions for the bias and variance of Bhattacharyya coefficient
estimates based on a certain number of training samples from two classes,
described by multivariate Gaussian distributions. Numerical examples
are used to show the relationship between the true parameters of the
densities, the number of training samples, the class variances, and
the dimensionality of the observation space.
\end{doublespace}
\begin{doublespace}

\subsection{Dimension Reduction}
\end{doublespace}

\begin{doublespace}
A key requirement to apply the Bhattacharyya distance in practice
is to have data-sets with the same number of dimensions. (Fodor 2002;
Burges 2009; Sorzano, Vargas \& Montano 2014) are comprehensive collections
of methodologies aimed at reducing the dimensions of a data-set using
Principal Component Analysis or Singular Value Decomposition and related
techniques. 

(Johnson \& Lindenstrauss 1984) proved a fundamental result (JL Lemma)
that says that any $n$ point subset of Euclidean space can be embedded
in $k=O(log\frac{n}{\epsilon^{2}})$ dimensions without distorting
the distances between any pair of points by more than a factor of
$\left(1\pm\epsilon\right)$, for any $0<\epsilon<1$. Whereas principal
component analysis is only useful when the original data points are
inherently low dimensional, the JL Lemma requires absolutely no assumption
on the original data. Also, note that the final data points have no
dependence on $d$, the dimensions of the original data which could
live in an arbitrarily high dimension. 

Simplified versions of the original proof are in (Frankl and Maehara
1988; 1990). We use the version of the bounds for the dimensions of
the transformed subspace given in (Frankl \& Maehara 1990; Dasgupta
\& Gupta 1999). (Nelson 2010) gives a proof using the Hanson and Wright
inequality (Hanson \& Wright 1971). (Achlioptas 2003) gives a proof
of the JL Lemma using randomized algorithms. (Venkatasubramanian \&
Wang 2011) present a study of the empirical behavior of algorithms
for dimensionality reduction based on the JL Lemma. We point out the
wonderful references for the theory of matrix algebra and numerical
applications (Gentle 2007; 2012).
\end{doublespace}
\begin{lem}
\begin{doublespace}
\label{Prop:Johnson and Lindenstrauss --- Dasgupta and Gupta}For
any $0<\epsilon<1$ and any integer $n$, let $k<d$ be a positive
integer such that 
\[
k\geq4\left(\frac{\epsilon^{2}}{2}-\frac{\epsilon^{3}}{3}\right)^{-1}\ln n
\]
Then for any set $V$ of $n$ points in $\boldsymbol{R}^{d}$, there
is a map $f:\boldsymbol{R}^{d}\rightarrow\boldsymbol{R}^{k}$ such
that for all $u,v\in V$, 
\[
\left(1-\epsilon\right)\Vert u-v\Vert^{2}\leq\Vert f\left(u\right)-f\left(v\right)\Vert^{2}\leq\left(1+\epsilon\right)\Vert u-v\Vert^{2}
\]
Furthermore, this map can be found in randomized polynomial time and
one such map is $f\left(x\right)=\frac{1}{\sqrt{k}}Ax$ where, $x\in\boldsymbol{R}^{d}$
and $A$ is a $k\times d$ matrix in which each entry is sampled i.i.d
from a Gaussian $N\left(0,1\right)$ distribution.
\end{doublespace}
\end{lem}
\begin{doublespace}

\section{\label{sec:Intuition-for-Dimension}Intuition for Dimension Reduction}
\end{doublespace}

\begin{doublespace}
The above discussions of distance measures and dimension reduction
are extensively used in many areas, but the combination of the two
is bound to create confusions in the minds of the uninitiated. Hence,
we provide examples from daily life, (both ours and from creatures
in higher dimensions), to provide a convincing argument as to why
the two used together can be a powerful tool for the study of complex
systems governed by uncertainty.
\end{doublespace}
\begin{doublespace}

\subsection{Game of Darts }
\end{doublespace}

\begin{doublespace}
If we consider a cloud of points in multi-dimensional space. It would
be reasonable to expect that the distance between the points, or how
the points are distributed, gives a measure of the randomness inherent
in the process generating them. When dimension reduction moves the
points to lower dimensions, and the change in the distance between
them stays bounded, the randomness properties of the original process
are retained, to the extent as dictated, by the bounds established
by the procedure performing the dimension transformation, which in
our case is given by the JL Lemma. 

Taking an example, from our own real lives, the marks on a dartboard
in a game of darts are representative of the skills of the people
throwing the darts. For simplicity, we could assume that there are
three types of dart throwers: novice, intermediate and advanced. Identifying
the category of the person making the marks, would be similar to identifying
the type of distribution of a stochastic process. If we map the marks,
on the board, to a line using a transformation, that keeps the distances
between the marks bounded, a flavor of the skill level would be retained
and we would be able to identify the category of the person making
the marks. Dimension reduction, using a transformation that keeps
the distances bounded, is in essence the same undertaking.
\end{doublespace}
\begin{doublespace}

\subsection{The Merits and Limits of Four Physical Dimensions}
\end{doublespace}

\begin{doublespace}
Another example of dimension transformation is from the physical world
we live in. We are four dimensional creatures: latitude, longitude,
height and time are our dimensions since we need to know these four
co-ordinates to fully specify the position of any object in our universe.
This is perhaps, made clear to lay audiences, (with regards to physics,
such as many of us), by the movie Interstellar (Thorne 2014). Also,
(Sagan 2006) has a mesmerizing account of many physical aspects including
how objects, or, beings can transform from higher to lower dimensions
and change shapes; but they would need to obey the laws of the lower
dimension. The last dimension, time, is the one we cannot control,
or, move around in. But we can change the other three co-ordinates
and hence we have three degrees of freedom.

(Appendix \ref{sec:Appendix:-Example-Four-Physical-Dimensions}) has
a detailed example to build the intuition related to dimension transformation
from our four dimensional physical world. This example should also
make it clear to us that it is better to work with the highest dimension
that we can afford to work with, since each higher dimension retains
some flavors of the object we are trying to understand that might
be not discernible in lower dimensions. This should also tell us that
objects we observe in our universe might have many interesting properties
that we are unable to observe due to the restrictions of our physical
dimensions.
\end{doublespace}
\begin{doublespace}

\section{Methodological Innovations}
\end{doublespace}

\begin{doublespace}
In this section, we collect the new results developed in this paper.
These can be broadly categorized into two buckets. The first is the
types of distributions that we would obtain when a particular distribution
type is transformed to a different dimension (different number of
random variables) using the JL-Lemma (sections \ref{subsec:Normal-Log-Normal-Mixture},
\ref{subsec:Normal-Normal-Product}, \ref{subsec:Truncated-Normal-},
\ref{subsec:Discrete-Multivariate-Distributi}). The second group
of results considers the relationships between covariance and distance
measures (section \ref{subsec:Covariance-and-Distance}).
\end{doublespace}
\begin{doublespace}

\subsection{\label{subsec:Normal-Log-Normal-Mixture}Normal Log-Normal Mixture}
\end{doublespace}

\begin{doublespace}
The normality or otherwise of stock price changes is discussed extensively
in the literature: (Osborne\textbf{ }1959; 1962; Fama 1965; 1995;
Mandelbrot \& Taylor 1967; Kon 1984; Richardson \& Smith 1993). Starting
with a geometric Brownian motion for the stock price process, it can
be established that stock prices are distributed log-normally, (Hull
2006). If we are looking at variables that are only positive, such
as prices, quantities traded, or volatilities, then it is a reasonable
initial assumption that they are distributed log normally (we relax
this assumption to incorporate more generic settings in later sections). 

Transforming log-normal multi-variate variables into a lower dimension
by multiplication with an independent normal distribution (see: Lemma
\ref{Prop:Johnson and Lindenstrauss --- Dasgupta and Gupta}) results
in the sum of variables with a normal log-normal mixture, (Clark 1973;
Tauchen \& Pitts 1983; Yang 2008), evaluation of which requires numerical
techniques (Miranda \& Fackler 2002). A random variable, $U$, would
be termed a normal log-normal mixture if it is of the form,
\begin{equation}
U=Xe^{Y}
\end{equation}
where, $X$ and $Y$ are random variables with correlation coefficient,
$\rho$ satisfying the below, 
\begin{equation}
\left[\begin{array}{c}
X\\
Y
\end{array}\right]\sim N\left(\left[\begin{array}{c}
\mu_{X}\\
\mu_{Y}
\end{array}\right],\left[\begin{array}{cc}
\sigma_{X}^{2} & \rho\sigma_{X}\sigma_{Y}\\
\rho\sigma_{X}\sigma_{Y} & \sigma_{Y}^{2}
\end{array}\right]\right)
\end{equation}
We note that for $\sigma_{Y}=0$ when $Y$ degenerates to a constant,
this is just the distribution of $X$ and $\rho$ is unidentified. 

To transform a column vector with $d$ observations of a random variable
into a lower dimension of order, $k<d$, we can multiply the column
vector with a matrix, $A\sim N(0;\frac{1}{k})$ of dimension $k\times d$.
\end{doublespace}
\begin{prop}
\begin{doublespace}
\label{prop:Normal_Lognormal_The-density-function} A dimension transformation
of $d$ observations of a log-normal variable into a lower dimension,
$k$, using Lemma \ref{Prop:Johnson and Lindenstrauss --- Dasgupta and Gupta},
yields a probability density function which is the sum of random variables
with a normal log-normal mixture, given by the convolution,
\[
f_{S}\left(s\right)=f_{U_{1}}\left(u_{1}\right)*f_{U_{2}}\left(u_{2}\right)*...*f_{U_{k}}\left(u_{k}\right)
\]
\[
\text{Here, }f_{U_{i}}\left(u_{i}\right)=\frac{\sqrt{k}}{2\pi\sigma_{Y_{i}}}\int_{-\infty}^{\infty}\;e^{-y-\frac{ku_{i}^{2}}{2e^{2y}}-\frac{\left[y-\mu_{Y_{i}}\right]^{2}}{2\sigma_{Y_{i}}^{2}}}dy
\]
\[
U_{i}=X_{i}e^{Y_{i}}
\]
\[
\left[\begin{array}{c}
X_{i}\\
Y_{i}
\end{array}\right]\sim N\left(\left[\begin{array}{c}
0\\
\mu_{Y_{i}}
\end{array}\right],\left[\begin{array}{cc}
\frac{1}{k} & 0\\
0 & \sigma_{Y_{i}}^{2}
\end{array}\right]\right)
\]
The convolution of two probability densities arises when we have the
sum of two independent random variables, $Z=X+Y$. The density of
$Z,\;h_{Z}\left(z\right)$ is given by,
\[
{\displaystyle h_{Z}\left(z\right)=\left(f_{X}\text{\textasteriskcentered}f_{Y}\right)\left(z\right)=f_{X}\left(x\right)*f_{Y}\left(y\right)=\int_{-\infty}^{\infty}f_{X}\left(z-y\right)*f_{Y}\left(y\right)dy=\int_{-\infty}^{\infty}f_{X}\left(x\right)*f_{Y}\left(z-x\right)dx}
\]
When the number of independent random variables being added is more
than two, or the reduced dimension after the Lemma \ref{Prop:Johnson and Lindenstrauss --- Dasgupta and Gupta}
transformation is more than two, $k>2$, then we can take the convolution
of the density resulting after the convolution of the first two random
variables, with the density of the third variable and so on in a pair
wise manner, till we have the final density.
\end{doublespace}
\end{prop}
\begin{proof}
\begin{doublespace}
Appendix \ref{subsec:Proof-of-Proposition: Normal_Lognormal} sketches
a general proof and then tailors it to our case where the normal distribution
has zero mean and the two variables are uncorrelated.
\end{doublespace}
\end{proof}
\begin{doublespace}
Methods of estimating parameters and comparing the resulting distributions
when normal (log-normal) distributions are mixed, are studied in (Fowlkes
1979; Vernic, Teodorescu \& Pelican 2009). As noted earlier, the normal
log-normal mixture tends to the normal distribution when the log-normal
distribution has low variance and this property can be helpful for
deciding when this approach is suitable.
\end{doublespace}
\begin{doublespace}

\subsection{\label{subsec:Normal-Normal-Product}Normal Normal Product}
\end{doublespace}

\begin{doublespace}
For completeness, we illustrate how dimension reduction would work
on a data-set containing random variables that have normal distributions.
This can also serve as a useful benchmark given the wide usage of
the normal distribution and can be an independently useful result.

(Craig 1936) was one of the earlier attempts to investigate the probable
error of the product of the two quantities, each of known probable
error; becoming the first work to determine the algebraic expression
for the moment-generating function of the product, but without being
able to determine the distribution of the product. It was found that
the distribution of $Z=XY$ is a function of the coefficient of correlation
of both variables and of two parameters that are proportional to the
inverse of the coefficient of variation of each variable. When the
product of the means of the two random variables is nonzero, the distribution
is skewed as well as having excess kurtosis, although (Aroian 1947;
Aroian, Taneja \& Cornwell 1978) showed that the product approaches
the normal distribution as one or both of the ratios of the means
to standard errors (the inverse of the coefficients of variation)
of each random variable get large in absolute value.

(Aroian 1947) showed that the gamma distribution (standardized Pearson
type III) can provide an approximation in some situations. Instead,
the analytical solution for this product distribution is a Bessel
function of the second kind with a purely imaginary argument (Aroian
1947; Craig 1936). The four moments of the product of two correlated
normal variables are given in (Craig 1936; Aroian, Taneja \& Cornwell
1978).
\end{doublespace}
\begin{prop}
\begin{doublespace}
\label{prop:Normal_Normal_The-density-function} A dimension transformation
of $d$ observations of a normal variable into a lower dimension,
$k$, using Lemma \ref{Prop:Johnson and Lindenstrauss --- Dasgupta and Gupta},
yields a probability density function which is the sum of random variables
with a normal normal product distribution, given by the convolution,
\[
f_{S}\left(s\right)=f_{U_{1}}\left(u_{1}\right)*f_{U_{2}}\left(u_{2}\right)*...*f_{U_{k}}\left(u_{k}\right)
\]
\[
\text{Here, }f_{U_{i}}\left(u_{i}\right)=\int_{-\infty}^{\infty}\left(\frac{1}{\left|x\right|}\right)\frac{1}{\sigma_{Y_{i}}\sqrt{2\pi}}\;e^{-\frac{\left(x-\mu_{Y_{i}}\right)^{2}}{2\sigma_{Y_{i}}^{2}}}\sqrt{\frac{k}{2\pi}}\;e^{-\frac{k\left(\frac{u_{i}}{x}\right)^{2}}{2}}dx
\]
\[
U_{i}=X_{i}Y_{i}
\]
\[
\left[\begin{array}{c}
X_{i}\\
Y_{i}
\end{array}\right]\sim N\left(\left[\begin{array}{c}
0\\
\mu_{Y_{i}}
\end{array}\right],\left[\begin{array}{cc}
\frac{1}{k} & 0\\
0 & \sigma_{Y_{i}}^{2}
\end{array}\right]\right)
\]
\end{doublespace}
\end{prop}
\begin{proof}
\begin{doublespace}
Appendix \ref{subsec:Proof-of-Proposition: Normal_Normal} sketches
a general proof and then tailors it to our case where the normal distribution
has zero mean and the two variables are uncorrelated.
\end{doublespace}
\end{proof}
\begin{rem*}
\begin{doublespace}
We note the following two useful results.
\end{doublespace}
\end{rem*}
\begin{enumerate}
\begin{doublespace}
\item By writing the product as the difference of two squared variables,
it is easy to see that the product distribution is a linear combination
of two Chi-Square random variables,
\begin{equation}
U_{i}=X_{i}Y_{i}=\frac{1}{4}\left\{ \left[X_{i}+Y_{i}\right]^{2}-\left[X_{i}-Y_{i}\right]^{2}\right\} 
\end{equation}
If $X_{i}\sim N\left(0,\sigma_{X_{i}}^{2}\right);Y_{i}\sim N\left(0,\sigma_{Y_{i}}^{2}\right)$
with a coefficient of correlation, $\rho_{X_{i}Y_{i}}$ then, 
\begin{equation}
U_{i}=X_{i}Y_{i}\sim\frac{1}{4}\left\{ \left(\sigma_{X_{i}}^{2}+\sigma_{Y_{i}}^{2}+2\sigma_{X_{i}}\sigma_{Y_{i}}\rho_{X_{i}Y_{i}}\right)P-\left(\sigma_{X_{i}}^{2}+\sigma_{Y_{i}}^{2}-2\sigma_{X_{i}}\sigma_{Y_{i}}\rho_{X_{i}Y_{i}}\right)Q\right\} 
\end{equation}
Here, $P,Q\sim\chi_{1}^{2}$ or central chi-squared random variables
with one degree of freedom. $P,Q$ are independent only if $\sigma_{X_{i}}^{2}=\sigma_{Y_{i}}^{2}$.
Hence, in general, $P,Q$ are dependent non central chi-squared variables.
\item The result we use in (appendix \ref{subsec:Proof-of-Proposition: Normal_Normal})
to derive the above convolution, can also be arrived at, by writing
the density for $W=XY$ using the Dirac Delta function, $\delta\left(x\right)$
as,
\begin{equation}
f_{W}\left(w\right)=f_{W}\left(x,y\right)=\int_{-\infty}^{\infty}f_{X}\left(\left.x\right|y\right)f_{Y}\left(y\right)dy=\int_{-\infty}^{\infty}f_{X}\left(x\right)dx\int_{-\infty}^{\infty}f_{Y}\left(y\right)\delta\left(w-xy\right)dy
\end{equation}
\begin{equation}
=\int_{-\infty}^{\infty}f_{X}\left(x\right)dx\int_{-\infty}^{\infty}f_{Y}\left(y\right)\frac{\delta\left(y-\frac{w}{x}\right)}{\left|x\right|}dy
\end{equation}
\begin{equation}
=\int_{-\infty}^{\infty}f_{X}\left(x\right)\frac{1}{\left|x\right|}f_{Y}\left(\frac{w}{x}\right)dx
\end{equation}
\end{doublespace}
\end{enumerate}
\begin{doublespace}
(Glen, Leemis \& Drew 2004) present an algorithm for computing the
probability density function of the product of two independent random
variables. (Springer \& Thompson 1966) use the Mellin integral transform
(Epstein 1948; End-note \ref{enu:The-Mellin-transform}) to develop
fundamental methods for the derivation of the probability distributions
and density functions of the products of $n$ independent random variables;
(Springer \& Thompson 1970) use these methods to show that the products
of independent beta, gamma and central Gaussian random variables are
Meijer G-functions (Mathai \& Saxena 1973; End-note \ref{enu:The-G-function-was}).

(Ware \& Lad 2003) has results very closely aligned to our requirements.
They attempt to calculate the probability that the sum of the product
of variables with a Normal distribution is negative. They first assess
the distribution of the product of two independent normally distributed
variables by comparing three different methods: 1) a numerical method
approximation, which involves implementing a numerical integration
procedure on MATLAB; 2) a Monte Carlo construction and; 3) an approximation
to the analytic result using the Normal distribution under certain
conditions, by calculating the first two moments of the product, and
then finding a distribution whose parameters match the moments. Second,
they consider the sum of the products of two Normally distributed
variables by applying the Convolution Formula. Lastly, they combine
the two steps to arrive at the main results while also showing that
a direct Monte Carlo approximation approach could be used. (Seijas-Macías
\& Oliveira 2012) is a recent work that has several comparisons using
Newton-Cotes numerical integration (Weisstein 2004; End-note \ref{enu:The-Newton-Cotes-formulas,}). 

In addition, while it useful to keep these results at the back of
our mind, it is worth finding simpler distributions instead of having
to numerically calculate the distance based on the normal product
or the normal log-normal mixture sum. An alternate is discussed in
the next section, where we set both distributions to be truncated
normal.
\end{doublespace}
\begin{doublespace}

\subsection{\label{subsec:Truncated-Normal-}Truncated Normal / Multivariate
Normal}
\end{doublespace}

\begin{doublespace}
A truncated normal distribution is the probability distribution of
a normally distributed random variable whose value is either bounded
below, above or both (Horrace 2005; Burkardt 2014) and hence seems
like a natural candidate to fit the distributions we are dealing with.
\textcolor{black}{We can estimate the parameters of the distributions
(both the unchanged one and the one with the reduced dimensions) by
setting them as truncated multivariate normals and calculate the distance
based on these estimated distributions. }

To assess the suitability of normal distributions to fit the observed
sample, there are a variety of tests. The univariate sample measures
of skewness $S$ and kurtosis $K$ may be used for testing univariate
normality. Under normality, the theoretical values of $S$ and $K$
are $0$ and $3$, respectively. One of the most famous tests for
normality of regression residuals is the test of (Jarque \& Bera 1980;
1987; End-notes \ref{enu:Skewness}; \ref{enu:Kurtosis}). The test
statistic JB (Jarque-Bera) is a function of the measures of skewness
and kurtosis computed from the sample and is based on the Lagrange
multiplier test or score test. 

The simplest testing problem assumes that the data $y$ are generated
by a joint density function $f\left(y,\theta^{0}\right)$ under the
null hypothesis and by $f\left(y,\theta\right)$ under the alternative,
with $\theta^{0},\theta\in\boldsymbol{R}^{k}$. The Lagrange Multiplier
test is derived from a constrained maximization principle (Engle 1984;
End-note \ref{enu:Rao's-score-LM-test,}). Maximizing the log-likelihood
subject to the constraint that $\theta^{0}=\theta$ yields a set of
Lagrange Multipliers which measure the shadow price of the constraint.
If the price is high, the constraint should be rejected as inconsistent
with the data. 

Another test frequently used is the sum of squares of the standardized
sample skewness and kurtosis which is asymptotically distributed as
a $\chi^{2}$ variate (Doornik and Hansen 2008) along with transformations
of skewness and kurtosis to facilitate easier implementation and also
to contend with small sample issues. (Malkovich \& Afifi 1973) generalize
these statistics to test a hypothesis of multivariate normality including
a discussion of transformations to simplify the computational methods.

(Székely \& Rizzo 2005; End-note \ref{enu:The-Euclidean-distance})
propose a new class of consistent tests for comparing multivariate
distributions based on Euclidean distance between sample elements.
Applications include one-sample goodness-of-fit tests for discrete
or continuous multivariate distributions in arbitrary dimension $d\geq1$.
The new tests can be applied to assess distributional assumptions
for many classical procedures in multivariate analysis. \textbf{(}Wald
\& Wolfowitz 1946) consider the problem of setting tolerance limits
for normal distributions with unknown mean and variance. For a univariate
distribution with a sample of $N$ independent observations, two functions
$L_{1}$ and $L_{2}$ of the sample need to be constructed such that
the probability that the limits $L_{1}$ and $L_{2}$ will include
at-least a given proportion $\gamma$ of the population is equal to
a preassigned value $\beta$.

Suppose $X\sim N(\mu,\sigma^{2})$ has a normal distribution and lies
within the interval $X\in(a,b),\;-\infty\leq a<b\leq\infty$. Then
$X$ conditional on $a<X<b$ has a truncated normal distribution.
Its probability density function, $f_{X}$, for $a\leq x\leq b$ ,
is given by
\begin{equation}
f_{X}\left(x\mid\mu,\sigma^{2},a,b\right)=\begin{cases}
\frac{\frac{1}{\sigma}\phi\left(\frac{x-\mu}{\sigma}\right)}{\Phi\left(\frac{b-\mu}{\sigma}\right)-\Phi\left(\frac{a-\mu}{\sigma}\right)} & \quad;a\leq x\leq b\\
0 & ;\text{otherwise}
\end{cases}
\end{equation}

Here, ${\scriptstyle {\phi(\xi)=\frac{1}{\sqrt{2\pi}}\exp{(-\frac{1}{2}\xi^{2}})}}$
is the probability density function of the standard normal distribution
and ${\scriptstyle {\Phi(\cdot)}}$ is its cumulative distribution
function. There is an understanding that if ${\scriptstyle {b=\infty}\ }$,
then ${\scriptstyle {\Phi(\frac{b-\mu}{\sigma})=1}}$, and similarly,
if ${\scriptstyle {a=-\infty}}$ , then ${\scriptstyle {\Phi(\frac{a-\mu}{\sigma})=0}}$. 
\end{doublespace}
\begin{prop}
\begin{doublespace}
\label{prop:Truncated_N_The-Bhattacharyya-coefficient}The Bhattacharyya
distance when we have truncated normal distributions, $p,q$, that
do not overlap is zero and when they overlap it is given by 
\begin{eqnarray*}
D_{BC-TN}(p,q) & = & \frac{1}{4}\left(\frac{(\mu_{p}-\mu_{q})^{2}}{\sigma_{p}^{2}+\sigma_{q}^{2}}\right)+\frac{1}{4}\ln\left(\frac{1}{4}\left(\frac{\sigma_{p}^{2}}{\sigma_{q}^{2}}+\frac{\sigma_{q}^{2}}{\sigma_{p}^{2}}+2\right)\right)\\
 &  & +\frac{1}{2}\ln\left[\Phi\left(\frac{b-\mu_{p}}{\sigma_{p}}\right)-\Phi\left(\frac{a-\mu_{p}}{\sigma_{p}}\right)\right]+\frac{1}{2}\ln\left[\Phi\left(\frac{d-\mu_{q}}{\sigma_{q}}\right)-\Phi\left(\frac{c-\mu_{q}}{\sigma_{q}}\right)\right]\\
 &  & -\ln\left\{ \Phi\left[\frac{u-\nu}{\varsigma}\right]-\Phi\left[\frac{l-\nu}{\varsigma}\right]\right\} 
\end{eqnarray*}
Here,
\[
p\sim N\left(\mu_{p},\sigma_{p}^{2},a,b\right)\;;\;q\sim N\left(\mu_{q},\sigma_{q}^{2},c,d\right)
\]
\[
l=\min\left(a,c\right)\;;\;u=\min\left(b,d\right)
\]
\[
\nu=\frac{\left(\mu_{p}\sigma_{q}^{2}+\mu_{q}\sigma_{p}^{2}\right)}{\left(\sigma_{p}^{2}+\sigma_{q}^{2}\right)}\;;\;\varsigma=\sqrt{\frac{2\sigma_{p}^{2}\sigma_{q}^{2}}{\left(\sigma_{p}^{2}+\sigma_{q}^{2}\right)}}
\]
\end{doublespace}
\end{prop}
\begin{proof}
\begin{doublespace}
Appendix \ref{subsec:Proof-of-Proposition: Truncated Normal}.
\end{doublespace}
\end{proof}
\begin{doublespace}
It is easily seen that 
\begin{equation}
\underset{\left(a,c\right)\rightarrow-\infty;\left(b,d\right)\rightarrow\infty}{\lim}D_{BC-TN}(p,q)=D_{BC-N}(p,q)
\end{equation}
Looking at conditions when $D_{BC-TN}(p,q)\geq D_{BC-N}(p,q)$ gives
the below. This shows that the distance measure increases with greater
truncation and then decreases as the extent of overlap between the
distributions decreases.
\begin{equation}
\sqrt{\left[\Phi\left(\frac{b-\mu_{p}}{\sigma_{p}}\right)-\Phi\left(\frac{a-\mu_{p}}{\sigma_{p}}\right)\right]\left[\Phi\left(\frac{d-\mu_{q}}{\sigma_{q}}\right)-\Phi\left(\frac{c-\mu_{q}}{\sigma_{q}}\right)\right]}\geq\left\{ \Phi\left[\frac{u-\nu}{\varsigma}\right]-\Phi\left[\frac{l-\nu}{\varsigma}\right]\right\} 
\end{equation}
\textbf{(}Kiani, Panaretos, Psarakis \& Saleem 2008\textbf{; }Zogheib
\& Hlynka 2009; Soranzo \& Epure 2014) list some of the numerous techniques
to calculate the normal cumulative distribution. Approximations to
the error function are also feasible options (Cody 1969; Chiani, Dardari
\& Simon 2003).

Similarly, a truncated multivariate normal distribution $\boldsymbol{X}$
has the density function, 
\begin{equation}
f_{\mathbf{X}}\left(x_{1},\ldots,x_{k}\mid\boldsymbol{\mu_{p}},\,\boldsymbol{\Sigma_{p}},\,\boldsymbol{a},\,\boldsymbol{b}\right)=\frac{\exp\left(-\frac{1}{2}\left({\mathbf{x}}-{\boldsymbol{\mu_{p}}}\right)^{\mathrm{T}}{\boldsymbol{\Sigma_{p}}}^{-1}\left({\mathbf{x}}-{\boldsymbol{\mu_{p}}}\right)\right)}{\int_{\boldsymbol{a}}^{\boldsymbol{b}}\exp\left(-\frac{1}{2}\left({\mathbf{x}}-{\boldsymbol{\mu_{p}}}\right)^{\mathrm{T}}{\boldsymbol{\Sigma_{p}}}^{-1}\left({\mathbf{x}}-{\boldsymbol{\mu_{p}}}\right)\right)d\boldsymbol{x};\;\boldsymbol{x}\in\boldsymbol{R}_{\boldsymbol{a}\leq\boldsymbol{x}\leq\boldsymbol{b}}^{k}}
\end{equation}
Here, $\boldsymbol{\mu_{p}}$ is the mean vector and $\boldsymbol{\Sigma_{p}}$
is the symmetric positive definite covariance matrix of the $\boldsymbol{p}$
distribution and the integral is a $k$ dimensional integral with
lower and upper bounds given by the vectors $\left(\boldsymbol{a},\boldsymbol{b}\right)$
and $\boldsymbol{\boldsymbol{x}\in\boldsymbol{R}_{\boldsymbol{a}\leq\boldsymbol{x}\leq\boldsymbol{b}}^{k}}$.
\end{doublespace}
\begin{prop}
\begin{doublespace}
\label{prop:Truncated_MN_The-Bhattacharyya-distance}The Bhattacharyya
distance when we have truncated multivariate normal distributions
$\boldsymbol{p},\boldsymbol{q}$ and all the $k$ dimensions have
some overlap is given by 
\begin{eqnarray*}
D_{BC-TMN}\left(\boldsymbol{p},\boldsymbol{q}\right) & = & \frac{1}{8}(\boldsymbol{\mu_{p}}-\boldsymbol{\mu_{q}})^{T}\boldsymbol{\Sigma}^{-1}(\boldsymbol{\mu_{p}}-\boldsymbol{\mu_{q}})+\frac{1}{2}\ln\,\left(\frac{\det\boldsymbol{\Sigma}}{\sqrt{\det\boldsymbol{\Sigma_{p}}\,\det\boldsymbol{\Sigma_{q}}}}\right)\\
 &  & +\frac{1}{2}\ln\left[\frac{1}{\sqrt{(2\pi)^{k}\left(|\boldsymbol{\Sigma_{p}}|\right)}}\int_{\boldsymbol{a}}^{\boldsymbol{b}}\exp\left(-\frac{1}{2}\left({\mathbf{x}}-{\boldsymbol{\mu_{p}}}\right)^{\mathrm{T}}{\boldsymbol{\Sigma_{p}}}^{-1}\left({\mathbf{x}}-{\boldsymbol{\mu_{p}}}\right)\right)d\boldsymbol{x};\;\boldsymbol{x}\in\boldsymbol{R}_{\boldsymbol{a}\leq\boldsymbol{x}\leq\boldsymbol{b}}^{k}\right]\\
 &  & +\frac{1}{2}\ln\left[\frac{1}{\sqrt{(2\pi)^{k}\left(|\boldsymbol{\Sigma_{q}}|\right)}}\int_{\boldsymbol{c}}^{\boldsymbol{d}}\exp\left(-\frac{1}{2}\left({\mathbf{x}}-{\boldsymbol{\mu_{p}}}\right)^{\mathrm{T}}{\boldsymbol{\Sigma_{\boldsymbol{q}}}}^{-1}\left({\mathbf{x}}-{\boldsymbol{\mu_{\boldsymbol{q}}}}\right)\right)d\boldsymbol{x};\;\boldsymbol{x}\in\boldsymbol{R}_{\boldsymbol{\boldsymbol{c}}\leq\boldsymbol{x}\leq\boldsymbol{d}}^{k}\right]\\
 &  & -\ln\left[\frac{1}{\sqrt{(2\pi)^{k}\det\left({\boldsymbol{\Sigma_{p}}}{\boldsymbol{\Sigma}}^{-1}{\boldsymbol{\Sigma_{q}}}\right)}}\right.\\
 &  & \left.\int_{\boldsymbol{l}}^{\boldsymbol{u}}\exp\left(-\frac{1}{2}\left\{ \left({\mathbf{x}-\mathbf{m}}\right)^{\mathrm{T}}\left({\boldsymbol{\Sigma_{q}}}^{-1}\left[{\boldsymbol{\Sigma}}\right]{\boldsymbol{\Sigma_{p}}}^{-1}\right)\left({\mathbf{x}-\mathbf{m}}\right)\right\} \right)d\boldsymbol{x};\;\boldsymbol{x}\in\boldsymbol{R}_{\boldsymbol{\min\left(a,\boldsymbol{c}\right)}\leq\boldsymbol{x}\leq\boldsymbol{\min\left(b,d\right)}}^{k}\vphantom{\frac{\sqrt{(2\pi)^{k}\det\left({\boldsymbol{\Sigma_{p}}}^{-1}\right)}}{\sqrt{(2\pi)^{k}\det\left({\boldsymbol{\Sigma_{p}}}^{-1}\right)}}}\right]
\end{eqnarray*}
Here,
\[
\boldsymbol{p}\sim N\left(\boldsymbol{\mu_{p}},\,\boldsymbol{\Sigma_{p}},\,\boldsymbol{a},\,\boldsymbol{b}\right)
\]
\[
\boldsymbol{q}\sim N\left(\boldsymbol{\mu_{\boldsymbol{q}}},\,\boldsymbol{\Sigma_{\boldsymbol{q}}},\,\boldsymbol{c},\,\boldsymbol{d}\right)
\]
\[
\boldsymbol{u=\min\left(b,d\right)}\;;\;\boldsymbol{l=\min\left(a,c\right)}
\]
\[
{\mathbf{m}}=\left[\left({\boldsymbol{\mu_{p}}}^{\mathrm{T}}{\boldsymbol{\Sigma_{p}}}^{-1}+{\boldsymbol{\mu_{q}}}^{\mathrm{T}}{\boldsymbol{\Sigma_{q}}}^{-1}\right)\left({\boldsymbol{\Sigma_{p}}}^{-1}+{\boldsymbol{\Sigma_{q}}}^{-1}\right)^{-1}\right]^{\mathrm{T}}
\]
\[
\boldsymbol{\Sigma}=\frac{\boldsymbol{\Sigma_{p}}+\boldsymbol{\Sigma_{q}}}{2}
\]
\end{doublespace}
\end{prop}
\begin{proof}
\begin{doublespace}
Appendix \ref{subsec:Proof-of-Proposition: Truncated Multivariate Normal}
\end{doublespace}
\end{proof}
\begin{doublespace}
Again we see that,

\begin{equation}
\underset{\left(\boldsymbol{a},\boldsymbol{c}\right)\rightarrow-\infty;\left(\boldsymbol{b},\boldsymbol{d}\right)\rightarrow\infty}{\lim}D_{BC-TMN}\left(\boldsymbol{p},\boldsymbol{q}\right)=D_{BC-MN}\left(\boldsymbol{p},\boldsymbol{q}\right)
\end{equation}
Looking at conditions when $D_{BC-TMN}(\boldsymbol{p},\boldsymbol{q})\geq D_{BC-MN}(\boldsymbol{p},\boldsymbol{q})$
gives a similar condition as the univariate case,
\begin{align}
 & \frac{1}{2}\ln\left[\frac{1}{\sqrt{(2\pi)^{k}\left(|\boldsymbol{\Sigma_{p}}|\right)}}\int_{\boldsymbol{a}}^{\boldsymbol{b}}\exp\left(-\frac{1}{2}\left({\mathbf{x}}-{\boldsymbol{\mu_{p}}}\right)^{\mathrm{T}}{\boldsymbol{\Sigma_{p}}}^{-1}\left({\mathbf{x}}-{\boldsymbol{\mu_{p}}}\right)\right)d\boldsymbol{x};\;\boldsymbol{x}\in\boldsymbol{R}_{\boldsymbol{a}\leq\boldsymbol{x}\leq\boldsymbol{b}}^{k}\right]\\
 & +\frac{1}{2}\ln\left[\frac{1}{\sqrt{(2\pi)^{k}\left(|\boldsymbol{\Sigma_{q}}|\right)}}\int_{\boldsymbol{c}}^{\boldsymbol{d}}\exp\left(-\frac{1}{2}\left({\mathbf{x}}-{\boldsymbol{\mu_{p}}}\right)^{\mathrm{T}}{\boldsymbol{\Sigma_{\boldsymbol{q}}}}^{-1}\left({\mathbf{x}}-{\boldsymbol{\mu_{\boldsymbol{q}}}}\right)\right)d\boldsymbol{x};\;\boldsymbol{x}\in\boldsymbol{R}_{\boldsymbol{\boldsymbol{c}}\leq\boldsymbol{x}\leq\boldsymbol{d}}^{k}\right]\\
\geq & -\ln\left[\frac{1}{\sqrt{(2\pi)^{k}\det\left({\boldsymbol{\Sigma_{p}}}{\boldsymbol{\Sigma}}^{-1}{\boldsymbol{\Sigma_{q}}}\right)}}\right.\\
 & \left.\int_{\boldsymbol{l}}^{\boldsymbol{u}}\exp\left(-\frac{1}{2}\left\{ \left({\mathbf{x}-\mathbf{m}}\right)^{\mathrm{T}}\left({\boldsymbol{\Sigma_{q}}}^{-1}\left[{\boldsymbol{\Sigma}}\right]{\boldsymbol{\Sigma_{p}}}^{-1}\right)\left({\mathbf{x}-\mathbf{m}}\right)\right\} \right)d\boldsymbol{x};\;\boldsymbol{x}\in\boldsymbol{R}_{\boldsymbol{\min\left(a,\boldsymbol{c}\right)}\leq\boldsymbol{x}\leq\boldsymbol{\min\left(b,d\right)}}^{k}\vphantom{\frac{\sqrt{(2\pi)^{k}\det\left({\boldsymbol{\Sigma_{p}}}^{-1}\right)}}{\sqrt{(2\pi)^{k}\det\left({\boldsymbol{\Sigma_{p}}}^{-1}\right)}}}\right]
\end{align}

\end{doublespace}
\begin{doublespace}

\subsection{\label{subsec:Discrete-Multivariate-Distributi}Discrete Multivariate
Distribution}
\end{doublespace}

\begin{doublespace}
A practical approach would be to use discrete approximations for the
probability distributions. This is typically done by matching the
moments of the original and approximate distributions. Discrete approximations
of probability distributions typically are determined by dividing
the range of possible values or the range of cumulative probabilities
into a set of collectively exhaustive and mutually exclusive intervals.
Each interval is approximated by a value equal to its mean or median
and a probability equal to the chance that the true value will be
in the interval. 

A criterion for judging the accuracy of the discrete approximation
is that it preserve as many of the moments of the original distribution
as possible. (Keefer \& Bodily 1983; Smith 1993) provide comparisons
of commonly used discretization methods. (Miller \& Rice 1983) look
at numerical integration using gaussian quadrature as a means of improving
the approximation. This approach approximates the integral of the
product of a function $g\left(x\right)$ and a weighting function
$w\left(x\right)$ over the internal $\left(a,b\right)$ by evaluating
$g\left(x\right)$ at several values $x$ and computing a weighted
sum of the results.
\begin{equation}
\int_{a}^{b}g\left(x\right)w\left(x\right)dx=\sum_{i=1}^{N}w_{i}g\left(x_{i}\right)
\end{equation}

For the case of a discrete approximation of a probability distribution,
the density function $f\left(x\right)$ with support $\left(a,b\right)$
is associated with the weighting function and the probabilities $p_{i}$
with the weights $w_{i}$. We approximate $g(x)$ by a polynomial,
and choose $x_{i}$ and $p_{i}$ (or $w_{i}$) to provide an adequate
approximation for each term of the polynomial. This translates to
finding a set of values and probabilities such that,
\begin{equation}
\int_{a}^{b}x^{k}f\left(x\right)dx=\sum_{i=1}^{N}p_{i}x_{i}^{k}\qquad\text{for }k=0,1,2,\ldots
\end{equation}
A discrete approximation with $N$ probability-value pairs can match
the first $2N-1$ moments exactly by finding $p_{i}$ and $x_{i}$
that satisfy the below equations, which are solved using well known
techniques.
\begin{align}
p_{1}+p_{2}+\ldots+p_{N} & =\int_{a}^{b}x^{0}f\left(x\right)dx=1\\
p_{1}x_{1}+p_{2}x_{2}+\ldots+p_{N}x_{N} & =\int_{a}^{b}xf\left(x\right)dx\\
p_{1}x_{1}^{2}+p_{2}x_{2}^{2}+\ldots+p_{N}x_{N}^{2} & =\int_{a}^{b}x^{2}f\left(x\right)dx\\
\vdots & \vdots\\
p_{1}x_{1}^{2N-1}+p_{2}x_{2}^{2N-1}+\ldots+p_{N}x_{N}^{2N-1} & =\int_{a}^{b}x^{2N-1}f\left(x\right)dx
\end{align}

Among several alternatives to distribution approximations, (Wallace
1958),\textcolor{black}{{} restricts attention to finding asymptotic
expansions in which the errors of approximation approach zero as some
parameter, typically sample size, tends to infinity. Estimating the
parameters of the distributions (both the unchanged one and the one
with the reduced dimensions) by setting them as discrete multivariate
distributions and calculating the distance based on these estimated
distributions is a practical alternative to using the truncated normal
distributions.}
\end{doublespace}
\begin{doublespace}

\subsection{\label{subsec:Covariance-and-Distance}Covariance and Distance}
\end{doublespace}

\begin{doublespace}
We compare the covariance between two random variables and the Bhattacharyya
coefficient, assuming support over relevant portions of the real line
and that the density function is differentiable. Consider the covariance
and the distance of $Y=X$ and $Y=-X$. It is easily seen that the
distance between the distributions is $zero$ in both cases, but the
covariance is $one$ when they are exactly the same and $minus\;one$
in the other case. This shows that despite knowing the distance we
do not have full information about the co-movements of the two distributions.
Hence distance is not a substitute for covariance but rather a very
useful complement, since knowing the two will tell us the magnitudes
of how one variable will change if we know the other and the likelihood
of observing certain values from one distribution if have observed
values from the other one.

An interesting result concerning the covariance is given by Stein's
lemma (Stein 1973; 1981; Rubinstein 1973; 1976) which connects the
covariance between two random variables that are joint normally distributed
and the covariance between any function of one of the random variables
and the other. If $X$ and $Y$ have a joint distribution that is
bivariate normal and if $c\left(\cdotp\right)$ is a continuously
differentiable function of $X$ then, 
\begin{equation}
\text{Cov}\left[c\left(X\right),Y\right]=E\left[c'\left(X\right)\right]\text{Cov}\left[X,Y\right]
\end{equation}
It easily follows that, if $X$ is a normal random variable with mean
$\mu$ and variance $\sigma^{2}$ and if $c$ is a differentiable
function with derivative $c'$ such that $E\left[c'\left(X\right)\right]<\infty$,
we then have, 
\begin{equation}
E\left[\left(X-\mu\right)c\left(X\right)\right]=\sigma^{2}E[c'\left(X\right)]
\end{equation}

(Losq \& Chateau 1982) extend the result to the case when $c\left(\cdotp\right)$
is a function of $n$ random variables. (Wei \& Lee 1988) extend this
further to the case where both variables are functions of multivariate
normal random variables. (Siegel 1993) derives a remarkable formula
for the covariance of $X_{1}$ with the minimum of the multivariate
normal vector $\left(X_{1},\ldots,X_{n}\right)$. 
\begin{equation}
\text{Cov}[X_{1},\min\left(X_{1},\ldots,X_{n}\right)]=\sum_{i=1}^{n}\text{Cov}\left(X_{1},X_{i}\right)\text{Pr}\left[X_{i}=\min\left(X_{1},\ldots,X_{n}\right)\right]
\end{equation}
(Liu 1994) uses a multivariate version of Stein's identity to devise
the below more generalized formula, where $X_{\left(i\right)}$ is
the $i-$th largest among $X_{1},\ldots,X_{n}$,
\begin{equation}
\text{Cov}[X_{1},X_{\left(i\right)}]=\sum_{j=1}^{n}\text{Cov}\left(X_{1},X_{j}\right)\text{Pr}\left[X_{j}=X_{(i)}\right]
\end{equation}
(Kattumannil 2009) extends the Stein lemma by relaxing the requirement
of normality. If the continuous random variable $X$ has support over
the interval $\left(a,b\right)$, that is $-\infty\leq a<X<b\leq\infty$,
with mean $E(X)=\mu$, variance $Var(X)=\sigma^{2}$, density function
$f_{X}\left(t\right)$ and cumulative distribution $F_{X}\left(t\right)$.
Let $h$ be a real valued function such that $E\left[h\left(X\right)\right]=\mu$
and $E\left[h^{2}\left(X\right)\right]<\infty$. Suppose $f_{X}\left(t\right)$
is a differentiable function with derivative $f_{X}'\left(t\right)$
and there exists a non-vanishing (non-zero over the support) function
$g\left(t\right)$ such that,
\begin{equation}
\frac{f_{X}'\left(t\right)}{f_{X}\left(t\right)}=-\frac{g'\left(t\right)}{g\left(t\right)}+\frac{\left[\mu-h\left(t\right)\right]}{g\left(t\right)}\quad,\quad t\in(a,b)
\end{equation}
\begin{equation}
\Rightarrow\quad f_{X}'\left(t\right)g\left(t\right)+g'\left(t\right)f_{X}\left(t\right)=\left[\mu-h\left(t\right)\right]f_{X}\left(t\right)
\end{equation}
\begin{equation}
\Rightarrow\quad\frac{\partial f_{X}\left(t\right)g\left(t\right)}{\partial t}=\left[\mu-h\left(t\right)\right]f_{X}\left(t\right)
\end{equation}
Integrating with respect to $t$ from $r$ to $b$ and assuming $\underset{t\rightarrow b}{\lim}\;g\left(t\right)f_{X}\left(t\right)=0$
shows that for a given $h\left(t\right)$ the value of $g\left(t\right)$
uniquely determines the distribution of $X$.
\begin{equation}
\left|f_{X}\left(t\right)g\left(t\right)\right|_{r}^{b}=\int_{r}^{b}\left[\mu-h\left(t\right)\right]f_{X}\left(t\right)dt
\end{equation}
\begin{equation}
\Rightarrow\quad f_{X}\left(r\right)=\frac{1}{g\left(r\right)}\left\{ \int_{r}^{b}\left[h\left(t\right)-\mu\right]f_{X}\left(t\right)dt\right\} 
\end{equation}
Similarly, integrating with respect to $t$ from $a$ to $r$ and
assuming $\underset{r\rightarrow a}{\lim}\;g\left(t\right)f_{X}\left(t\right)=0$,
\begin{equation}
f_{X}\left(r\right)=\frac{1}{g\left(r\right)}\left\{ \int_{r}^{a}\left[\mu-h\left(t\right)\right]f_{X}\left(t\right)dt\right\} 
\end{equation}

For any absolutely continuous function, $c\left(t\right)$, with derivative
$c'\left(t\right)$ satisfying $E\left[c\left(X\right)h\left(X\right)\right]<\infty,E\left[c^{2}\left(X\right)\right]<\infty,E\left[g\left(X\right)c'\left(X\right)\right]<\infty$,
and provided $\underset{t\rightarrow b}{\lim}\;g\left(t\right)f_{X}\left(t\right)=0$
we have the following identity,
\begin{equation}
E\left[c\left(X\right)\left\{ h\left(X\right)\text{\textminus}\mu\right\} \right]=E\left[c'\left(X\right)g\left(X\right)\right]
\end{equation}
 (Teerapabolarn 2013) is a further extension of this normality relaxation
to discrete distributions. Another useful reference, (Kimeldorf \&
Sampson 1973), provides a class of inequalities between the covariance
of two random variables and the variance of a function of the two
random variables. Let $\boldsymbol{A}$ be an open convex subset of
the plane $\boldsymbol{R}^{2}$ and $\boldsymbol{\Upsilon}$ be the
class of all pairs $\left(X,Y\right)$ of real random variables having
finite variances and $P\left\{ \left(X,Y\right)\in\boldsymbol{A}\right\} =1$.
Assume that $g$ is a function with continuous partial first derivatives
in the domain $\boldsymbol{A}$ characterized by the following functional
inequality and an equivalent partial differential inequality,
\begin{equation}
\forall\left(X,Y\right)\in\boldsymbol{\Upsilon};\quad\left(x_{1},y_{1}\right),\left(x_{2},y_{2}\right)\in\boldsymbol{A}\Longleftrightarrow\left(x_{2}-x_{1}\right)\left(y_{2}-y_{1}\right)\leq\left[g\left(x_{2},y_{2}\right)-g\left(x_{1},y_{1}\right)\right]^{2}
\end{equation}
\begin{equation}
\left(\frac{\partial g}{\partial y}\right)\left(\frac{\partial g}{\partial y}\right)\geq\frac{1}{4}
\end{equation}
If the above equivalent necessary and sufficient conditions are satisfied
we have,
\begin{equation}
\text{cov}\left(X,Y\right)\leq\text{var }g\left(X,Y\right)
\end{equation}
We now derive the following general extension to Stein's lemma, that
does not require normality, involving the covariance between a random
variable and a function of another random variable. 
\end{doublespace}
\begin{prop}
\begin{doublespace}
\label{prop:Stein_Lemma_Generic}If the continuous random variables
$X$ and $Y$ have support over the interval $\left(a,b\right)$,
that is $-\infty\leq a<X,Y<b\leq\infty$, with means $E(X)=\mu_{X};E(Y)=\mu_{Y}$,
variances $Var(X)=\sigma_{X}^{2};Var(Y)=\sigma_{Y}^{2}$, density
functions $f_{X}\left(t\right);f_{Y}\left(u\right)$, cumulative distributions
$F_{X}\left(t\right);F_{Y}\left(u\right)$ and joint density function,
$f_{XY}\left(t,u\right)$. Let $h\left(u\right)$ be a real valued
function such that $E\left[h\left(Y\right)\right]=\mu_{Y}$ and $E\left[h^{2}\left(Y\right)\right]<\infty$.
Suppose $f_{XY}\left(t,u\right)$ is a differentiable function with
derivative $f_{XY}'\left(t,u\right)$ with respect to $t$ and there
exists a non-vanishing (non-zero over the support) function $g\left(t,u\right)$
such that,

\[
\frac{f_{XY}'\left(t,u\right)}{f_{XY}\left(t,u\right)}=-\frac{g'\left(t,u\right)}{g\left(t,u\right)}+\frac{\left[\mu_{Y}-h\left(u\right)\right]}{g\left(t,u\right)}\quad,\quad t,u\in(a,b)
\]
 Assuming \textup{$\underset{t\rightarrow b}{\lim}\;g\left(t,u\right)f_{XY}\left(t,u\right)=0$}
shows that for a given $h\left(u\right)$ the value of $g\left(t,u\right)$
uniquely determines the joint distribution of $X$ and $Y$.
\[
f_{XY}\left(r,u\right)\:g\left(r,u\right)=\int_{r}^{b}\left[h\left(u\right)-\mu_{Y}\right]\:f_{XY}\left(t,u\right)\:dt
\]
\[
\int_{a}^{b}f_{XY}\left(r,u\right)\:g\left(r,u\right)\:du=\int_{a}^{b}\int_{r}^{b}\left[h\left(u\right)-\mu_{Y}\right]\:f_{XY}\left(t,u\right)\:dt\:du
\]
Similarly, assuming \textup{$\underset{t\rightarrow a}{\lim}\;g\left(t,u\right)f_{XY}\left(t,u\right)=0$
gives,}
\[
f_{XY}\left(r,u\right)\:g\left(r,u\right)=\int_{a}^{r}\left[\mu_{Y}-h\left(u\right)\right]\:f_{XY}\left(t,u\right)\:dt
\]
\[
\int_{a}^{b}f_{XY}\left(r,u\right)\:g\left(r,u\right)\:du=\int_{a}^{b}\int_{a}^{r}\left[\mu_{Y}-h\left(u\right)\right]\:f_{XY}\left(t,u\right)\:dt\:du
\]
For any absolutely continuous function, $c\left(t\right)$, with derivative
$c'\left(t\right)$ satisfying $E\left[c\left(X\right)h\left(Y\right)\right]<\infty,E\left[c^{2}\left(X\right)\right]<\infty,E\left[g\left(X,Y\right)c'\left(X\right)\right]<\infty$,
and provided \textup{$\underset{t\rightarrow b}{\lim}\;g\left(t,u\right)f_{XY}\left(t,u\right)=0$,
}we have the following identity,
\[
\text{Cov}\left[c\left(X\right),h\left(Y\right)\right]=E\left[c'\left(X\right)\:g\left(X,Y\right)\right]
\]
\end{doublespace}
\end{prop}
\begin{proof}
\begin{doublespace}
Appendix \ref{subsec:Proof-of-Proposition: Stein Lemma Generic}.
\end{doublespace}
\end{proof}
\begin{cor}
\begin{doublespace}
\label{cor:Stein_Lemma_Generic}It is easily seen that , 
\[
\text{Cov}\left[c\left(X\right),Y\right]=E\left[c'\left(X\right)\:g\left(X,Y\right)\right]
\]
\end{doublespace}
\end{cor}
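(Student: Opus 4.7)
The plan is to obtain Corollary \ref{cor:Stein_Lemma_Generic} as an immediate specialization of Proposition \ref{prop:Stein_Lemma_Generic}. The natural choice is to take the auxiliary function $h$ in the proposition to be the identity, $h(u) = u$. Under this choice the moment conditions on $h$ collapse to assumptions already in force: $E[h(Y)] = E[Y] = \mu_Y$ is automatic, and $E[h^2(Y)] = E[Y^2] < \infty$ follows from the finiteness of $\text{Var}(Y)$ stipulated in the hypotheses of Proposition \ref{prop:Stein_Lemma_Generic}. Likewise, the integrability condition $E[c(X)\,h(Y)] < \infty$ becomes $E[c(X)\,Y] < \infty$, while $E[c^2(X)] < \infty$ and $E[g(X,Y)\,c'(X)] < \infty$ are unchanged.

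Second, I would verify that the relation
\[
\frac{f_{XY}'(t,u)}{f_{XY}(t,u)} = -\frac{g'(t,u)}{g(t,u)} + \frac{[\mu_Y - h(u)]}{g(t,u)}
\]
which implicitly defines $g$ in terms of $f_{XY}$, $h$, and $\mu_Y$, remains structurally intact once $h(u) = u$ is substituted; it simply becomes a specific differential relation between $g$ and the joint density. Because the proposition is stated conditional on the existence of such a non-vanishing $g$, the corollary inherits this hypothesis verbatim and no new regularity argument is required.

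With these verifications in hand, the identity
\[
\text{Cov}[c(X), h(Y)] = E[c'(X)\,g(X,Y)]
\]
supplied by Proposition \ref{prop:Stein_Lemma_Generic} reduces directly to
\[
\text{Cov}[c(X), Y] = E[c'(X)\,g(X,Y)],
\]
which is precisely the claim of the corollary. I do not anticipate any genuine obstacle here, since the derivation is a one-line substitution; the only point worth flagging is that the particular $g$ implicitly determined by the identity depends on the joint law of $(X,Y)$, so in practice applying the corollary to a concrete pair of variables will still require solving the defining relation for $g$, just as in the parent proposition.
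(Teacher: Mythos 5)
Your proposal is correct and follows exactly the same route as the paper, which simply substitutes $h(u)=u$ into Proposition \ref{prop:Stein_Lemma_Generic} and simplifies. Your additional verification that the moment hypotheses on $h$ collapse to conditions already assumed is a welcome bit of explicitness, but it does not change the substance of the one-line specialization.
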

\begin{proof}
\begin{doublespace}
Substituting $h\left(u\right)=u$ in the proof of the proposition
\ref{prop:Stein_Lemma_Generic} and simplifying gives this result.
\end{doublespace}
\end{proof}
\begin{doublespace}
Building upon some of the above results, primarily the extension to
Stein's lemma in corollary \ref{cor:Stein_Lemma_Generic}, we formulate
the following connection between distance (Bhattacharyya coefficient)
and covariance.
\end{doublespace}
\begin{prop}
\begin{doublespace}
\label{prop:Distance-Covariance-Relationship}The following equations
govern the relationship between the Bhattacharyya coefficient, $\rho\left(f_{X},f_{Y}\right)$,
and the covariance between any two distributions with joint density
function, $f_{XY}\left(t,u\right)$, means, $\mu_{X}\text{ and }\mu_{Y}$
and density functions $f_{X}\left(t\right)\text{ and }f_{Y}\left(t\right)$,
\[
\text{Cov}\left[c\left(X\right),Y\right]=\text{Cov}\left(X,Y\right)-E\left[\sqrt{\frac{f_{Y}\left(t\right)}{f_{X}\left(t\right)}}Y\right]+\mu_{Y}\:\rho\left(f_{X},f_{Y}\right)
\]
\[
\text{Cov}\left(X,Y\right)+\mu_{Y}\:\rho\left(f_{X},f_{Y}\right)=E\left[c'\left(X\right)\:g\left(X,Y\right)\right]+E\left[\sqrt{\frac{f_{Y}\left(t\right)}{f_{X}\left(t\right)}}Y\right]
\]
Here, 
\[
c\left(t\right)=t-\sqrt{\frac{f_{Y}\left(t\right)}{f_{X}\left(t\right)}}
\]
and $g\left(t,u\right)$ is a non-vanishing function such that,

\[
\frac{f_{XY}'\left(t,u\right)}{f_{XY}\left(t,u\right)}=-\frac{g'\left(t,u\right)}{g\left(t,u\right)}+\frac{\left[\mu_{Y}-u\right]}{g\left(t,u\right)}\quad,\quad t,u\in(a,b)
\]
\end{doublespace}
\end{prop}
\begin{proof}
\begin{doublespace}
Appendix \ref{subsec:Proof-of-Proposition: Distance_Covariance_Relationship}.
\end{doublespace}
\end{proof}
\begin{cor}
\begin{doublespace}
\label{cor:Distance_Covariance_Relationship}It is easily seen that
, 
\[
\text{Cov}\left[c\left(X\right),h\left(Y\right)\right]=\text{Cov}\left(X,h\left(Y\right)\right)-E\left[\sqrt{\frac{f_{Y}\left(t\right)}{f_{X}\left(t\right)}}h\left(Y\right)\right]+\mu_{Y}\rho\left(f_{X},f_{Y}\right)
\]
\[
\text{Cov}\left(X,h\left(Y\right)\right)+\mu_{Y}\rho\left(f_{X},f_{Y}\right)=E\left[c'\left(X\right),g\left(X,Y\right)\right]+E\left[\sqrt{\frac{f_{Y}\left(t\right)}{f_{X}\left(t\right)}}h\left(Y\right)\right]
\]
\end{doublespace}
\end{cor}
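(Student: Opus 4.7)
The plan is to repeat the argument the author will have used for Proposition \ref{prop:Distance-Covariance-Relationship} almost verbatim, but starting from the full generic Stein identity in Proposition \ref{prop:Stein_Lemma_Generic} (which already contains $h(Y)$) rather than from its $h(u)=u$ specialization Corollary \ref{cor:Stein_Lemma_Generic}. Concretely, I will take as given the identity
\[
\text{Cov}\left[c(X),h(Y)\right]=E\left[c'(X)\,g(X,Y)\right],
\]
where $g(t,u)$ is chosen, as in the proposition's hypothesis, so that
\[
\frac{f_{XY}'(t,u)}{f_{XY}(t,u)}=-\frac{g'(t,u)}{g(t,u)}+\frac{[\mu_Y-h(u)]}{g(t,u)}.
\]
The required integrability hypotheses $E[c(X)h(Y)]<\infty$, $E[c^2(X)]<\infty$, $E[g(X,Y)c'(X)]<\infty$ and the boundary condition $\lim_{t\to b} g(t,u)f_{XY}(t,u)=0$ are simply inherited.

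Next, I will plug in the specific choice $c(t)=t-\sqrt{f_Y(t)/f_X(t)}$ used in Proposition \ref{prop:Distance-Covariance-Relationship} and expand both sides of the covariance. Writing
\[
\text{Cov}[c(X),h(Y)]=E[c(X)h(Y)]-E[c(X)]\,E[h(Y)]
\]
and using the hypothesis $E[h(Y)]=\mu_Y$ from Proposition \ref{prop:Stein_Lemma_Generic}, the first term splits as
\[
E[c(X)h(Y)]=E[X\,h(Y)]-E\!\left[\sqrt{\tfrac{f_Y(X)}{f_X(X)}}\,h(Y)\right],
\]
while the key observation for the second term is that
\[
E\!\left[\sqrt{\tfrac{f_Y(X)}{f_X(X)}}\right]=\int\sqrt{\tfrac{f_Y(t)}{f_X(t)}}\,f_X(t)\,dt=\int\sqrt{f_X(t)f_Y(t)}\,dt=\rho(f_X,f_Y),
\]
so that $E[c(X)]=\mu_X-\rho(f_X,f_Y)$. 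Subtracting $\mu_X\mu_Y$ from $E[X\,h(Y)]$ collapses to $\text{Cov}(X,h(Y))$, and collecting everything yields exactly the first displayed equation of the corollary.

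The second displayed equation then follows by equating my expanded expression for $\text{Cov}[c(X),h(Y)]$ with the right-hand side $E[c'(X)\,g(X,Y)]$ coming from Proposition \ref{prop:Stein_Lemma_Generic} and rearranging to isolate $\text{Cov}(X,h(Y))+\mu_Y\rho(f_X,f_Y)$. No new analytic machinery is needed beyond what Proposition \ref{prop:Stein_Lemma_Generic} and the definition of the Bhattacharyya coefficient already supply.

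There is no serious obstacle here; the work is essentially algebraic substitution. The only mildly delicate point is verifying that the three finiteness conditions $E[c(X)h(Y)]<\infty$, $E[c^2(X)]<\infty$, $E[g(X,Y)c'(X)]<\infty$ continue to hold for the specific $c(t)=t-\sqrt{f_Y(t)/f_X(t)}$ when $Y$ is replaced by $h(Y)$; since these were assumed in Proposition \ref{prop:Distance-Covariance-Relationship} for the case $h(u)=u$, I will simply carry them forward as standing hypotheses together with $E[h^2(Y)]<\infty$, consistent with the assumptions already imposed in Proposition \ref{prop:Stein_Lemma_Generic}.
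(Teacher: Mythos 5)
Your proposal is correct and follows essentially the same route as the paper: the paper's own proof of this corollary is literally the instruction to rerun the proof of Proposition \ref{prop:Distance-Covariance-Relationship} with $h(u)$ in place of $u$, which is exactly what you do — expand $\text{Cov}[c(X),h(Y)]$ with $c(t)=t-\sqrt{f_{Y}(t)/f_{X}(t)}$, use $E[h(Y)]=\mu_{Y}$ and $E\bigl[\sqrt{f_{Y}(X)/f_{X}(X)}\bigr]=\rho(f_{X},f_{Y})$, and then equate with the generic Stein identity of Proposition \ref{prop:Stein_Lemma_Generic} to get the second display. Your added care about which differential equation $g(t,u)$ must satisfy (the one with $\mu_{Y}-h(u)$) and about carrying forward the integrability hypotheses is consistent with, and slightly more explicit than, the paper's treatment.
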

\begin{proof}
\begin{doublespace}
Using $h\left(u\right)$ instead of $u$ in the proof of the proposition
\ref{prop:Distance-Covariance-Relationship} and simplifying gives
this result.
\end{doublespace}
\end{proof}
\begin{doublespace}

\section{\label{sec:Marketstructure,-Microstructure-}Market-structure, Microstructure
and Other Applications}
\end{doublespace}

\begin{doublespace}
Our methodologies can help in the comparison of economic systems that
generate prices, quantities and aid in the analysis of shopping patterns
and understanding consumer behavior. The systems could be goods transacted
at different shopping malls or trade flows across entire countries.
Study of traffic congestion, road accidents and other fatalities across
two regions could be performed to get an idea of similarities and
seek common solutions where such simplifications might be applicable.
The diversity of applications is limited only by our imagination since
these techniques can be used in any system where data is generated
and is studied to obtain a better understanding of the system.

We point below one asset pricing and one biological application to
show the limitless possibilities such a comparison affords. The empirical
illustration below in section \ref{subsec:Comparison-of-Security}
is another example of how the techniques developed here can be immediately
applicable for microstructure studies. We provide a definition of
microstructure that justifies why information collection and dimension
reduction can be helpful procedures in this space.
\end{doublespace}
\begin{defn}
\begin{doublespace}
\textbf{\textit{Market Microstructure is the investigation of the
process and protocols that govern the exchange of assets with the
objective of reducing frictions that can impede the transfer.}} 

In financial markets, where there is an abundance of recorded information,
this translates to the study of the dynamic relationships between
observed variables, such as price, volume and spread, and hidden constituents,
such as transaction costs and volatility, that hold sway over the
efficient functioning of the system. The degree to which different
markets or sub groups of securities have different measures of their
corresponding distributions tells us the extent to which they are
different. This can aid investors looking for diversification or looking
for more of the same thing. 
\end{doublespace}
\end{defn}
\begin{doublespace}

\subsection{\label{subsec:Asset-Pricing-Application}Asset Pricing Application}
\end{doublespace}

\begin{doublespace}
The price $p_{t}$ of an asset today that gives a payoff $x_{t+1}$
in the future (next time period) is given by the fundamental asset
pricing equation $p_{t}=E\left(m_{t+1}x_{t+1}\right)$ that makes
use of a stochastic discount factor $m_{t+1}$. Multi-period versions
are easily derived and hence we drop the time subscripts below. (Lucas
1978; Hansen \& Richard 1987; Hansen \& Jagannathan 1991; Cochrane
2009) are classic references. (Kashyap 2018) reviews the asset pricing
literature with emphasis on the equity risk premium puzzle (Mehra
\& Prescott 1985; Mehra 1988), which is that the return on equities
has far exceeded the average return on short-term risk-free debt and
cannot be explained by conventional representative-agent consumption
based equilibrium models. Some solution attempts and related debates
are (Rietz 1988; Mehra \& Prescott 1988; Weil 1989; Constantinides
\& Duffie 1996; Campbell \& Cochrane 1999; Bansal \& Yaron 2004; Barro
2006; Weitzman 2007). The puzzle is based on the assumption that consumption
growth is log-normal. We can relax this assumption and derive the
below relationships that can aid the many empirical attempts on asset
pricing that seek to find proxies for the discount factor or to marginal
utility growth, employing variables $f$ of the form, $m=c\left(f\right)$.
\end{doublespace}
\begin{prop}
\begin{doublespace}
\label{prop:The-asset-pricing}The asset pricing equation $p=E\left(mx\right)$
can be written as,
\end{doublespace}
\end{prop}
\begin{doublespace}
\[
p=E\left(mx\right)\equiv E\left[c\left(f\right)x\right]\quad\text{Setting},\;m=c\left(f\right)
\]
\[
p=E\left[c\left(f\right)\right]E\left(x\right)+\text{Cov}\left[c\left(f\right),x\right]
\]
\[
p=E\left[c\left(f\right)\right]E\left(x\right)+E\left[c'\left(f\right)\:g\left(f,x\right)\right]
\]

\end{doublespace}
\begin{proof}
\begin{doublespace}
Using proposition \ref{prop:Stein_Lemma_Generic} and the definitions
of $c\left(t\right)$ and $g\left(t,u\right)$ therein, gives this
result.
\end{doublespace}
\end{proof}
\begin{cor}
\begin{doublespace}
\label{cor:Asset Pricing}With a further restriction on $m\equiv c\left(f\right)=f-\sqrt{\frac{f_{x}\left(f\right)}{f_{f}\left(f\right)}}$,
where, expected payoff is $\mu_{x}$ with density function $f_{x}\left(f\right)$
and the density function of the factor $f$ is $f_{f}\left(f\right)$,
the asset pricing equation becomes, 
\[
p=E\left[c\left(f\right)\right]E\left(x\right)+\text{Cov}\left(f,x\right)+\mu_{x}\:\rho\left(f_{f},f_{x}\right)-E\left[\sqrt{\frac{f_{x}\left(f\right)}{f_{f}\left(f\right)}}x\right]
\]
\end{doublespace}
\end{cor}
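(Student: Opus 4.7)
The plan is to reduce the corollary directly to Proposition \ref{prop:The-asset-pricing} combined with Proposition \ref{prop:Distance-Covariance-Relationship}, since the specific form prescribed for $c$ is precisely the one engineered to make the Bhattacharyya coefficient appear inside the covariance. First I would invoke Proposition \ref{prop:The-asset-pricing} to write
\[
p = E\left[c(f)\right] E(x) + \text{Cov}\left[c(f), x\right],
\]
which isolates all the new content in the covariance term and leaves nothing else to do on the pricing side.

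Next I would evaluate $\text{Cov}\left[c(f), x\right]$ under the prescribed choice $c(f) = f - \sqrt{f_x(f)/f_f(f)}$. This exactly matches the hypothesis of Proposition \ref{prop:Distance-Covariance-Relationship} with the identifications $X \leftrightarrow f$, $Y \leftrightarrow x$, densities $f_X \leftrightarrow f_f$, $f_Y \leftrightarrow f_x$, and mean $\mu_Y \leftrightarrow \mu_x$, so a direct application yields
\[
\text{Cov}\left[c(f), x\right] = \text{Cov}(f, x) - E\left[\sqrt{\frac{f_x(f)}{f_f(f)}}\, x\right] + \mu_x\, \rho(f_f, f_x).
\]
Substituting back into the first display gives the claimed identity. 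If one prefers an independent check rather than citing Proposition \ref{prop:Distance-Covariance-Relationship}, the key one-line observation is that, with $f$ distributed according to $f_f$,
\[
E\left[\sqrt{\frac{f_x(f)}{f_f(f)}}\right] = \int \sqrt{\frac{f_x(t)}{f_f(t)}}\, f_f(t)\, dt = \int \sqrt{f_x(t) f_f(t)}\, dt = \rho(f_f, f_x),
\]
after which linearity of covariance, applied to $c(f) = f - \sqrt{f_x(f)/f_f(f)}$, finishes the job.

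The main obstacle is not the algebra, which collapses to a single substitution, but confirming the regularity conditions under which Proposition \ref{prop:Distance-Covariance-Relationship} (and through it Proposition \ref{prop:Stein_Lemma_Generic}) applies to the joint distribution of $(f,x)$: the existence of a non-vanishing function $g(t,u)$ satisfying the differential identity of Proposition \ref{prop:Stein_Lemma_Generic} for the joint density $f_{fx}(t,u)$, the boundary vanishing $\lim_{t \to b} g(t,u) f_{fx}(t,u) = 0$, and the integrability side-conditions $E\left[c(f)\, x\right] < \infty$, $E\left[c^2(f)\right] < \infty$, and $E\left[g(f,x)\, c'(f)\right] < \infty$ for the particular $c$ chosen here. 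These must be imposed as standing hypotheses or verified case by case; once they are in hand, the identity follows by the two-line reduction above.
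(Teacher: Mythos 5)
Your proposal is correct and follows exactly the paper's own route: the paper proves the corollary by noting it follows immediately from Proposition \ref{prop:Distance-Covariance-Relationship} applied to $\text{Cov}\left[c\left(f\right),x\right]$ inside the pricing identity of Proposition \ref{prop:The-asset-pricing}, which is precisely your two-line reduction. Your added verification that $E\left[\sqrt{f_{x}\left(f\right)/f_{f}\left(f\right)}\right]=\rho\left(f_{f},f_{x}\right)$ and your remarks on the regularity conditions inherited from Proposition \ref{prop:Stein_Lemma_Generic} are sound elaborations of details the paper leaves implicit.
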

\begin{proof}
\begin{doublespace}
Follows immediately from proposition \ref{prop:Distance-Covariance-Relationship}.
\end{doublespace}
\end{proof}
\begin{doublespace}

\subsection{\label{subsec:Biological-Application}Biological Application}
\end{doublespace}

\begin{doublespace}
The field of medical sciences is not immune from the increasing collection
of information (Berner \& Moss 2005) and the side effects of growing
research and literature (Huth 1989). The challenges of managing growing
research output are significantly more complex than expanding amounts
of data. This might require advances in artificial intelligence and
a better understanding of the giant leap the human mind makes at times
from sorting through information, summarizing it as knowledge and
transcending to a state of condensed wisdom. We leave this problem
aside for now; but tackle the slightly more surmountable problem of
voluminous data sources.

Suppose we have a series of observations of different properties,
(such as heart rate or blood sugar level and others as well), across
different days from different people from different geographical regions.
The goal would then be to identify people from which regions have
a more similar constitution. The number of days on which the observations
are made is the same for all regions; but there could be different
number of participants in each region. 

The data gathering could be the result of tests conducted at different
testing facilities on a certain number of days and different number
of people might show up at each facility on the different days the
tests are done. The result of the data collection will be as follows:
We have a matrix, $M_{i,j}$, for each region $i$ and property $j$,
with $T$ observations across time (rows) and $N_{i,j}$ (columns)
for number of participants from region $i$ and property $j$. Here,
we consider all properties together as representing a region. The
dimension of the data matrix is given by $\text{Dim}(M_{i,j})=(T,N_{i,j})$.
We could have two cases depending on the data collection results.
\end{doublespace}
\begin{enumerate}
\begin{doublespace}
\item $\ensuremath{N_{i,j}=N_{i,k}}\ensuremath{,\forall j,k}$ That is the
number of participants in a region are the same for all properties
being measured. This would be the simple case.
\item $\ensuremath{N_{i,j}\neq N_{i,k}}\ensuremath{,\forall j,k}$ That
is the number of participants in a region could be different for some
of the properties being measured.
\end{doublespace}
\end{enumerate}
\begin{doublespace}
The simple scenario corresponding to case 1) of the data collection
would be when we have a matrix, $M_{i,j}$, for each region $i$ and
property $j$, with $T$ observations across time (rows) and $N_{i}$
(columns) for number of participants from that region. If we consider
each property separately, we can compute the Bhattacharyya Distance
across the various matrices $M_{i,j}$ (in pairs) separately for each
property $j$ across the various regions or $i$'s. The multinomial
approach outlined earlier can be used to calculate one coefficient
for each region by combining all the properties. For the second case,
we can compute the Bhattacharyya Distance across the various matrices
$M_{i,j}$ for all the properties combined (or separately for each
property) for a region using the multinomial approach after suitable
dimension transformations. 

To illustrate this, suppose we have two matrices, $A$ and $B$, representing
two different probability distributions, with dimensions, $m\times n$
and $k\times n$, respectively. Here, $m$ and $k$ denote the number
of variables captured by the two matrices $A$ and $B$. In general,
$m$ and $k$ are not equal. $n$ is the number of observations, which
is the same across the two distributions. We can calculate the Bhattacharyya
distance or another measure of similarity or dissimilarity between
$A$ and $B$ by reducing the dimensions of the larger matrix, (say
$m>k$) to the smaller one so that we have $\underline{A}$ and $B$,
with the same dimension, $k\times n$. Here, $\underline{A}=CA,\;\text{Dim}\left(C\right)=k\times m$.
Each entry in $C$ is sampled i.i.d from a Gaussian $N\left(0,\frac{1}{k}\right)$
distribution.
\end{doublespace}
\begin{doublespace}

\section{\label{sec:Empirical-Illustrations}Empirical Illustrations}
\end{doublespace}
\begin{doublespace}

\subsection{\label{subsec:Comparison-of-Security}Comparison of Security Prices
across Markets}
\end{doublespace}

\begin{doublespace}
We start with a simple example illustrating how this measure could
be used to compare different countries based on the prices of all
equity securities traded in that market. Our data sample contains
closing prices for most of the securities from six different markets
from Jan 01, 2014 to May 28, 2014 (Figure \ref{fig:Markets-and-Tickers}).
Singapore with 566 securities is the market with the least number
of traded securities. Even if we reduce the dimension of all the other
markets with more number of securities, for a proper comparison of
these markets we would need more than two years worth of data. Hence
as a simplification, we first reduce the dimension of the matrix holding
the close prices for each market using PCA reduction, so that the
number of tickers retained would be comparable to the number of days
for which we have data. The results of such a comparison are shown
in Figure \ref{fig:PCA-Results}. Some implementation pointers using
statistical packages such as R are given in section \ref{subsec:Implementation-Pointers}.

We report the full matrix and not just the upper or lower matrix since
the PCA reduction we do takes the first country reduces the dimensions
up-to a certain number of significant digits and then reduces the
dimension of the second country to match the number of dimensions
of the first country. For example, this would mean that comparing
AUS and SGP is not exactly the same as comparing SGP and AUS. As a
safety step before calculating the distance, which requires the same
dimensions for the structures holding data for the two entities being
compared, we could perform dimension reduction using JL Lemma if the
dimensions of the two countries differs after the PCA reduction. 
\end{doublespace}

We repeat the calculations for different number of significant digits
of the PCA reduction. \textbf{\textit{This shows the fine granularity
of the results that our distance comparison produces and highlights
the issue that with PCA reduction there is loss of information, since
with different number of significant digits employed in the PCA reduction,
we get the result that different markets are similar.}}\textit{ }For
example, in Figure \ref{fig:PCA-Results}, AUS - SGP are the most
similar markets when two significant digits are used and AUS - HKG
are the most similar with six significant digits.

\begin{doublespace}
We illustrate another example, where we compare a randomly selected
sub universe of securities in each market, so that the number of tickers
retained would be comparable to the number of days for which we have
data. The results are shown in Figure \ref{fig:Results-with-Randomly}.
The left table (Figure \ref{fig:PCA-Dimension-Reduction}) is for
PCA reduction on a randomly chosen sub universe and the right table
(Figure \ref{fig:JL-Lemma-Dimenion}) is for dimension reduction using
JL Lemma for the same sub universe. We report the full matrix for
the same reason as explained earlier and perform multiple iterations
when reducing the dimension using the JL Lemma. A key observation
is that the magnitude of the distances are very different when using
PCA reduction and when using dimension reduction, due to the loss
of information that comes with the PCA technique. It is apparent that
using dimension reduction via the JL Lemma produces consistent results,
since the same pairs of markets are seen to be similar in different
iterations (in Figure \ref{fig:JL-Lemma-Dimenion}, AUS - IND are
the most similar in iteration one and also in iteration five). 
\end{doublespace}

It is worth remembering that in each iteration of the JL Lemma dimension
transformation we multiply by a different random matrix and hence
the distance is slightly different in each iteration but within the
bound established by JL Lemma. When the distance is somewhat close
between two pairs of entities, we could observe an inconsistency due
to the JL Lemma transformation in successive iterations. We discuss
this as one area that would require further theoretical investigation
in section \ref{sec:Possibilities-for-Future}.

\begin{doublespace}
Our approach could also be used when groups of securities are being
compared within the same market, a very common scenario when deciding
on the group of securities to invest in a market as opposed to deciding
which markets to invest in. Such an approach would be highly useful
for index construction or comparison across sectors within a market.
(Kashyap 2017) summarizes the theoretical results from this paper
and expands the example illustrations to open, high, low prices, volumes
and price volatilities to provide a complete market microstructure
study; relevant bits of this study to illustrate how our techniques
can be applied to different kinds of variables are reproduced in section
\ref{sec:Comparison-Volumes-Prices-Volatilities}. 

\begin{figure}[H]
\subfloat[Markets and Tickers Count\label{fig:Markets-and-Tickers}]{\includegraphics{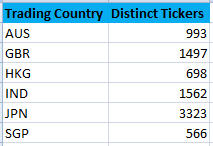}

}\hfill{}\subfloat[PCA Dimension Reduction\label{fig:PCA-Results}]{\includegraphics[width=7cm,height=10cm]{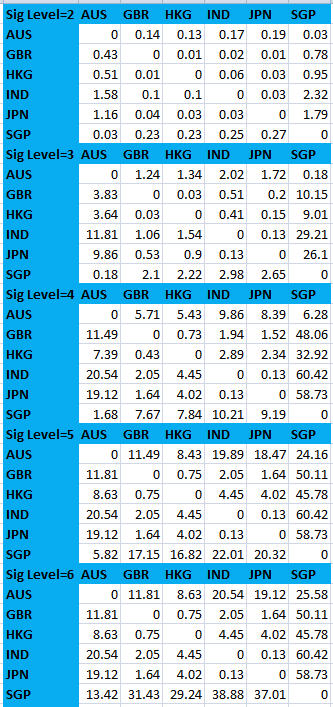}

}

\caption{Distance Measures over Full Sample}

\end{figure}
\begin{figure}[H]
\subfloat[PCA Dimension Reduction\label{fig:PCA-Dimension-Reduction}]{\includegraphics[width=7cm,height=10cm]{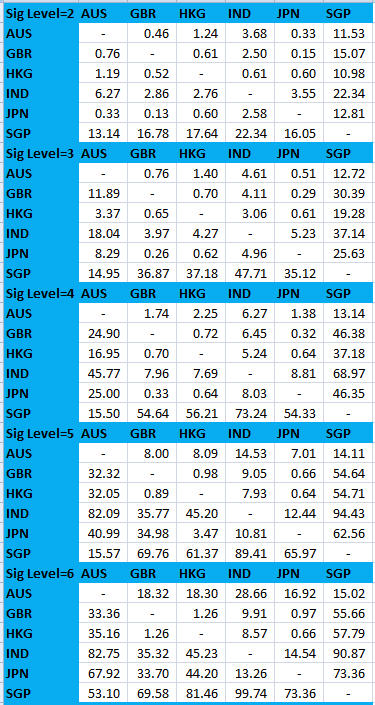}

}\hfill{}\subfloat[JL Lemma Dimension Reduction\label{fig:JL-Lemma-Dimenion}]{\includegraphics[width=8cm,height=10cm]{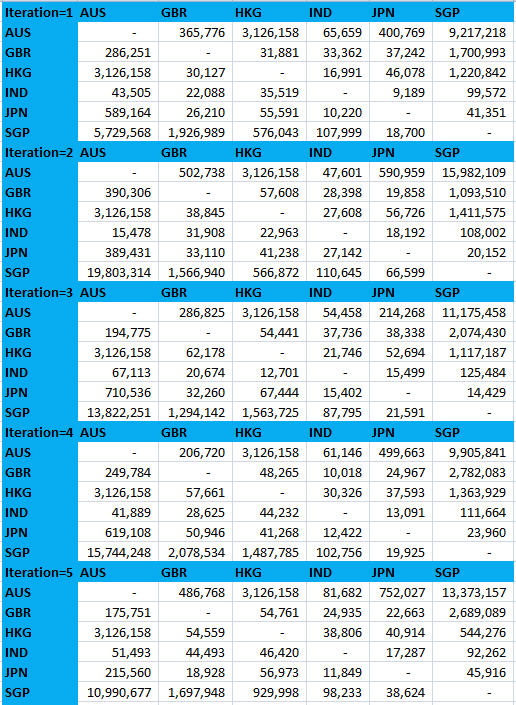}

}\caption{Distance Measures over Randomly Chosen Sub Universe \label{fig:Results-with-Randomly}}

\end{figure}

\end{doublespace}
\begin{doublespace}

\subsection{\label{sec:Comparison-Volumes-Prices-Volatilities}Comparison of
Security Trading Volumes, High-Low-Open-Close Prices and Volume-Price
Volatilities}
\end{doublespace}

\begin{doublespace}
We provide several examples in (sections \ref{subsec:Speaking-Volumes-Of:},
\ref{subsec:A-Pricey-Prescription:}, \ref{subsec:Taming-the-(Volatility)})
complementary to (section \ref{subsec:Comparison-of-Security}) of
how our techniques could be used to compare different countries based
on the time series variables across all equity securities traded in
that market. These illustrations are discussed in much greater detail
as part of a complete study on market micro-structure in (Kashyap
2017). The data sample contains prices (open, close, high and low)
and trading volumes for most of the securities from six different
markets from Jan 01, 2014 to May 28, 2014 (Figure \ref{fig:Markets-and-Tickers-1}).
The time period under consideration, the number of securities and
the bulk of the numerical analysis is similar to the comparisons done
in section \ref{subsec:Comparison-of-Security}. To compare volatilities
of prices and volumes, we calculate sixty day moving volatilities
on the close price and trading volume and calculate the distance measure
over the full sample and also across each of the randomly selected
sub-samples.
\end{doublespace}
\begin{doublespace}

\subsection{\label{subsec:Speaking-Volumes-Of:}Speaking Volumes Of: Comparison
of Trading Volumes}
\end{doublespace}

\begin{doublespace}
The results of the volume comparison over the full sample are shown
in Figure \ref{fig:Volume-PCA-Results}. For example, in Figure \ref{fig:Volume-PCA-Results},
AUS - GBR are the most similar markets when two significant digits
are used and AUS - GBR are the most similar with six significant digits.
In this case the PCA and JL Lemma dimension reduction give similar
results.

The random sample results are shown in Figure \ref{fig:Volume-Results-with-Randomly}.
The left table (Figure \ref{fig:Volume-PCA-Dimension-Reduction})
is for PCA reduction on a randomly chosen sub universe and the right
table (Figure \ref{fig:Volume-JL-Lemma-Dimenion}) is for dimension
reduction using JL Lemma for the same sub universe.

\begin{figure}[H]
\subfloat[Markets and Tickers Count\label{fig:Markets-and-Tickers-1}]{\includegraphics{Country_Ticker_Count}

}\hfill{}\subfloat[Volume PCA Dimension Reduction\label{fig:Volume-PCA-Results}]{\includegraphics[width=7cm,height=10cm]{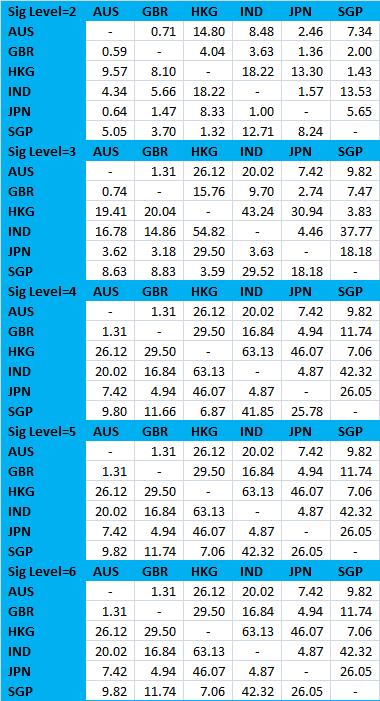}

}

\caption{Security Count by Market / Volume Distance Measures over Full Sample\label{fig:Security-Count-Volume-Distance}}
\end{figure}
\begin{figure}[H]
\subfloat[Volume PCA Dimension Reduction\label{fig:Volume-PCA-Dimension-Reduction}]{\includegraphics[width=7cm,height=10cm]{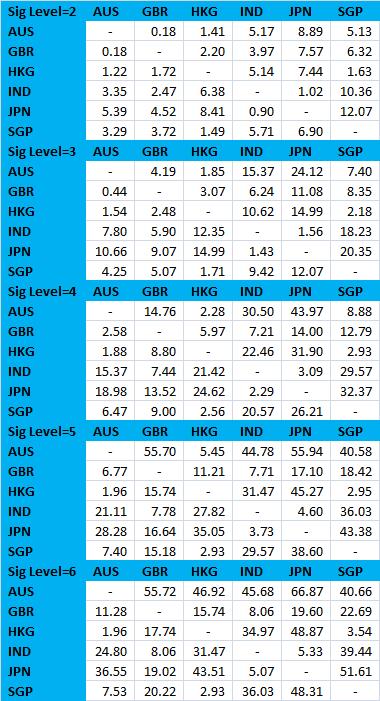}

}\hfill{}\subfloat[Volume JL Lemma Dimension Reduction\label{fig:Volume-JL-Lemma-Dimenion}]{\includegraphics[width=8cm,height=10cm]{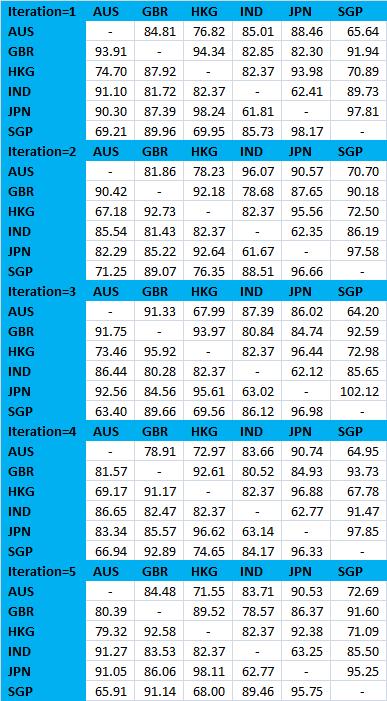}

}\caption{Volume Distance Measures over Randomly Chosen Sub Universe \label{fig:Volume-Results-with-Randomly}}
\end{figure}

\end{doublespace}
\begin{doublespace}

\subsection{\label{subsec:A-Pricey-Prescription:}A Pricey Prescription: Comparison
of Prices (Open, Close, High and Low)}
\end{doublespace}
\begin{doublespace}

\subsubsection{Open Close}
\end{doublespace}

\begin{doublespace}
The results of a comparison between open and close prices over the
full sample are shown in Figures \ref{fig:Open-Close-Distance-Measures-Full Sample},
\ref{fig:Open-PCA-Dimension-Reduction}, \ref{fig:Close-PCA-Dimension-Reduction}.
For example, in Figure \ref{fig:Close-PCA-Dimension-Reduction}, AUS
- SGP are the most similar markets when two significant digits are
used and AUS - HKG are the most similar with six significant digits.
The similarities between open and close prices, in terms of the distance
measure, are also easily observed.

The random sample results are shown in Figures \ref{fig:Open-Results-with-Randomly},
\ref{fig:Close-Results-with-Randomly}. The left table (Figures \ref{fig:Open-PCA-Dimension-Reduction-Random},
\ref{fig:Close-PCA-Dimension-Reduction-Random}) is for PCA reduction
on a randomly chosen sub universe and the right table (Figures \ref{fig:Open-JL-Lemma-Dimenion},
\ref{fig:Close-JL-Lemma-Dimenion}) is for dimension reduction using
JL Lemma for the same sub universe. In Figure \ref{fig:Close-JL-Lemma-Dimenion},
AUS - IND are the most similar in iteration one and also in iteration
five.

\begin{figure}[H]
\subfloat[Open PCA Dimension Reduction\label{fig:Open-PCA-Dimension-Reduction}]{\includegraphics[width=7.5cm,height=10cm]{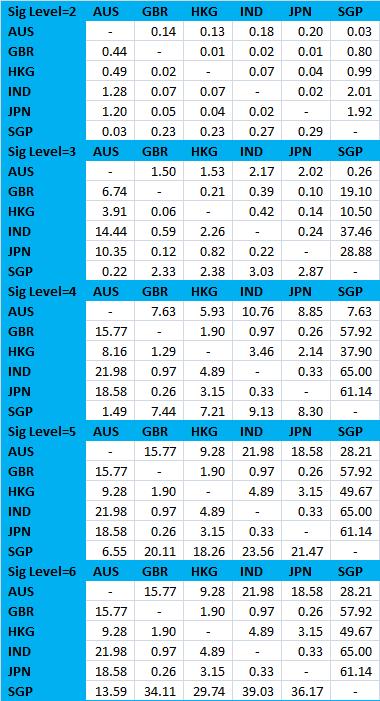}

}\hfill{}\subfloat[Close PCA Dimension Reduction\label{fig:Close-PCA-Dimension-Reduction}]{\includegraphics[width=7.5cm,height=10cm]{Close_Distance_PCA_Reduction}

}\caption{Open / Close Distance Measures over Full Sample\label{fig:Open-Close-Distance-Measures-Full Sample}}
\end{figure}

\begin{figure}[H]
\subfloat[Open PCA Dimension Reduction\label{fig:Open-PCA-Dimension-Reduction-Random}]{\includegraphics[width=7cm,height=10cm]{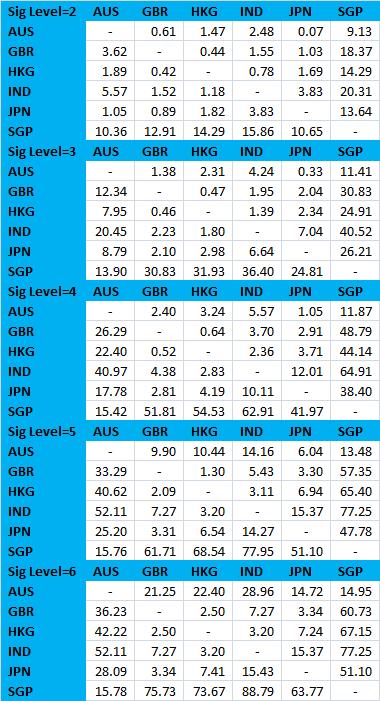}

}\hfill{}\subfloat[Open JL Lemma Dimension Reduction\label{fig:Open-JL-Lemma-Dimenion}]{\includegraphics[width=8cm,height=10cm]{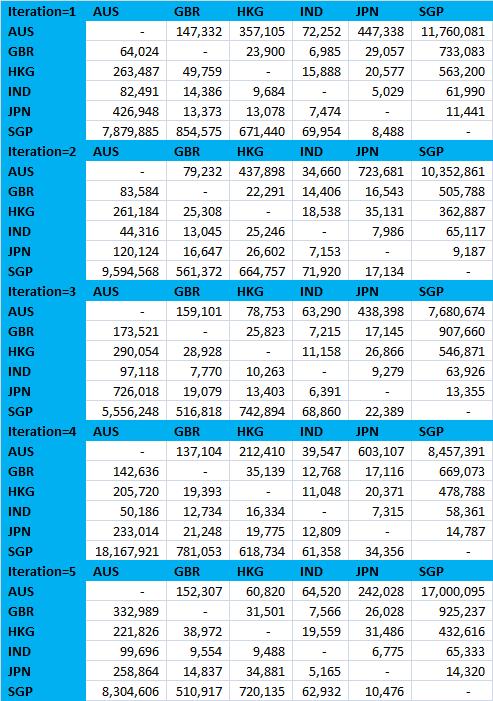}

}\caption{Open Distance Measures over Randomly Chosen Sub Universe \label{fig:Open-Results-with-Randomly}}
\end{figure}

\begin{figure}[H]
\subfloat[Close PCA Dimension Reduction\label{fig:Close-PCA-Dimension-Reduction-Random}]{\includegraphics[width=7cm,height=10cm]{Close_Distance_Random_PCA_Reduction}

}\hfill{}\subfloat[Close JL Lemma Dimension Reduction\label{fig:Close-JL-Lemma-Dimenion}]{\includegraphics[width=8cm,height=10cm]{Close_Distance_Random_Reduction}

}\caption{Close Distance Measures over Randomly Chosen Sub Universe \label{fig:Close-Results-with-Randomly}}
\end{figure}

\end{doublespace}
\begin{doublespace}

\subsubsection{High Low}
\end{doublespace}

\begin{doublespace}
The results of a comparison between high and low prices over the full
sample are shown in Figures \ref{fig:High-Low-Distance-Measures-Full Sample},
\ref{fig:High-PCA-Dimension-Reduction}, \ref{fig:Low-PCA-Dimension-Reduction}.
For example, in Figure \ref{fig:High-PCA-Dimension-Reduction}, AUS
- SGP are the most similar markets when two significant digits are
used and AUS - HKG are the most similar with six significant digits.
The similarities between high and low prices are also easily observed.

The random sample results are shown in Figures \ref{fig:High-Results-with-Randomly},
\ref{fig:Low-Results-with-Randomly}. The left table (Figures \ref{fig:High-PCA-Dimension-Reduction-Random},
\ref{fig:Low-PCA-Dimension-Reduction-Random}) is for PCA reduction
on a randomly chosen sub universe and the right table (Figures \ref{fig:High-JL-Lemma-Dimenion},
\ref{fig:Low-JL-Lemma-Dimension}) is for dimension reduction using
JL Lemma for the same sub universe. In Figures \ref{fig:High-JL-Lemma-Dimenion}
and \ref{fig:Low-JL-Lemma-Dimension}, AUS - IND are the most similar
in iteration one and also in iteration five.

\begin{figure}[H]
\subfloat[High PCA Dimension Reduction\label{fig:High-PCA-Dimension-Reduction}]{\includegraphics[width=7.5cm,height=10cm]{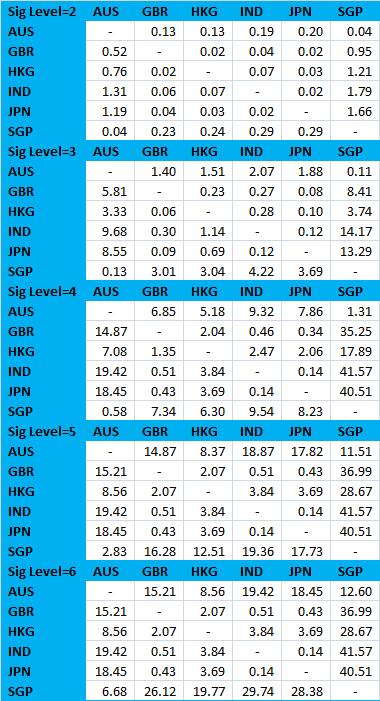}

}\hfill{}\subfloat[Low PCA Dimension Reduction\label{fig:Low-PCA-Dimension-Reduction}]{\includegraphics[width=7.5cm,height=10cm]{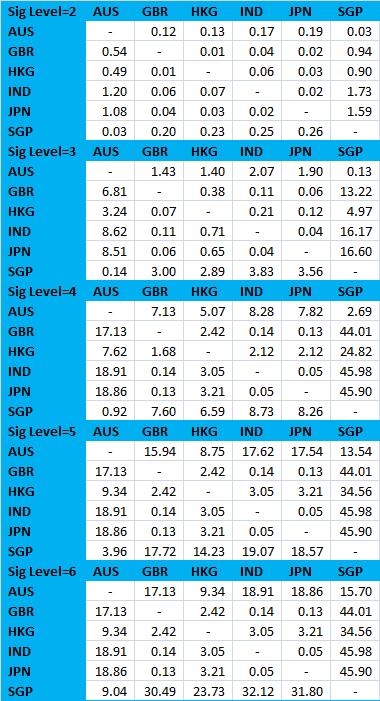}

}\caption{High / Low Distance Measures over Full Sample\label{fig:High-Low-Distance-Measures-Full Sample}}
\end{figure}

\begin{figure}[H]
\subfloat[High PCA Dimension Reduction\label{fig:High-PCA-Dimension-Reduction-Random}]{\includegraphics[width=7cm,height=10cm]{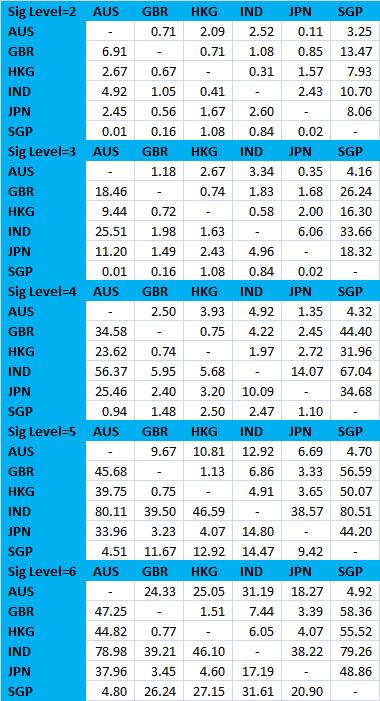}

}\hfill{}\subfloat[High JL Lemma Dimension Reduction\label{fig:High-JL-Lemma-Dimenion}]{\includegraphics[width=8cm,height=10cm]{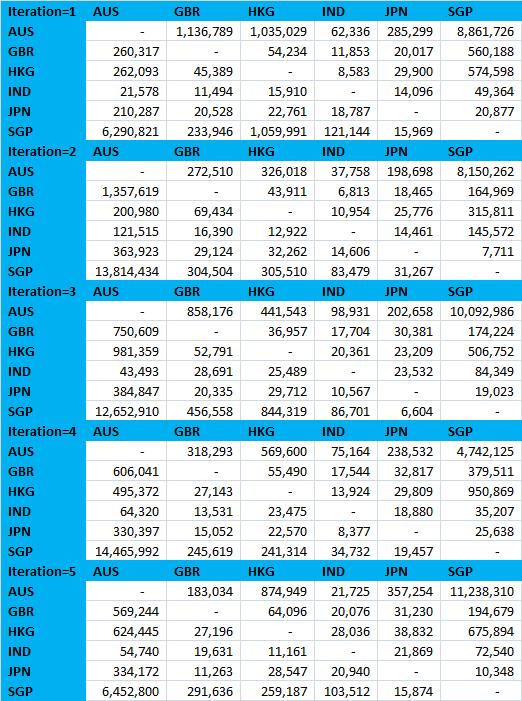}

}\caption{High Distance Measures over Randomly Chosen Sub Universe \label{fig:High-Results-with-Randomly}}
\end{figure}

\begin{figure}[H]
\subfloat[Low PCA Dimension Reduction\label{fig:Low-PCA-Dimension-Reduction-Random}]{\includegraphics[width=7cm,height=10cm]{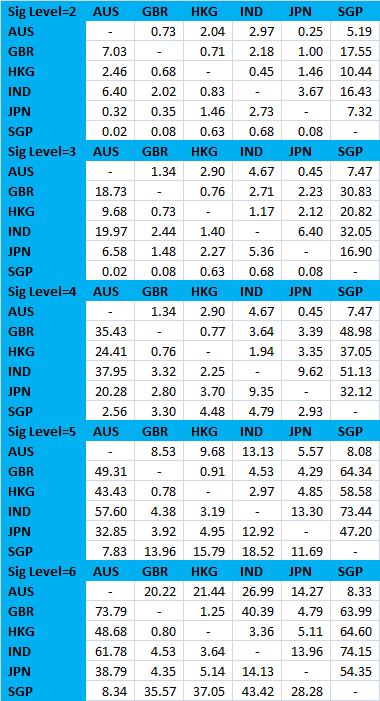}

}\hfill{}\subfloat[Low JL Lemma Dimension Reduction\label{fig:Low-JL-Lemma-Dimension}]{\includegraphics[width=8cm,height=10cm]{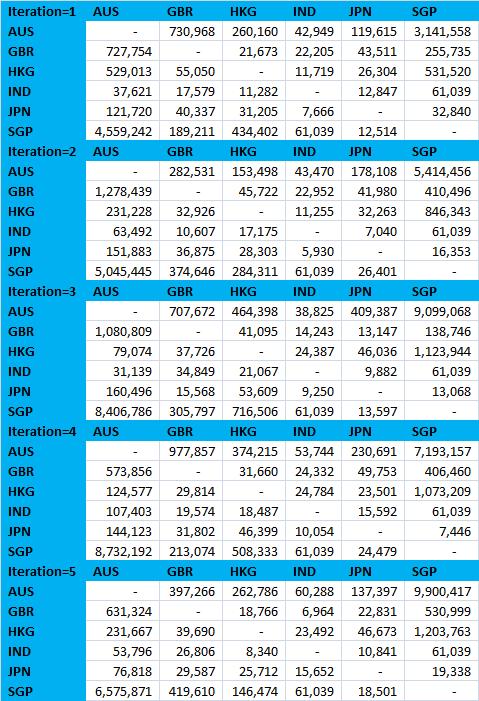}

}\caption{Low Distance Measures over Randomly Chosen Sub Universe \label{fig:Low-Results-with-Randomly}}
\end{figure}

\end{doublespace}
\begin{doublespace}

\subsection{\label{subsec:Taming-the-(Volatility)}Taming the (Volatility) Skew:
Comparison of Close Price / Volume Volatilities}
\end{doublespace}

\begin{doublespace}
The results of a comparison between close price volatilities and volume
volatilities over the full sample are shown in Figures \ref{fig:Close-Volume-Volatility-Distance-Measures-Full Sample},
\ref{fig:Close-Volatility-PCA-Dimension-Reduction}, \ref{fig:Volume-Volatility-PCA-Dimension-Reduction}.
For example, in Figure \ref{fig:Close-Volatility-PCA-Dimension-Reduction},
AUS - GBR are the most similar markets when two significant digits
are used and AUS - HKG are the most similar with six significant digits.
In Figure \ref{fig:Volume-Volatility-PCA-Dimension-Reduction}, AUS
- GBR - IND are equally similar markets when two significant digits
are used and AUS - GBR are the most similar with six significant digits.
The difference in magnitudes of the distance measures for prices,
volumes and volatilities is also easily observed. What this indicates
is that, prices are from the most dissimilar or distant distributions,
volatilities are less similar and volumes are from the most similar
or overlapping distributions. As also observed in the volume comparisons,
volume volatility comparisons give seemingly similar results when
PCA or JL Lemma dimension reductions are used. By considering the
price volatilities and creating portfolios of instruments that have
dissimilar volatility distributions, we could reduce the overall risk
or variance of the portfolio returns, becoming one potential way of
mitigating the effects of wild volatility swings.

The random sample results are shown in Figures \ref{fig:Close-Volatility-Results-with-Randomly},
\ref{fig:Volume-Volatility-Results-with-Randomly}. The left table
(Figures \ref{fig:Close-Volatility-PCA-Dimension-Reduction-Random},
\ref{fig:Volume-Volatility-PCA-Dimension-Reduction-Random}) is for
PCA reduction on a randomly chosen sub universe and the right table
(Figures \ref{fig:Close-Volatility-JL-Lemma-Dimension}, \ref{fig:Volume-Volatility-JL-Lemma-Dimension})
is for dimension reduction using JL Lemma for the same sub universe.
In Figure \ref{fig:Close-Volatility-JL-Lemma-Dimension} AUS - SGP
are the most similar in iteration one and also in iteration five.
In Figure \ref{fig:Volume-Volatility-JL-Lemma-Dimension}, AUS - SGP
are the most similar in iteration one and AUS- GBR in iteration five.

\begin{figure}[H]
\subfloat[Close Volatility PCA Dimension Reduction\label{fig:Close-Volatility-PCA-Dimension-Reduction}]{\includegraphics[width=7.5cm,height=10cm]{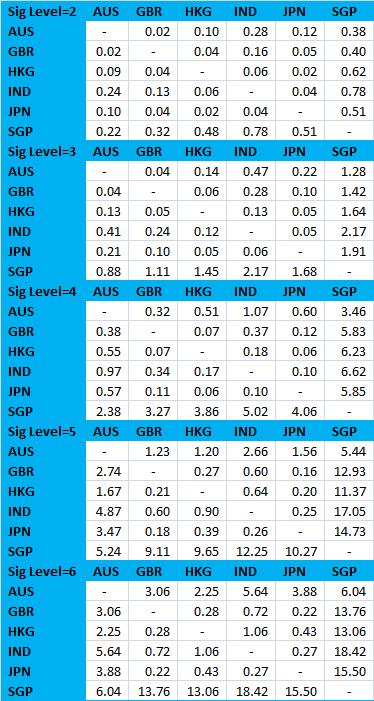}

}\hfill{}\subfloat[Volume Volatility PCA Dimension Reduction\label{fig:Volume-Volatility-PCA-Dimension-Reduction}]{\includegraphics[width=7.5cm,height=10cm]{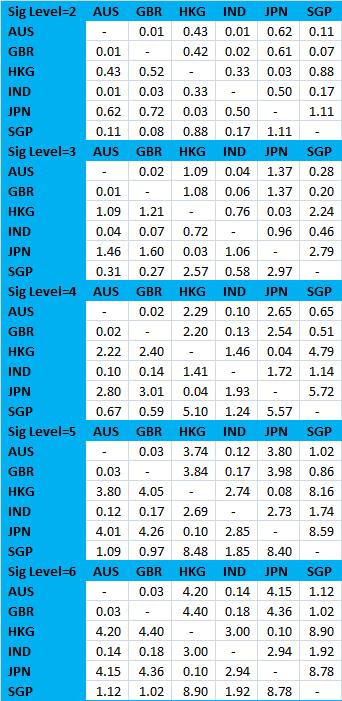}

}\caption{Close / Volume Volatility Distance Measures over Full Sample\label{fig:Close-Volume-Volatility-Distance-Measures-Full Sample}}
\end{figure}

\begin{figure}[H]
\subfloat[Close Volatility PCA Dimension Reduction\label{fig:Close-Volatility-PCA-Dimension-Reduction-Random}]{\includegraphics[width=7cm,height=10cm]{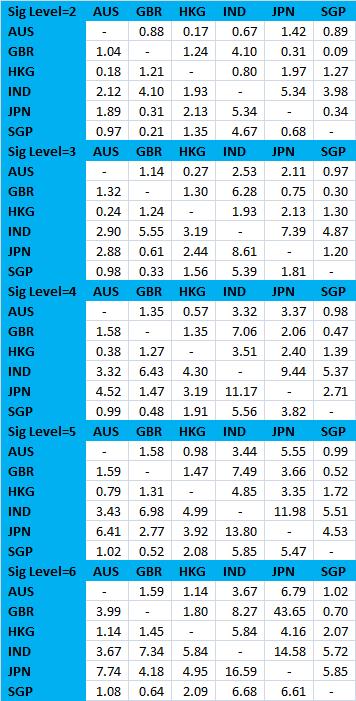}

}\hfill{}\subfloat[Close Volatility JL Lemma Dimension Reduction\label{fig:Close-Volatility-JL-Lemma-Dimension}]{\includegraphics[width=8cm,height=10cm]{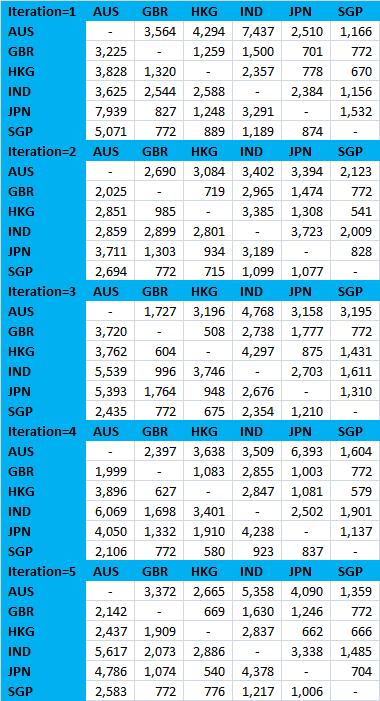}

}\caption{Close Volatility Distance Measures over Randomly Chosen Sub Universe
\label{fig:Close-Volatility-Results-with-Randomly}}
\end{figure}

\begin{figure}[H]
\subfloat[Volume Volatility PCA Dimension Reduction\label{fig:Volume-Volatility-PCA-Dimension-Reduction-Random}]{\includegraphics[width=7cm,height=10cm]{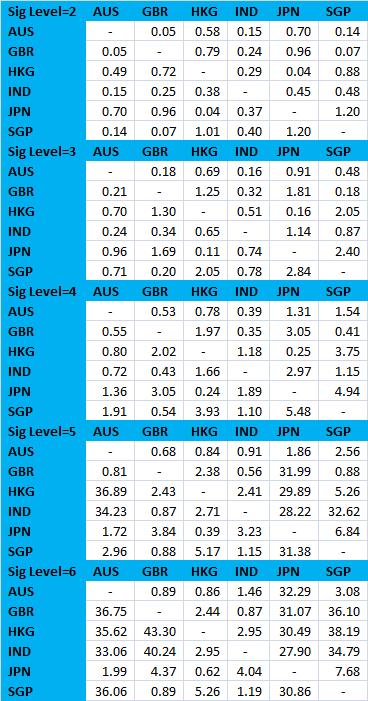}

}\hfill{}\subfloat[Volume Volatility JL Lemma Dimension Reduction\label{fig:Volume-Volatility-JL-Lemma-Dimension}]{\includegraphics[width=8cm,height=10cm]{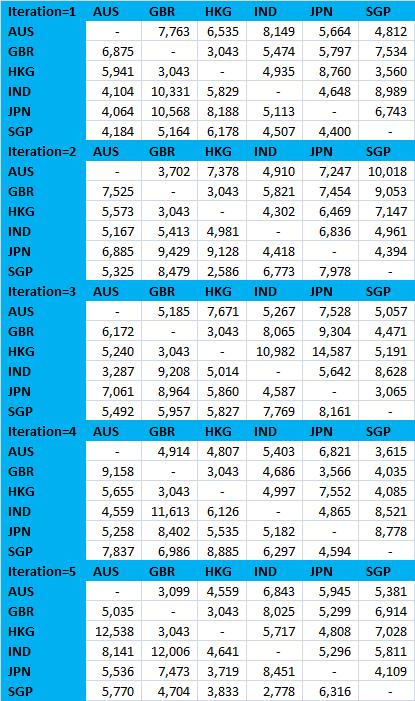}

}\caption{Volume Volatility Distance Measures over Randomly Chosen Sub Universe
\label{fig:Volume-Volatility-Results-with-Randomly}}
\end{figure}

\end{doublespace}

\subsection{\label{subsec:Other-Complementary-Case}Other Complementary Case
Studies}

Below we mention examples of studies we are planning to undertake
that can act as applications of the methodologies discussed here.

\subsubsection{\label{subsec:United-Kingdom-Crime}United Kingdom Crime Analysis}

\begin{doublespace}
We are developing a separate application which applies the techniques
in this article and also from network theory (Euler 1953; Wasserman
\& Faust 1994; Watts 1999; Aldous \& Wilson 2003; Gribkovskaia, Halskau
Sr. \& Laporte 2007) to a crime data-set from the United Kingdom.
The goal of this project is to identify regions with similar levels
and types of crimes so that policy and enforcement decisions can be
made based on lessons learnt from earlier episodes or from regions
where certain legal measures have had a decent level of success. In
addition, network interactions can help us to understand how crime
waves might be spreading from one region to another and help with
crime prevention. The crime data-set is publicly available at: \href{https://data.police.uk/data}{UK Police Data}.
\end{doublespace}

\subsubsection{\label{subsec:Hong-Kong-Shanghai}Hong Kong and Shanghai Securities
Comparison across Industrial Sectors}

\begin{doublespace}
We are creating another numerical example on financial market data
to find out which sectors in China and Hong Kong are more similar
based on the stock prices and trading volumes of all securities listed
on the Shanghai and Hong Kong stock exchange (Kashyap 2017 considers
this question around the time when the Hong Kong and Shanghai stock
exchanges were first electronically connected; though the main focus
of this study is on how trading costs have changed due to this linking
of the two financial markets). To perform this study, we could use
any publicly available data-set downloadable from various market data
providers.
\end{doublespace}

\subsection{\label{subsec:Implementation-Pointers}Implementation Pointers}

\begin{doublespace}
(Chaussé 2010) is a good reference for estimating the parameters of
a normal distribution through the Generalized Method of Moments, GMM,
(Cochrane 2009) using R package \textbf{gmm}. The numerical computation
of a multivariate normal is often a difficult problem. (Genz 1992;
Genz \& Bretz 2009) describe a transformation that simplifies the
problem and places it into a form that allows efficient calculation
using standard numerical multiple integration algorithms. (Hothorn,
Bretz \& Genz 2001) give pointers for the numerical computation of
multivariate normal probabilities based on the R package \textbf{mvtnorm}.
(Manjunath \& Wilhelm 2012) derive an explicit expression for the
truncated mean and variance for the multivariate normal distribution
with arbitrary rectangular double truncation. (Wilhelm \& Manjunath
2010; Wilhelm 2015) have details on the R package \textbf{tmvtnorm},
for the truncated multivariate normal distribution including routines
for the GMM Estimation for the Truncated Multivariate Normal Distribution.
In (appendix \ref{sec:R-Code-Snippets}), we list some R code snippets,
that includes the Johnson-Lindenstrauss matrix transformation and
a modification to the routine to calculate the Bhattacharyya distance,
available currently in the R package \textbf{fps}. This modification
allows much larger numbers and dimensions to be handled, by utilizing
the properties of logarithms and the Eigen values of a matrix.
\end{doublespace}
\begin{doublespace}

\section{\label{sec:Possibilities-for-Future}Possibilities for Future Research }
\end{doublespace}

\begin{doublespace}
There is a lot of research being done in developing distance measures,
dimension reduction techniques and understanding the properties of
distributions whose dimensions have been transformed. Such studies
are aimed at developed better theoretical foundations as well faster
algorithms for implementing newer techniques. We point out alternative
approaches that might hold potential starting with some not so recent
methods and then some newer techniques. (Chow \& Liu 1968) present
a method to approximate an $n$ dimensional discrete distribution
by a product of several of its component distributions of lower order
in such a way that the product is a probability extension of these
distributions of lower order. A product approximation is a valid probability
distribution. Only the class of second order distributions are used.
There are $n\left(n-1\right)/2$ second order approximations of which
at most $n-1$ can be used in the approximation. 

The uncertainty associated with a state estimate can be represented
most generally by a probability distribution incorporating all knowledge
about the state. The Kalman filter (Kalman 1960; End-note \ref{enu:Kalman-Filter})
exploits the fact that 1) given only the mean and variance (or covariance
in multiple dimensions) of a distribution, the most conservative assumption
that can be made about the distribution is that it is a Gaussian having
the given mean and variance and 2) the fact that the application of
a linear operator to a Gaussian distribution always yields another
Gaussian distribution. Given the assumptions of 1) and 2) it is straightforward
to show that the Kalman filter will yield the best possible estimate
of the mean and variance of the state. The requirement that the mean
and variance of the state is measurable represents little practical
difficulty but the requirement that all observation and process models
be linear is rarely satisfied in nontrivial applications. (Julier
\& Uhlmann 1996) examine an alternative generalization of the Kalman
filter that accommodates nonlinear transformations of probability
distributions through the use of a new representation of the mean
and variance information about the current state. 

(Székely, Rizzo \& Bakirov 2007; Székely \& Rizzo 2009; Lyons 2013)
develop a new measure of dependence between random vectors called
the Distance correlation. Distance covariance and distance correlation
are analogous to product-moment covariance and correlation, but unlike
the classical definition of correlation, distance correlation is zero
only if the random vectors are independent. This is applicable to
random vectors of arbitrary and not necessarily equal dimension and
to any distributions with finite first moments. (Székely \& Rizzo
2013) come up with the Energy distance, which is a statistical distance
between the distributions of random vectors characterizing equality
of distributions. The name energy derives from Newton’s gravitational
potential energy (which is a function of the distance between two
bodies), and there is an elegant relation to the notion of potential
energy between statistical observations. Energy statistics (E-statistics)
are functions of distances between statistical observations. The idea
of energy statistics is to consider statistical observations as heavenly
bodies governed by a statistical potential energy, which is zero if
and only if an underlying statistical null hypothesis is true.
\end{doublespace}

We chronicle some key simplifications we have used in our illustrative
examples:
\begin{enumerate}
\item A key limitation of our study is that we have reduced dimensions using
PCA or randomly sampled a sub matrix from the overall data-set so
that the length of time series available is in the range of the number
of securities that could be compared. Using a longer time series for
the variables would be a useful extension and a real application would
benefit immensely from more history.
\item We have used the simple formula for the Bhattacharyya distance applicable
to multivariate normal distributions. The formulae we have developed
over a truncated multivariate normal distribution or using a Normal
Log-Normal Mixture could give more accurate results. Again, later
works should look into tests that can establish which of the distributions
would apply depending on the data-set under consideration.
\item For the examples in section \ref{sec:Comparison-Volumes-Prices-Volatilities},
for each market we have looked at seven variables, open, close, low,
high, trading volume, close volatility and volume volatility. These
variables can be combined using the expression for the multinomial
distance to get a complete representation of which markets are more
similar than others. We aim to develop this methodology and illustrate
these techniques further in later works.
\end{enumerate}
Once we have the similarity measures across groups of entities, a
separate set of analysis can be performed to prudently select which
policies or procedures can be transferred across similar entities.
For example, across groups of securities, portfolios could be constructed
to see how sensitive they are to different explanatory factors and
then performance benchmarks could be used to gauge the risk return
relationship.

When the distance is somewhat close between two pairs of entities,
we could observe an inconsistency due to the JL Lemma transformation
in successive iterations. It is worth remembering that if we perform
multiple iterations of the JL Lemma dimension transformation (multiplying
by a different random matrix); the distance could be slightly different
in each iteration but within the bound established by JL Lemma. Though,
in one iteration one pair could seem to be more similar than the other
and the result could be reversed in the other iteration. When compared
to other dimension reduction techniques such as PCA, due to loss of
data, it is hard to know if the results are accurate at all when the
distance between entities is too small. Theoretical bounds could be
established with regards to what are the limits on the distance measure
where such an inconsistency could result. Further to this, a huge
branch of theoretical investigations can be undertaken regarding the
interval over which the distance measure will fall when one of the
random variables is multiplied by different random matrices governed
by the JL-Lemma to effect the same dimension transformation. 

A key point to remember is that the findings with respect to the similarity
of different entities are based on the data available for the particular
time period under consideration. There could be fundamental changes
in the true data generating processes, which are generally unknown
and the same entities for a different time period could give entirely
different results. Hence, any real application that continues to collect
the relevant data that is generated, should calculate the metrics
on a moving time period basis (rolling) and use the latest results
that are computed. Needless to say, this brings up the question of
how much of the recent history we should use as we make our decisions.
While our present econometric tools fall short of providing concrete
answers, many practical aids and suggestions are available in any
econometric guide (Hamilton 1994). This is of course true for most
(all?) empirical data investigations.

\begin{doublespace}
Another improvement could be to normalize, or, standardize distance
measures. This is perhaps applicable not just to the Bhattacharyya
distance, but to other types of distance metrics as well. For example,
the covariance is a useful number, but its use is magnified and the
intuitive lessons compounded by the use of correlations derived from
covariance measures. Again, we need to be wary of this path since
anything that can be measured is magical; that is, there is always
something bigger or something smaller, which can be expressed as a
need to contend with the concepts of the infinite, or, the infinitesimal.
So, by standardizing something we might reduce the range of the values,
but it might be necessary to retain the number of significant digits
that give us meaningful comparisons. Also, since we wish to compare
more than two entities, it might be helpful to come up with metrics
that can combine multiple distance measures at the same time. Though,
it would suffice to be able to combine two of the distance measures
at a time, since that is how we would compare any two probabilistic
entities; just as in the covariance case other possibilities such
as combining more than two at a time need not be ruled out.
\end{doublespace}
\begin{doublespace}

\section{Conclusions}
\end{doublespace}

\begin{doublespace}
We have discussed various measures of similarity between probability
distributions. We have shown how the marriage between the Bhattacharyya
distance and the Johnson-Lindenstrauss Lemma provides us with a novel
and practical methodology that allows comparisons between any two
probability distributions.\textbf{\textit{ }}This combination becomes,
to our limited knowledge, an example (the example?) of perfectly harmonious
matrimony. We demonstrated a relationship between covariance and distance
measures based on a generic extension of Stein's Lemma. We have provided
numerical examples based on security prices across six countries where
we have compared trading volumes, open-close-high-low prices, price
and volume volatilities; all of which act as how to guides of the
techniques developed here for real life applications. We have also
discussed how this methodology lends itself to numerous applications
outside the realm of finance and economics.
\end{doublespace}
\begin{doublespace}

\section{Acknowledgements and End-notes}
\end{doublespace}
\begin{enumerate}
\begin{doublespace}
\item My colleagues at SolBridge International School of Business are always
full of great suggestions to improve all my papers; in particular:
Dr. Chia Hsing Huang, Dr. Sangoo Shin, Dr. Xingcai Meng, Dr. Yun Jie
Joseph, Dr. Narasimha Rao Kowtha, Dr. Aye Mengistu Alemu, Dr. Ben
Agyei-Mensah, Dr. Taylan Urkmez, Dr. Alejandra Marin, Dr. Jerman Rose,
Dr. Awan Mahmood, Dr. Jay Won Lee and Dr. KyunHwa Kim raised some
interestingly tough questions at our monthly Brown Bag seminars that
resulted in significant improvements in section \ref{sec:Intuition-for-Dimension}
and section \ref{sec:Possibilities-for-Future}. The students of SolBridge
International School of Business continue to be the inspiration for
many of our papers by bringing us their concerns. In this case their
worries about how their daily life is being bombarded by numerous
sources of information showed us that information reduction would
have many potential benefits.
\end{doublespace}
\item Numerous seminar participants, particularly at a few meetings of the
econometric society and various finance organizations, and the faculty
members of City University of Hong Kong suggested ways to improve
the core results in the manuscript and prompted us to find ways to
express the intuition for why we need the main results. The views
and opinions expressed in this article, along with any mistakes, are
mine alone and do not necessarily reflect the official policy or position
of either of my affiliations or any other agency.
\begin{doublespace}
\item \label{enu:The-Mellin-transform}In mathematics, the Mellin transform
is an integral transform that may be regarded as the multiplicative
version of the two-sided Laplace transform (End-note \ref{enu:The-two-sided-Laplace}).
The Mellin transform of a function $f$ is
\[
{\displaystyle \left\{ {\mathcal{M}}f\right\} (s)=\varphi(s)=\int_{0}^{\infty}x^{s-1}f(x)\,dx}
\]

\end{doublespace}

The inverse transform is
\[
\left\{ \mathcal{M}^{-1}\varphi\right\} (x)=f(x)=\frac{1}{2\pi i}\int_{c-i\infty}^{c+i\infty}x^{-s}\varphi(s)\,ds
\]
This integral transform is closely connected to the theory of Dirichlet
series (End-note \ref{enu:A-Dirichlet-series}), and is often used
in number theory, mathematical statistics, and the theory of asymptotic
expansions; it is closely related to the Laplace transform (End-note
\ref{enu:The-Laplace-transform}) and the Fourier transform (End-note
\ref{enu:The-Fourier-transform}), and the theory of the gamma function
and allied special functions. To avoid extended discussions, we only
provide some details here. The following link has more details: \href{https://en.wikipedia.org/wiki/Mellin_transform}{Mellin Transform, Wikipedia Link}
\item \label{enu:The-G-function-was}In mathematics, the G-function was
introduced by Cornelis Simon Meijer (Meijer 1936) as a very general
function intended to include most of the known special functions as
particular cases. This was not the only attempt of its kind: the generalized
hypergeometric function and the MacRobert E-function had the same
aim, but Meijer's G-function was able to include those as particular
cases as well. The first definition was made by Meijer using a series;
nowadays the accepted and more general definition is via a line integral
in the complex plane, introduced in its full generality by Arthur
Erdelyi in 1953. \href{https://en.wikipedia.org/wiki/Meijer_G-function}{Meijer G-Function, Wikipedia Link}
\begin{doublespace}
\item \label{enu:The-Newton-Cotes-formulas,}In numerical analysis, the
Newton-Cotes formulas, also called the Newton-Cotes quadrature rules
or simply Newton-Cotes rules, are a group of formulas for numerical
integration (also called quadrature) based on evaluating the integrand
at equally spaced points. \href{https://en.wikipedia.org/wiki/Newton–Cotes_formulas}{Newton-Cotes formulas, Wikipedia Link}
\item \label{enu:Rao's-score-LM-test,}In statistics, Rao's score test,
also known as the score test or the Lagrange multiplier test (LM test)
in econometrics, is a statistical test of a simple null hypothesis
that a parameter of interest $\theta$ is equal to some particular
value $\theta_{0}$. It is the most powerful test when the true value
of $\theta$ is close to $\theta_{0}$. The main advantage of the
score test is that it does not require an estimate of the information
under the alternative hypothesis or unconstrained maximum likelihood.
\href{https://en.wikipedia.org/wiki/Score_test}{Score or Lagrange Multiplier Test, Wikipedia Link}
\end{doublespace}
\item \label{enu:Skewness}In probability theory and statistics, skewness
is a measure of the asymmetry of the probability distribution of a
real-valued random variable about its mean. The skewness value can
be positive or negative, or undefined. \href{https://en.wikipedia.org/wiki/Skewness}{Skewness, Wikipedia Link}
\item \label{enu:Kurtosis}In probability theory and statistics, kurtosis
(from Greek: kyrtos or kurtos, meaning ``curved, arching'') is a
measure of the ``tailedness'' of the probability distribution of
a real-valued random variable. In a similar way to the concept of
skewness, kurtosis is a descriptor of the shape of a probability distribution
and, just as for skewness, there are different ways of quantifying
it for a theoretical distribution and corresponding ways of estimating
it from a sample from a population. Depending on the particular measure
of kurtosis that is used, there are various interpretations of kurtosis,
and of how particular measures should be interpreted. \href{https://en.wikipedia.org/wiki/Kurtosis}{Kurtosis, Wikipedia Link}
\begin{doublespace}
\item \label{enu:The-Euclidean-distance}In mathematics, the Euclidean distance
or Euclidean metric is the ``ordinary'' straight-line distance between
two points in Euclidean space. With this distance, Euclidean space
becomes a metric space. The associated norm is called the Euclidean
norm. Older literature refers to the metric as the Pythagorean metric.
A generalized term for the Euclidean norm is the $L^{2}$ norm or
$L^{2}$ distance. \href{https://en.wikipedia.org/wiki/Euclidean_distance}{Euclidean Distance, Wikipedia Link}
\end{doublespace}
\item \label{enu:Kalman-Filter}In statistics and control theory, Kalman
filtering, also known as linear quadratic estimation (LQE), is an
algorithm that uses a series of measurements observed over time, containing
statistical noise and other inaccuracies, and produces estimates of
unknown variables that tend to be more accurate than those based on
a single measurement alone, by estimating a joint probability distribution
over the variables for each timeframe. The Kalman filter has numerous
applications for guidance, navigation, and control of vehicles, particularly
aircraft and spacecraft. Furthermore, the Kalman filter is a widely
applied concept in time series analysis used in fields such as signal
processing and econometrics. Kalman filters also are one of the main
topics in the field of robotic motion planning and control, and they
are sometimes included in trajectory optimization. \href{https://en.wikipedia.org/wiki/Kalman_filter}{Kalman Filter, Wikipedia Link}
\item \label{enu:A-Dirichlet-series}In mathematics, a Dirichlet series
is any series of the form 
\[
{\displaystyle \sum_{n=1}^{\infty}{\frac{a_{n}}{n^{s}}}}
\]
where $s$ is complex, and $a_{n}$ is a complex sequence. \href{https://en.wikipedia.org/wiki/Dirichlet_series}{Dirichlet Series, Wikipedia Link}
\begin{doublespace}
\item \label{enu:The-two-sided-Laplace}In mathematics, the two-sided Laplace
transform or bilateral Laplace transform is an integral transform
equivalent to probability's moment generating function. Two-sided
Laplace transforms are closely related to the Fourier transform, the
Mellin transform, and the ordinary or one-sided Laplace transform.
If $f\left(t\right)$ is a real or complex valued function of the
real variable $t$ defined for all real numbers, then the two-sided
Laplace transform is defined by the integral,
\[
{\displaystyle {\mathcal{B}}\{f\}(s)=F(s)=\int_{-\infty}^{\infty}e^{-st}f(t)\,dt}
\]
\href{https://en.wikipedia.org/wiki/Two-sided_Laplace_transform}{Two-Sided Laplace Transform, Wikipedia Link}
\end{doublespace}
\item \label{enu:The-Laplace-transform}In mathematics, the Laplace transform
of a function $f\left(t\right)$, defined for all real numbers $t\geq0$,
is the function $F(s)$, which is a unilateral transform defined by
\[
{\displaystyle F(s)=\int_{0}^{\infty}f(t)e^{-st}\,dt}
\]
where $s$ is a complex number frequency parameter. $s=\sigma+i\omega$
, with real numbers $\sigma$ and $\omega$. An alternate notation
for the Laplace transform is ${\mathcal{L}}\{f\}$ instead of $F$.
\href{https://en.wikipedia.org/wiki/Laplace_transform}{Laplace Transform, Wikipedia Link}
\item \label{enu:The-Fourier-transform}The Fourier transform (FT) decomposes
a function of time (a signal) into the frequencies that make it up,
in a way similar to how a musical chord can be expressed as the frequencies
(or pitches) of its constituent notes. The Fourier transform of a
function of time is itself a complex-valued function of frequency,
whose absolute value represents the amount of that frequency present
in the original function, and whose complex argument is the phase
offset of the basic sinusoid in that frequency. The Fourier transform
is called the frequency domain representation of the original signal.
The term Fourier transform refers to both the frequency domain representation
and the mathematical operation that associates the frequency domain
representation to a function of time. The Fourier transform is not
limited to functions of time, but in order to have a unified language,
the domain of the original function is commonly referred to as the
time domain. The Fourier transform of the function $f$ is traditionally
denoted by adding a circumflex: $\hat{f}$. There are several common
conventions for defining the Fourier transform of an integrable function
${\displaystyle f:\mathbb{R}\to\mathbb{C}}$. One definition that
is commonly used for any real number $\xi$ is, 
\[
{\displaystyle {\hat{f}}(\xi)=\int_{-\infty}^{\infty}f(x)\ e^{-2\pi ix\xi}\,dx}
\]
When the independent variable $x$ represents time, the transform
variable $\xi$ represents frequency (e.g. if time is measured in
seconds, then the frequency is in hertz). Under suitable conditions,
$f$ is determined by $\hat{f}$ for any real number $x$, via the
inverse transform:
\[
{\displaystyle f(x)=\int_{-\infty}^{\infty}{\hat{f}}(\xi)\ e^{2\pi ix\xi}\,d\xi}
\]
\href{https://en.wikipedia.org/wiki/Fourier_transform}{Fourier Transform, Wikipedia Link}
\item (Lawson 1985; Keynes 1937; 1971; 1973; McManus \& Hastings 2005; Simon
1962; Kashyap 2017; Bertsekas 2002; Henderson \& Searle 1981) are
cited in the appendix of supplementary material. They discuss additional
concepts related to uncertainty, unintended consequences, probability
distributions and linear algebra. Some of these references are also
used in the proofs of the mathematical results.
\end{enumerate}
\begin{doublespace}

\section{References }
\end{doublespace}
\begin{enumerate}
\begin{doublespace}
\item Achlioptas, D. (2003). Database-friendly random projections: Johnson-Lindenstrauss
with binary coins. Journal of computer and System Sciences, 66(4),
671-687.
\item Aherne, F. J., Thacker, N. A., \& Rockett, P. I. (1998). The Bhattacharyya
metric as an absolute similarity measure for frequency coded data.
Kybernetika, 34(4), 363-368.
\end{doublespace}
\item Aldous, J. M., \& Wilson, R. J. (2003). Graphs and applications: an
introductory approach (Vol. 1). Springer Science \& Business Media.
\begin{doublespace}
\item Aroian, L. A. (1947). The probability function of the product of two
normally distributed variables. The Annals of Mathematical Statistics,
265-271.
\item Aroian, L. A., Taneja, V. S., \& Cornwell, L. W. (1978). Mathematical
forms of the distribution of the product of two normal variables.
Communications in Statistics-Theory and Methods, 7(2), 165-172.
\item Bansal, R., \& Yaron, A. (2004). Risks for the long run: A potential
resolution of asset pricing puzzles. The Journal of Finance, 59(4),
1481-1509.
\item Barro, R. J. (2006). Rare disasters and asset markets in the twentieth
century. The Quarterly Journal of Economics, 823-866.
\item Beath, C., Becerra-Fernandez, I., Ross, J., \& Short, J. (2012). Finding
value in the information explosion. MIT Sloan Management Review, 53(4),
18.
\item Berner, E. S., \& Moss, J. (2005). Informatics challenges for the
impending patient information explosion. Journal of the American Medical
Informatics Association, 12(6), 614-617.
\item Bertsekas, D. P. (2002). Introduction to Probability: Dimitri P. Bertsekas
and John N. Tsitsiklis. Athena Scientific.
\item Bhattacharyya, A. (1943). On a Measure of Divergence Between Two Statistical
Populations Defined by their Probability Distributions, Bull. Calcutta
Math. Soc., 35, pp. 99-110.
\item Bhattacharyya, A. (1946). On a measure of divergence between two multinomial
populations. Sankhyā: The Indian Journal of Statistics, 401-406.
\item Bishop, C. M. (2006). Pattern Recognition. Machine Learning.
\item Burges, C. J. (2009). Dimension reduction: A guided tour. Machine
Learning, 2(4), 275-365.
\item Burkardt, J. (2014). The Truncated Normal Distribution. Department
of Scientific Computing Website, Florida State University.
\item Campbell, J. Y., \& Cochrane, J. H. (1999). By Force of Habit: A Consumption-Based
Explanation of Aggregate Stock Market Behavior. The Journal of Political
Economy, 107(2), 205-251.
\item Cha, S. -H. (2007). Comprehensive survey on distance/similarity measures
between probability density functions. International Journal of Mathematical
Models and Methods in Applied Sciences, 4(1), 300-307.
\item Chaussé, P. (2010). Computing generalized method of moments and generalized
empirical likelihood with R. Journal of Statistical Software, 34(11),
1-35.
\item Chiani, M., Dardari, D., \& Simon, M. K. (2003). New exponential bounds
and approximations for the computation of error probability in fading
channels. Wireless Communications, IEEE Transactions on, 2(4), 840-845.
\item Chow, C. K., \& Liu, C. N. (1968). Approximating discrete probability
distributions with dependence trees. Information Theory, IEEE Transactions
on, 14(3), 462-467.
\item Clark, P. K. (1973). A subordinated stochastic process model with
finite variance for speculative prices. Econometrica: journal of the
Econometric Society, 135-155.
\item Cochrane, J. H. (2009). Asset Pricing:(Revised Edition). Princeton
university press.
\item Cody, W. J. (1969). Rational Chebyshev approximations for the error
function. Mathematics of Computation, 23(107), 631-637.
\item Comaniciu, D., Ramesh, V., \& Meer, P. (2003). Kernel-based object
tracking. Pattern Analysis and Machine Intelligence, IEEE Transactions
on, 25(5), 564-577.
\item Constantinides, G. M., \& Duffie, D. (1996). Asset pricing with heterogeneous
consumers. Journal of Political economy, 219-240.
\item Contreras-Reyes, J. E., \& Arellano-Valle, R. B. (2012). Kullback–Leibler
divergence measure for multivariate skew-normal distributions. Entropy,
14(9), 1606-1626.
\item Craig, C. C. (1936). On the frequency function of xy. The Annals of
Mathematical Statistics, 7(1), 1-15.
\item Dasgupta, S., \& Gupta, A. (1999). An elementary proof of the Johnson-Lindenstrauss
lemma. International Computer Science Institute, Technical Report,
99-006.
\item Derpanis, K. G. (2008). The Bhattacharyya Measure. Mendeley Computer,
1(4), 1990-1992.
\item Djouadi, A., Snorrason, O. Ö., \& Garber, F. D. (1990). The quality
of training sample estimates of the Bhattacharyya coefficient. IEEE
Transactions on Pattern Analysis and Machine Intelligence, 12(1),
92-97.
\item Doornik, J. A., \& Hansen, H. (2008). An omnibus test for univariate
and multivariate normality. Oxford Bulletin of Economics and Statistics,
70(s1), 927-939.
\item Dordick, H. S., \& Wang, G. (1993). Information society: A retrospective
view. Sage Publications, Inc..
\item Duchi, J. (2007). Derivations for linear algebra and optimization.
Berkeley, California.
\item Engle, R. F. (1984). Wald, likelihood ratio, and Lagrange multiplier
tests in econometrics. Handbook of econometrics, 2, 775-826.
\item Epstein, B. (1948). Some applications of the Mellin transform in statistics.
The Annals of Mathematical Statistics, 370-379.
\end{doublespace}
\item Euler, L. (1953). Leonhard Euler and the Königsberg bridges. Scientific
American, 189(1), 66-72.
\begin{doublespace}
\item Fama, E. F. (1965). The behavior of stock-market prices. The journal
of Business, 38(1), 34-105.
\item Fama, E. F. (1995). Random walks in stock market prices. Financial
analysts journal, 51(1), 75-80.
\item Fodor, I. K. (2002). A survey of dimension reduction techniques.
\item Fowlkes, E. B. (1979). Some methods for studying the mixture of two
normal (lognormal) distributions. Journal of the American Statistical
Association, 74(367), 561-575.
\item Frankl, P., \& Maehara, H. (1988). The Johnson-Lindenstrauss lemma
and the sphericity of some graphs. Journal of Combinatorial Theory,
Series B, 44(3), 355-362.
\item Frankl, P., \& Maehara, H. (1990). Some geometric applications of
the beta distribution. Annals of the Institute of Statistical Mathematics,
42(3), 463-474.
\item Fuller, J. (2010). What is happening to news: The information explosion
and the crisis in journalism. University of Chicago Press.
\item Gentle, J. E. (2007). Matrix algebra: theory, computations, and applications
in statistics. Springer Science \& Business Media.
\item Gentle, J. E. (2012). Numerical linear algebra for applications in
statistics. Springer Science \& Business Media.
\item Genz, A. (1992). Numerical computation of multivariate normal probabilities.
Journal of computational and graphical statistics, 1(2), 141-149.
\item Genz, A., \& Bretz, F. (2009). Computation of multivariate normal
and t probabilities (Vol. 195). Springer Science \& Business Media.
\item Glen, A. G., Leemis, L. M., \& Drew, J. H. (2004). Computing the distribution
of the product of two continuous random variables. Computational statistics
\& data analysis, 44(3), 451-464.
\end{doublespace}
\item Gribkovskaia, I., Halskau Sr, Ø., \& Laporte, G. (2007). The bridges
of Königsberg—a historical perspective. Networks: An International
Journal, 49(3), 199-203.
\begin{doublespace}
\item Guorong, X., Peiqi, C., \& Minhui, W. (1996, August). Bhattacharyya
distance feature selection. In Pattern Recognition, 1996., Proceedings
of the 13th International Conference on (Vol. 2, pp. 195-199). IEEE.
\end{doublespace}
\item Hamilton, J. D. (1994). Time series analysis (Vol. 2, pp. 690-696).
Princeton, NJ: Princeton university press.
\begin{doublespace}
\item Hansen, L. P., \& Jagannathan, R. (1991). Implications of Security
Market Data for Models of Dynamic Economies. The Journal of Political
Economy, 99(2), 225-262.
\item Hansen, L. P., \& Richard, S. F. (1987). The Role of Conditioning
Information in Deducing Testable Restrictions Implied by Dynamic Asset
Pricing Models. Econometrica, 55(3), 587-613.
\item Hanson, D. L., \& Wright, F. T. (1971). A bound on tail probabilities
for quadratic forms in independent random variables. The Annals of
Mathematical Statistics, 42(3), 1079-1083.
\item Hellinger, E. (1909). Neaue begr undung der theorie der quadratischen
formen von unendlichen vielen ver anderlichen. 136, 210–271.
\item Henderson, H. V., \& Searle, S. R. (1981). On deriving the inverse
of a sum of matrices. Siam Review, 23(1), 53-60.
\item Horrace, W. C. (2005). Some results on the multivariate truncated
normal distribution. Journal of Multivariate Analysis, 94(1), 209-221.
\item Hothorn, T., Bretz, F., \& Genz, A. (2001). On multivariate t and
Gauss probabilities in R. sigma, 1000, 3.
\item Hull, J. C. (2006). Options, futures, and other derivatives. Pearson
Education India.
\item Huth, E. J. (1989). The information explosion. Bulletin of the New
York Academy of Medicine, 65(6), 647.
\item Huzurbazar, V. S. (1955). Exact forms of some invariants for distributions
admitting sufficient statistics. Biometrika, 42(3/4), 533-537.
\item Jain, A. K. (1976). Estimate of Bhattacharyya Distance. IEEE Transactions
on Systems Man and Cybernetics, 6(11), 763-766.
\item Jarque, C. M., \& Bera, A. K. (1980). Efficient tests for normality,
homoscedasticity and serial independence of regression residuals.
Economics letters, 6(3), 255-259.
\item Jarque, C. M., \& Bera, A. K. (1987). A test for normality of observations
and regression residuals. International statistical review, 55(2),
163-172.
\item Johnson, W. B., \& Lindenstrauss, J. (1984). Extensions of Lipschitz
mappings into a Hilbert space. Contemporary mathematics, 26(189-206),
1.
\item Julier, S. J., \& Uhlmann, J. K. (1996). A general method for approximating
nonlinear transformations of probability distributions. Technical
report, Robotics Research Group, Department of Engineering Science,
University of Oxford.
\item Kailath, T. (1967). The divergence and Bhattacharyya distance measures
in signal selection. Communication Technology, IEEE Transactions on,
15(1), 52-60.
\item Kalman, R. E. (1960). A new approach to linear filtering and prediction
problems. Journal of basic Engineering, 82(1), 35-45.
\item Kashyap, R. (2015). Financial Services, Economic Growth and Well-Being:
A Four Pronged Study. Indian Journal of Finance, 9(1), 9-22.
\item Kashyap, R. (2016). Fighting Uncertainty with Uncertainty. Working
Paper.
\item Kashyap, R. (2017). Notes on Uncertainty, Unintended Consequences
and Everything Else. Working Paper.
\end{doublespace}
\item Kashyap, R. (2018). Solving the Equity Risk Premium Puzzle and Inching
Towards a Theory of Everything. Institutional Investor Journals, Journal
of Private Equity, 21(2), 45-63.
\begin{doublespace}
\item Kattumannil, S. K. (2009). On Stein’s identity and its applications.
Statistics \& Probability Letters, 79(12), 1444-1449.
\item Keefer, D. L., \& Bodily, S. E. (1983). Three-point approximations
for continuous random variables. Management Science, 29(5), 595-609.
\item Keynes, J. M. (1937). The General Theory of Employment. The Quarterly
Journal of Economics, 51(2), 209-223.
\item Keynes, J. M. (1971). The Collected Writings of John Maynard Keynes:
In 2 Volumes. A Treatise on Money. The Applied Theory of Money. Macmillan
for the Royal Economic Society. 
\item Keynes, J. M. (1973). A treatise on probability, the collected writings
of John Maynard Keynes, vol. VIII.
\item Kiani, M., Panaretos, J., Psarakis, S., \& Saleem, M. (2008). Approximations
to the normal distribution function and an extended table for the
mean range of the normal variables.
\item Kimeldorf, G., \& Sampson, A. (1973). A class of covariance inequalities.
Journal of the American Statistical Association, 68(341), 228-230.
\item Kon, S. J. (1984). Models of stock returns—a comparison. The Journal
of Finance, 39(1), 147-165.
\item Korth, H. F., \& Silberschatz, A. (1997). Database research faces
the information explosion. Communications of the ACM, 40(2), 139-142.
\item Kullback, S., \& Leibler, R. A. (1951). On information and sufficiency.
The annals of mathematical statistics, 22(1), 79-86.
\item Lawson, T. (1985). Uncertainty and economic analysis. The Economic
Journal, 95(380), 909-927.
\item Lee, K. Y., \& Bretschneider, T. R. (2012). Separability Measures
of Target Classes for Polarimetric Synthetic Aperture Radar Imagery.
Asian Journal of Geoinformatics, 12(2).
\item Liu, J. S. (1994). Siegel's formula via Stein's identities. Statistics
\& Probability Letters, 21(3), 247-251.
\item Losq, E., \& Chateau, J. P. D. (1982). A Generalization of the CAPM
Based on a Property of the Covariance Operator. Journal of Financial
and Quantitative Analysis, 17(05), 783-797.
\item Lucas Jr, R. E. (1978). Asset prices in an exchange economy. Econometrica:
Journal of the Econometric Society, 1429-1445.
\item Lyons, R. (2013). Distance covariance in metric spaces. The Annals
of Probability, 41(5), 3284-3305.
\item Mahalanobis, P. C. (1936). On the generalized distance in statistics.
Proceedings of the National Institute of Sciences (Calcutta), 2, 49-55.
\item Major, C. H., \& Savin-Baden, M. (2010). An introduction to qualitative
research synthesis: Managing the information explosion in social science
research. Routledge.
\item Malkovich, J. F., \& Afifi, A. A. (1973). On tests for multivariate
normality. Journal of the American statistical association, 68(341),
176-179.
\item Mak, B., \& Barnard, E. (1996, October). Phone clustering using the
Bhattacharyya distance. In Spoken Language, 1996. ICSLP 96. Proceedings.,
Fourth International Conference on (Vol. 4, pp. 2005-2008). IEEE.
\item Mandelbrot, B., \& Taylor, H. M. (1967). On the distribution of stock
price differences. Operations research, 15(6), 1057-1062.
\item Manjunath, B. G., \& Wilhelm, S. (2012). Moments Calculation For the
Doubly Truncated Multivariate Normal Density. arXiv preprint arXiv:1206.5387.
\item Mathai, A. M., \& Saxena, R. K. (1973). Meijer's G-function. Generalized
Hypergeometric Functions with Applications in Statistics and Physical
Sciences, 1-40.
\item Matusita, K. (1955). Decision rules based on distance for problems
of fit, two samples and estimation. Annals of Mathematical Statistics,
26, 631–641.
\item McManus, H., \& Hastings, D. (2005, July). 3.4. 1 A Framework for
Understanding Uncertainty and its Mitigation and Exploitation in Complex
Systems. In INCOSE International Symposium (Vol. 15, No. 1, pp. 484-503).
\item Mehra, R., \& Prescott, E. C. (1985). The equity premium: A puzzle.
Journal of monetary Economics, 15(2), 145-161.
\item Mehra, R. (1988). On the existence and representation of equilibrium
in an economy with growth and non-stationary consumption. International
Economic Review, 131-135.
\item Mehra, R., \& Prescott, E. C. (1988). The equity risk premium: A solution?.
Journal of Monetary Economics, 22(1), 133-136.
\end{doublespace}
\item Meijer, C. S. (1936). Uber Whittakersche bezw. Besselsche funktionen
und deren produkte. Nieuw Archief voor Wiskunde, 18(2), 10-29.
\begin{doublespace}
\item Miller III, A. C., \& Rice, T. R. (1983). Discrete approximations
of probability distributions. Management science, 29(3), 352-362.
\item Miranda, M. J., \& Fackler, P. L. (2002). Applied Computational Economics
and Finance.
\item Nelson, J. (2010). Johnson-Lindenstrauss notes. Technical report,
MIT-CSAIL, Available at web. mit. edu/minilek/www/jl\_notes. pdf.
\item Osborne, M. F. (1959). Brownian motion in the stock market. Operations
research, 7(2), 145-173.
\item Osborne, M. F. M. (1962). Periodic structure in the Brownian motion
of stock prices. Operations Research, 10(3), 345-379.
\item Richardson, M., \& Smith, T. (1993). A test for multivariate normality
in stock returns. Journal of Business, 295-321.
\end{doublespace}
\item Rietz, T. A. (1988). The equity risk premium a solution. Journal of
monetary Economics, 22(1), 117-131.
\begin{doublespace}
\item Rubinstein, M. E. (1973). A comparative statics analysis of risk premiums.
The Journal of Business, 46(4), 605-615.
\item Rubinstein, M. (1976). The valuation of uncertain income streams and
the pricing of options. The Bell Journal of Economics, 407-425.
\item Sagan, C. (2006). Cosmos (Vol. 1). Edicions Universitat Barcelona.
\item Schweppe, F. C. (1967a). State space evaluation of the Bhattacharyya
distance between two Gaussian processes. Information and Control,
11(3), 352-372.
\item Schweppe, F. C. (1967b). On the Bhattacharyya distance and the divergence
between Gaussian processes. Information and Control, 11(4), 373-395.
\item Seijas-Macías, A., \& Oliveira, A. (2012). An approach to distribution
of the product of two normal variables. Discussiones Mathematicae
Probability and Statistics, 32(1-2), 87-99.
\item Siegel, A. F. (1993). A surprising covariance involving the minimum
of multivariate normal variables. Journal of the American Statistical
Association, 88(421), 77-80.
\item Simon, H. A. (1962). The Architecture of Complexity. Proceedings of
the American Philosophical Society, 106(6), 467-482.
\item Smith, J. E. (1993). Moment methods for decision analysis. Management
science, 39(3), 340-358.
\item Soranzo, A., \& Epure, E. (2014). Very simply explicitly invertible
approximations of normal cumulative and normal quantile function.
Applied Mathematical Sciences, 8(87), 4323-4341.
\item Sorzano, C. O. S., Vargas, J., \& Montano, A. P. (2014). A survey
of dimensionality reduction techniques. arXiv preprint arXiv:1403.2877.
\item Springer, M. D., \& Thompson, W. E. (1966). The distribution of products
of independent random variables. SIAM Journal on Applied Mathematics,
14(3), 511-526.
\item Springer, M. D., \& Thompson, W. E. (1970). The distribution of products
of beta, gamma and Gaussian random variables. SIAM Journal on Applied
Mathematics, 18(4), 721-737.
\item Stein, C. M. (1973). Estimation of the mean of a multivariate normal
distribution. Proceedings of the Prague Symposium of Asymptotic Statistics.
\item Stein, C. M. (1981). Estimation of the mean of a multivariate normal
distribution. The annals of Statistics, 1135-1151.
\item Sweeney, L. (2001). Information explosion. Confidentiality, disclosure,
and data access: Theory and practical applications for statistical
agencies, 43-74.
\item Székely, G. J., \& Rizzo, M. L. (2005). A new test for multivariate
normality. Journal of Multivariate Analysis, 93(1), 58-80.
\item Székely, G. J., Rizzo, M. L., \& Bakirov, N. K. (2007). Measuring
and testing dependence by correlation of distances. The Annals of
Statistics, 35(6), 2769-2794.
\item Székely, G. J., \& Rizzo, M. L. (2009). Brownian distance covariance.
The annals of applied statistics, 3(4), 1236-1265.
\item Székely, G. J., \& Rizzo, M. L. (2013). Energy statistics: A class
of statistics based on distances. Journal of statistical planning
and inference, 143(8), 1249-1272.
\item Tauchen, G. E., \& Pitts, M. (1983). The Price Variability-Volume
Relationship on Speculative Markets. Econometrica, 51(2), 485-505.
\item Teerapabolarn, K. (2013). Stein's identity for discrete distributions.
International Journal of Pure and Applied Mathematics, 83(4), 565.
\item Thorne, K. (2014). The science of Interstellar. WW Norton \& Company.
\item Venkatasubramanian, S., \& Wang, Q. (2011). The Johnson-Lindenstrauss
transform: an empirical study. In Proceedings of the Meeting on Algorithm
Engineering \& Experiments (pp. 164-173). Society for Industrial and
Applied Mathematics.
\item Vernic, R., Teodorescu, S., \& Pelican, E. (2009). Two lognormal models
for real data. Annals of Ovidius University, Series Mathematics, 17(3),
263-277.
\item Wald, A., \& Wolfowitz, J. (1946). Tolerance limits for a normal distribution.
The Annals of Mathematical Statistics, 208-215.
\item Wallace, D. L. (1958). Asymptotic approximations to distributions.
The Annals of Mathematical Statistics, 29(3), 635-654.
\item Ware, R., \& Lad, F. (2003). Approximating the distribution for sums
of products of normal variables. University of Canterbury, England,
Tech. Rep. UCDMS, 15, 2003.
\end{doublespace}
\item Wasserman, S., \& Faust, K. (1994). Social network analysis: Methods
and applications (Vol. 8). Cambridge university press.
\item Watts, D. J. (1999). Networks, dynamics, and the small-world phenomenon.
American Journal of sociology, 105(2), 493-527.
\begin{doublespace}
\item Wei, K. J., \& Lee, C. F. (1988). The generalized Stein/Rubinstein
covariance formula and its application to estimate real systematic
risk. Management science, 34(10), 1266-1270.
\item Weil, P. (1989). The equity premium puzzle and the risk-free rate
puzzle. Journal of Monetary Economics, 24(3), 401-421.
\item Weisstein, E. W. (2004). Newton-cotes formulas.
\item Weitzman, M. (2007). Subjective Expectations and Asset-Return Puzzles.
American Economic Review, 97(4), 1102-1130.
\item Wilhelm, S., \& Manjunath, B. G. (2010). tmvtnorm: A package for the
truncated multivariate normal distribution. sigma, 2, 2.
\item Wilhelm, S. (2015). Package ‘tmvtnorm’.
\item Yang, M. (2008). Normal log-normal mixture, leptokurtosis and skewness.
Applied Economics Letters, 15(9), 737-742.
\item Zogheib, B., \& Hlynka, M. (2009). Approximations of the Standard
Normal Distribution. University of Windsor, Department of Mathematics
and Statistics.
\end{doublespace}
\end{enumerate}

\section*{Appendix of Mathematical Proofs and Supplementary Material}

\section{\label{sec:Appendix:-Example-Four-Physical-Dimensions}Appendix:
Example for the Merits and Limits of Four Physical Dimensions}

\begin{doublespace}
Now, imagine an object in our world, say a soda can, or, a laptop.
Observing these articles, in three dimensions, can tell us a great
deal about these items, and help us understand their nature and properties.
But now, suppose that we put light, on these objects, from different
directions, and observe the shadow in a lower dimension, which is
a flat surface. Here, we are restricted to two dimensions and perhaps
we are also losing the color of the objects. But say, we could view
the shadow, with spectacles having different colored glasses (we could,
of course, drop this additional constraint and deal only with a grey
shadow, without any loss of intuition). We could capture, a great
deal of the properties of the three dimensional object, but we are
restricted by the characteristics, that could be captured in a lower
dimension. Another point to bear in mind, is that the shadow only
projects certain points onto the flat surface, as dictated by the
properties of physics, more specifically, perhaps, the laws of light;
whereas with the JL Lemma transformation, every point in the higher
dimension is mapped to a point in the lower dimension.

When we do the JL Lemma transformation, by multiplying the original
set of observations we have, (a probability distribution) by another
probability distribution; or, a conversion from a higher dimensional
object, to a lower dimension, can be compared to putting light, on
the object in the three dimensional world, from different angles and
capturing the different shadows. The shadow changes in the two dimensional
world depending on the angle from which the light is coming (or, which
particular probability distribution, is being used for the transformation),
but the original object, lives on, in the higher world, with no change
whatsoever. There is definitely some loss, but that loss is the reality
or the limitation of the dimension we are forced to work with. If
we have one object in a lower dimension and that is the best information,
we have about it; we cannot take it to the higher dimension, to compare
it with an object, in the higher dimension. The best we can do, is
bring the object from the higher dimension, to the lower dimension
and compare them, in the lower dimension.

From our example, this is essentially, comparing a shadow that we
have, to the shadows we can produce, by bringing objects from the
higher dimension to the lower dimension. To illustrate this further,
imagine that we have just observed the shadow of a thief, and that
is the best information we have captured (say from a camera footage).
The best way to go about tracking this thief, is trying to understand
,what sort of people can produce such shadows, and this might reduce
the range of possibilities, as we try to look for our burglar. For
someone, that would say that it is not too much information, surely
that is more data points, than having nothing and we should be thankful
and look into ways in which, we can use, the only lead we have (in
this case), in tracking down the shadowy suspect.
\end{doublespace}

\section{\label{sec:Appendix:-Uncertainty-and}Appendix: Uncertainty and Unintended
Consequences}

\begin{doublespace}
(Lawson 1985) argues that the Keynesian view on uncertainty (that
it is generally impossible, even in probabilistic terms, to evaluate
the future outcomes of all possible current actions; Keynes 1937;
1971; 1973), far from being innocuous or destructive of economic analysis
in general, can give rise to research programs incorporating, amongst
other things, a view of rational behavior under uncertainty, which
could be potentially fruitful. (McManus \& Hastings 2005) clarify
the wide range of uncertainties that affect complex engineering systems
and present a framework to understand the risks (and opportunities)
they create and the strategies system designers can use to mitigate
or take advantage of them. (Simon 1962) points out that any attempt
to seek properties common to many sorts of complex systems (physical,
biological or social), would lead to a theory of hierarchy since a
large proportion of complex systems observed in nature exhibit hierarchic
structure; that a complex system is composed of subsystems that, in
turn, have their own subsystems, and so on. These viewpoints hold
many lessons for policy designers in the social sciences and could
be instructive for researchers looking to create metrics to compare
complex systems, keeping in mind the caveats of dynamic social systems.

A hall mark of the social sciences is the lack of objectivity. Here
we assert that objectivity is with respect to comparisons done by
different participants and that a comparison is a precursor to a decision. 
\end{doublespace}
\begin{conjecture}
\begin{doublespace}
\textbf{Despite the several advances in the social sciences,} \textbf{we
have yet to discover an objective measuring stick for comparison,
a so called, True Comparison Theory, which can be an aid for arriving
at objective decisions}. 
\end{doublespace}
\end{conjecture}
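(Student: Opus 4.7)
The plan is to support the conjecture through a two-pronged argument: first by a negative-existence demonstration via survey and exhaustion, and second by an epistemological argument grounded in the Keynesian-Lawson and Simonian viewpoints already cited in the introduction. Because the claim is an existential negation --- no True Comparison Theory has been discovered --- the logical burden is asymmetric: a single adequate candidate would refute it, so the work consists of canvassing the landscape carefully and showing that the gap persists.

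First I would make precise what would constitute an objective measuring stick, building on the preceding paragraph's assertion that objectivity is with respect to comparisons done by different participants. I would require that the procedure's output be invariant across participants applying it to the same underlying data, be free of arbitrary choices of reference unit, weighting, or functional form, and yield a ranking upon which different decision-makers would agree regardless of prior belief. Second I would enumerate the principal families of comparison tools in use in the social sciences --- index numbers and aggregation procedures, statistical decision rules, pricing-based rankings, utility orderings, and distance or divergence measures --- and in each family identify a step at which a subjective choice must be injected: a base period, a weighting vector, a kernel, a discount factor, a loss function, or a prior.

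Third I would tie the observation to the epistemological setting already established: the Lawson-Keynes view that probabilistic evaluation of future outcomes is itself often infeasible, combined with Simon's observation that social systems are hierarchic, together imply that every comparison presupposes a chosen level of aggregation and a chosen horizon, and that such choices cannot themselves be rendered objective without appeal to a yet-higher comparison. This produces a regress that no candidate theory to date terminates, which corroborates the conjecture.

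The main obstacle will be distinguishing the weaker historical reading of the conjecture (no such theory \emph{yet} discovered) from the stronger quasi-impossibility reading (no such theory \emph{can} exist). The weaker form follows from the survey step and is the form I would actually establish; the stronger would require something analogous to an Arrow-style impossibility result, exhibiting a small set of desirable axioms --- participant-invariance, unit-independence, hierarchy-independence, and non-triviality --- as jointly inconsistent, and I do not see a clean route to this within the scope of the present paper. I would therefore stop short of the stronger claim, so that the paper's own methodology --- the marriage of the Bhattacharyya distance with the Johnson-Lindenstrauss lemma --- can be offered as a constructive partial step toward the elusive True Comparison Theory, rather than as a disproof of its possibility.
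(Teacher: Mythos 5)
This statement is labelled a \emph{conjecture} in the paper, and the paper supplies no proof of it --- deliberately so. The surrounding text even likens the search for a True Comparison Theory to the alchemists' search for a way to turn everything into gold, and then uses the \emph{absence} of such a theory as motivation for the constructive tools (Bhattacharyya distance plus the Johnson--Lindenstrauss lemma) that the paper actually develops. So there is no paper proof to compare your proposal against, and the first thing to recognize is that you are attempting to prove something the author intentionally leaves open as a philosophical premise.

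As a proof attempt, your plan has a genuine gap that you yourself half-acknowledge: a negative existential of this kind cannot be established by survey and exhaustion. Your step two enumerates ``the principal families of comparison tools'' and finds a subjective choice in each, but no finite canvass of existing methodologies can certify that the enumeration is complete, nor that no undiscovered or unexamined procedure meets your criteria; ``has yet to be discovered'' ranges over all of social science and is not a checkable predicate. Your step three (the regress argument from Lawson--Keynes and Simon) is the more promising route, but as stated it assumes what it needs to prove --- namely that the meta-level choice of aggregation and horizon cannot itself be made objectively --- so it corroborates rather than establishes the claim. The honest conclusion of your own analysis is the one you reach in your final paragraph: the strong reading would require an Arrow-style impossibility theorem from explicit axioms (participant-invariance, unit-independence, hierarchy-independence, non-triviality), which neither you nor the paper attempts. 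Absent that, the statement remains exactly what the paper calls it --- a conjecture --- and your proposal should be read as a well-organized articulation of the evidence for it, not a proof.
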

\begin{doublespace}
The search for such a theory could again (Kashyap 2017) be compared,
to the medieval alchemists’ obsession with turning everything into
gold. For our present purposes, the lack of such an objective measure
means that the difference in comparisons, as assessed by different
participants, can effect different decisions under the same set of
circumstances. Hence, despite all the uncertainty in the social sciences,
the one thing we can be almost certain about is the subjectivity in
all decision making. Restricted to the particular sub-universe of
economic and financial theory, this translates to the lack of an objective
measuring stick of value, a so called, True Value Theory. This lack
of an objective measure of value, (hereafter, value will be synonymously
referred to as the price of an instrument), leads to dissimilar decisions
and actions by different participants, making prices react at varying
degrees and at varying speeds to the pull of diverse macro and micro
factors.
\end{doublespace}
\begin{doublespace}

\section{\label{sec:Appendix-of-Proofs}Appendix: Mathematical Proofs}
\end{doublespace}
\begin{doublespace}

\subsection{\label{subsec:Notation-and-Terminology}Notation and Terminology
for Key Results}
\end{doublespace}
\begin{itemize}
\begin{doublespace}
\item $D_{BC}\left(p_{i},p_{i}^{\prime}\right)$, the Bhattacharyya Distance
between two multinomial populations each consisting of $k$ categories
classes with associated probabilities $p_{1},p_{2},...,p_{k}$ and
$p_{1}^{\prime},p_{2}^{\prime},...,p_{k}^{\prime}$ respectively.
\item $\rho\left(p_{i},p_{i}^{\prime}\right)$, the Bhattacharyya Coefficient.
\item $d\left(p_{i},p_{i}^{\prime}\right)$, the modified Bhattacharyya
Metric.
\item $\chi^{2}\left(p_{i},p_{i}^{\prime}\right)$ Chi-Squared measure.
\item $D_{H-M}\left(p_{i},p_{i}^{\prime}\right)$ is the Hellinger or Matusita
Distance.
\item $D_{BC-N}(p,q)$ is the Bhattacharyya distance between $p$ and $q$
normal distributions or classes.
\item $D_{BC-MN}\left(p_{1},p_{2}\right)$ is the Bhattacharyya distance
between two multivariate normal distributions, $\boldsymbol{p_{1}},\boldsymbol{p_{2}}$
where $\boldsymbol{p_{i}}\sim\mathcal{N}(\boldsymbol{\mu}_{i},\,\boldsymbol{\Sigma}_{i})$.
\item $D_{BC-TN}(p,q)$ is the Bhattacharyya distance between $p$ and $q$
truncated normal distributions or classes.
\item $D_{BC-TMN}\left(p_{1},p_{2}\right)$ is the Bhattacharyya distance
between two truncated multivariate normal distributions, $\boldsymbol{p_{1}},\boldsymbol{p_{2}}$
where $\boldsymbol{p_{i}}\sim\mathcal{N}(\boldsymbol{\mu}_{i},\,\boldsymbol{\Sigma}_{i},\,\boldsymbol{a}_{i},\,\boldsymbol{b}_{i})$.
\item $D_{\mathrm{KL}}(P\|Q)$ is the Kullback-Leibler divergence of $Q$
from $P$.
\end{doublespace}
\end{itemize}
\begin{doublespace}

\subsection{\label{subsec:Proof-of-Proposition: Normal_Lognormal}Proof of Proposition
\ref{prop:Normal_Lognormal_The-density-function}}
\end{doublespace}
\begin{proof}
\begin{doublespace}
From the properties of a bivariate distribution (Bertsekas 2002) of
two normal random variables, $V,W$, the conditional expectation is
given by
\[
E\left[V\mid W\right]=E\left[V\right]+\frac{cov\left(V,W\right)}{\sigma_{W}^{2}}\left(W-E\left[W\right]\right)
\]
This is a linear function of $W$ and is a normal distribution. The
conditional distribution of $V$ given $W$ is normal with mean $E[V\mid W]$
and variance, $\sigma_{V\mid W}^{2}$ using the correlation coefficient
$\rho_{VW}$ between them is,
\[
Var\left(V\mid W\right)=\sigma_{V\mid W}^{2}=(1-\rho_{VW}^{2})\sigma_{V}^{2}
\]
This is easily established by setting, 
\[
\hat{V}=\rho_{VW}\frac{\sigma_{V}}{\sigma_{W}}\left(W-\mu_{W}\right)
\]
\[
\tilde{V}=V-\hat{V}
\]
\[
E\left[\tilde{V}\right]=E\left[V\right]-E\left[\rho_{VW}\frac{\sigma_{V}}{\sigma_{W}}\left(W-\mu_{W}\right)\right]=E\left[V\right]
\]
$W\text{ and }\tilde{V}$ are independent as are $\hat{V}\text{ and }\tilde{V}$,
since $\hat{V}$ is a scalar multiple of $W\text{ and }E\left[W\tilde{V}\right]=\mu_{W}\mu_{V}$
implying that $Cov\left(W,\tilde{V}\right)=E\left[W\tilde{V}\right]-\mu_{W}\mu_{V}=0$
as shown below,
\[
E\left[W\tilde{V}\right]=E\left[WV\right]-\rho_{VW}\frac{\sigma_{V}}{\sigma_{W}}E\left[W^{2}\right]+\rho_{VW}\frac{\sigma_{V}}{\sigma_{W}}E\left[W\right]\mu_{W}
\]
\[
=\rho_{VW}\sigma_{V}\sigma_{W}+\mu_{W}\mu_{V}-\rho_{VW}\frac{\sigma_{V}}{\sigma_{W}}\left\{ \sigma_{W}^{2}+\mu_{W}^{2}\right\} +\rho_{VW}\frac{\sigma_{V}}{\sigma_{W}}\mu_{W}^{2}=\mu_{W}\mu_{V}
\]
\[
E\left[V\mid W\right]=E\left[\tilde{V}+\hat{V}\mid W\right]=E\left[\tilde{V}\mid W\right]+E\left[\hat{V}\mid W\right]
\]
\[
=\mu_{V}+\rho_{VW}\frac{\sigma_{V}}{\sigma_{W}}\left(W-\mu_{W}\right)
\]
\[
Var\left(V\mid W\right)=Var\left(\hat{V}+\tilde{V}\mid W\right)
\]
\[
=Var\left(\tilde{V}\mid W\right)=Var\left(\tilde{V}\right)=Var\left(V-\rho_{VW}\frac{\sigma_{V}}{\sigma_{W}}W\right)
\]
\[
=Var\left(V\right)+\rho_{VW}^{2}\frac{\sigma_{V}^{2}}{\sigma_{W}^{2}}Var\left(W\right)-2\rho_{VW}\frac{\sigma_{V}}{\sigma_{W}}Cov\left(V,W\right)
\]
\[
=\sigma_{V}^{2}+\rho_{VW}^{2}\sigma_{V}^{2}-2\rho_{VW}^{2}\sigma_{V}^{2}
\]
\[
=\sigma_{V}^{2}\left(1-\rho_{VW}^{2}\right)
\]
Now consider the random variable, $U$, given by,
\[
U=Xe^{Y}
\]
Here, $X$ and $Y$ are random variables with correlation coefficient,
$\rho$ satisfying, 
\[
\left[\begin{array}{c}
X\\
Y
\end{array}\right]\sim N\left(\left[\begin{array}{c}
\mu_{X}\\
\mu_{Y}
\end{array}\right],\left[\begin{array}{cc}
\sigma_{X}^{2} & \rho\sigma_{X}\sigma_{Y}\\
\rho\sigma_{X}\sigma_{Y} & \sigma_{Y}^{2}
\end{array}\right]\right)
\]
The conditional distribution of $U$ given $Y$ is normal with mean,
variance and density given by,
\[
E\left[U\mid Y\right]=E\left[Xe^{Y}\mid Y\right]=e^{Y}E\left[X\mid Y\right]
\]
\[
=e^{Y}\left\{ \mu_{X}+\frac{cov\left(X,Y\right)}{\sigma_{Y}^{2}}\left(Y-\mu_{Y}\right)\right\} 
\]
\[
=e^{Y}\left\{ \mu_{X}+\frac{\rho\sigma_{X}\sigma_{Y}}{\sigma_{Y}^{2}}\left(Y-\mu_{Y}\right)\right\} 
\]
\[
=e^{Y}\left\{ \mu_{X}+\frac{\rho\sigma_{X}}{\sigma_{Y}}\left(Y-\mu_{Y}\right)\right\} 
\]
\[
\sigma_{U\mid Y}^{2}=Var\left(U\mid Y\right)=Var\left(Xe^{Y}\mid Y\right)=e^{2Y}Var\left(X\mid Y\right)
\]
\[
=e^{2Y}(1-\rho^{2})\sigma_{X}^{2}
\]
\[
f_{U\mid Y}\left(u\mid Y\right)\sim N\left(e^{Y}\left\{ \mu_{X}+\frac{\rho\sigma_{X}}{\sigma_{Y}}\left(Y-\mu_{Y}\right)\right\} ,e^{2Y}(1-\rho^{2})\sigma_{X}^{2}\right)
\]
We observe that this conditional distribution is undefined for $\sigma_{Y}=0$
when $Y$ degenerates to a constant. We then get the joint density
of $U,Y$ as, 
\[
f_{UY}\left(u,y\right)=f_{U\mid Y}\left(u\mid Y\right)\;\times\;f_{Y}\left(y\right)
\]
\[
=\left\{ 2\pi e^{2y}(1-\rho^{2})\sigma_{X}^{2}\right\} ^{-\frac{1}{2}}\;e^{-\frac{\left[u-e^{y}\left\{ \mu_{X}+\frac{\rho\sigma_{X}}{\sigma_{Y}}\left(y-\mu_{Y}\right)\right\} \right]^{2}}{2e^{2y}(1-\rho^{2})\sigma_{X}^{2}}}\;\times\;\left\{ 2\pi\sigma_{Y}^{2}\right\} ^{-\frac{1}{2}}\;e^{-\frac{\left[y-\mu_{Y}\right]^{2}}{2\sigma_{Y}^{2}}}
\]
The marginal density of $U$ is given by, 
\[
f_{U}\left(u\right)=\int_{-\infty}^{\infty}\left\{ 2\pi e^{2y}(1-\rho^{2})\sigma_{X}^{2}\right\} ^{-\frac{1}{2}}\;e^{-\frac{\left[u-e^{y}\left\{ \mu_{X}+\frac{\rho\sigma_{X}}{\sigma_{Y}}\left(y-\mu_{Y}\right)\right\} \right]^{2}}{2e^{2y}(1-\rho^{2})\sigma_{X}^{2}}}\;\times\;\left\{ 2\pi\sigma_{Y}^{2}\right\} ^{-\frac{1}{2}}\;e^{-\frac{\left[y-\mu_{Y}\right]^{2}}{2\sigma_{Y}^{2}}}\;dy
\]
In our case, we have, $\mu_{X}=0,\sigma_{X}^{2}=1/k\text{ and }\rho=0$
giving, 
\[
\left[\begin{array}{c}
X\\
Y
\end{array}\right]\sim N\left(\left[\begin{array}{c}
0\\
\mu_{Y}
\end{array}\right],\left[\begin{array}{cc}
\frac{1}{k} & 0\\
0 & \sigma_{Y}^{2}
\end{array}\right]\right)
\]
\[
f_{U}\left(u\right)=\int_{-\infty}^{\infty}\left\{ \frac{2\pi e^{2y}}{k}\right\} ^{-\frac{1}{2}}\;e^{-\frac{ku^{2}}{2e^{2y}}}\;\times\;\left\{ 2\pi\sigma_{Y}^{2}\right\} ^{-\frac{1}{2}}\;e^{-\frac{\left[y-\mu_{Y}\right]^{2}}{2\sigma_{Y}^{2}}}\;dy
\]
\[
f_{U}\left(u\right)=\frac{\sqrt{k}}{2\pi\sigma_{Y}}\int_{-\infty}^{\infty}\;e^{-y-\frac{ku^{2}}{2e^{2y}}-\frac{\left[y-\mu_{Y}\right]^{2}}{2\sigma_{Y}^{2}}}dy
\]
\end{doublespace}
\end{proof}
\begin{doublespace}

\subsection{\label{subsec:Proof-of-Proposition: Normal_Normal}Proof of Proposition
\ref{prop:Normal_Normal_The-density-function}}
\end{doublespace}
\begin{proof}
\begin{doublespace}
First we establish the density of the product of two independent random
variables. Let $W=XY$, a continuous random variable, product of two
independent continuous random variables $X$ and $Y$. The distribution
function of $W$ is, 
\[
F_{W}\left(w\right)=\int_{\left\{ \left(x,y\right)|xy\leq w\right\} }f_{XY}\left(x,y\right)dxdy
\]
Here, $\left\{ \left(x,y\right)|xy\leq w\right\} =\left\{ -\infty<x\leq0,\frac{w}{x}\leq y<\infty\right\} \cup\left\{ 0\leq x\leq\infty,-\infty<y\leq\frac{w}{x}\right\} $.
We can write the above as,
\[
F_{W}\left(w\right)=\int_{-\infty}^{0}\int_{\frac{w}{x}}^{\infty}f_{XY}\left(x,y\right)dydx+\int_{0}^{\infty}\int_{-\infty}^{\frac{w}{x}}f_{XY}\left(x,y\right)dydx
\]
Differentiating with respect to, $w$, using the Leibniz integral
rule gives the required density,
\[
f_{W}\left(w\right)=\int_{-\infty}^{0}\left(-\frac{1}{x}\right)f_{XY}\left(x,\frac{w}{x}\right)dx+\int_{0}^{\infty}\left(\frac{1}{x}\right)f_{XY}\left(x,\frac{w}{x}\right)dx
\]
\[
f_{W}\left(w\right)=\int_{-\infty}^{\infty}\left(\frac{1}{\left|x\right|}\right)f_{XY}\left(x,\frac{w}{x}\right)dx=\int_{-\infty}^{\infty}\left(\frac{1}{\left|x\right|}\right)f_{X}\left(x\right)f_{Y}\left(\frac{w}{x}\right)dx
\]
In our case, the density of the two independent normal variables,
where, $\mu_{X}$ is the mean and $\sigma_{X}^{2}$ is the variance
of the $X$ distribution, is given by,
\[
f_{X}\left(x\;|\;\mu_{X},\sigma_{X}^{2}\right)=\frac{1}{\sigma_{X}\sqrt{2\pi}}\;e^{-\frac{\left(x-\mu_{X}\right)^{2}}{2\sigma_{X}^{2}}}\;;\quad f_{Y}\left(y\;|\;0,\frac{1}{k}\right)=\sqrt{\frac{k}{2\pi}}\;e^{-\frac{k\left(y\right)^{2}}{2}}
\]
 
\[
f_{W}\left(w\right)=\int_{-\infty}^{\infty}\left(\frac{1}{\left|x\right|}\right)\frac{1}{\sigma_{X}\sqrt{2\pi}}\;e^{-\frac{\left(x-\mu_{X}\right)^{2}}{2\sigma_{X}^{2}}}\sqrt{\frac{k}{2\pi}}\;e^{-\frac{k\left(\frac{w}{x}\right)^{2}}{2}}dx
\]

Leibniz Integral Rule: Let $f(x,\theta)$ be a function such that
$f_{\theta}(x,\theta)$ exists, and is continuous. Then, 
\[
\frac{\mathrm{d}}{\mathrm{d}\theta}\left(\int_{a(\theta)}^{b(\theta)}f(x,\theta)\,\mathrm{d}x\right)=\int_{a(\theta)}^{b(\theta)}\partial_{\theta}f(x,\theta)\,\mathrm{d}x\,+\,f\big(b(\theta),\theta\big)\cdot b'(\theta)\,-\,f\big(a(\theta),\theta\big)\cdot a'(\theta)
\]
where the partial derivative of $f$ indicates that inside the integral
only the variation of $f(x,)$ with $\theta$ is considered in taking
the derivative. Extending the above to the case of double integrals
gives,

\begin{eqnarray*}
{\frac{\mathrm{d}}{\mathrm{d}t}}\left(\int_{c(t)}^{d(t)}\int_{a(t)}^{b(t)}f(x,y)\,\mathrm{d}x\,\mathrm{d}y\right) & = & \int_{c(t)}^{d(t)}\left\{ \,f{\big(}b(t),y{\big)}\cdot b'(t)\,-\,f{\big(}a(t),y{\big)}\cdot a'(t)\right\} \,\mathrm{d}y\,\\
 &  & +\int_{a(t)}^{b(t)}\left\{ \,f(d(t),x)\cdot d'(t)-\,f(c(t),x)\cdot c'(t)\right\} \,\mathrm{d}x\,
\end{eqnarray*}
\end{doublespace}
\end{proof}
\begin{doublespace}

\subsection{\label{subsec:Proof-of-Proposition: Truncated Normal}Proof of Proposition
\ref{prop:Truncated_N_The-Bhattacharyya-coefficient}}
\end{doublespace}
\begin{proof}
\begin{doublespace}
Suppose we have two truncated normal distributions $p,q$ with density
functions. 
\[
f_{p}\left(x\mid\mu_{p},\sigma_{p}^{2},a,b\right)=\begin{cases}
\frac{\frac{1}{\sigma_{p}}\phi\left(\frac{x-\mu_{p}}{\sigma_{p}}\right)}{\Phi\left(\frac{b-\mu_{p}}{\sigma_{p}}\right)-\Phi\left(\frac{a-\mu_{p}}{\sigma_{p}}\right)} & \quad;a\leq x\leq b\\
0 & ;\text{otherwise}
\end{cases}
\]
\[
f_{q}\left(x\mid\mu_{q},\sigma_{q}^{2},c,d\right)=\begin{cases}
\frac{\frac{1}{\sigma_{q}}\phi\left(\frac{x-\mu_{q}}{\sigma_{q}}\right)}{\Phi\left(\frac{d-\mu_{q}}{\sigma_{q}}\right)-\Phi\left(\frac{c-\mu_{q}}{\sigma_{q}}\right)} & \quad;c\leq x\leq d\\
0 & ;\text{otherwise}
\end{cases}
\]
Here, $\mu_{p}$ is the mean and $\sigma_{p}^{2}$ is the variance
of the $p$ distribution with bounds $a,b$. The Bhattacharyya coefficient
is given by the below when the distributions overlap and it is zero
otherwise,
\[
\rho\left(p,q\right)=\int_{l=\min\left(a,c\right)}^{u=\min\left(b,d\right)}\sqrt{f_{p}\left(x\mid\mu_{p},\sigma_{p}^{2},a,b\right)f_{q}\left(x\mid\mu_{q},\sigma_{q}^{2},c,d\right)}dx
\]
\[
=\int_{l}^{u}\sqrt{\frac{\frac{1}{\sigma_{p}}\phi\left(\frac{x-\mu_{p}}{\sigma_{p}}\right)}{\left[\Phi\left(\frac{b-\mu_{p}}{\sigma_{p}}\right)-\Phi\left(\frac{a-\mu_{p}}{\sigma_{p}}\right)\right]}\frac{\frac{1}{\sigma_{q}}\phi\left(\frac{x-\mu_{q}}{\sigma_{q}}\right)}{\left[\Phi\left(\frac{d-\mu_{q}}{\sigma_{q}}\right)-\Phi\left(\frac{c-\mu_{q}}{\sigma_{q}}\right)\right]}}dx
\]
\[
\int_{l}^{u}\sqrt{\phi\left(\frac{x-\mu_{p}}{\sigma_{p}}\right)\phi\left(\frac{x-\mu_{q}}{\sigma_{q}}\right)}dx=\frac{1}{\sqrt{2\pi}}\int_{l}^{u}\;\sqrt{\exp\left[-\frac{\left(x-\mu_{p}\right)^{2}}{2\sigma_{p}^{2}}\right]\exp\left[-\frac{\left(x-\mu_{q}\right)^{2}}{2\sigma_{q}^{2}}\right]}
\]
\[
=\frac{1}{\sqrt{2\pi}}\int_{l}^{u}\;\exp\left[-\frac{\left(\sigma_{p}^{2}+\sigma_{q}^{2}\right)}{4\sigma_{p}^{2}\sigma_{q}^{2}}\left\{ x^{2}-2x\frac{\left(\mu_{p}\sigma_{q}^{2}+\mu_{q}\sigma_{p}^{2}\right)}{\left(\sigma_{p}^{2}+\sigma_{q}^{2}\right)}+\frac{\left(\mu_{p}\sigma_{q}^{2}+\mu_{q}\sigma_{p}^{2}\right)^{2}}{\left(\sigma_{p}^{2}+\sigma_{q}^{2}\right)^{2}}-\frac{\left(\mu_{p}\sigma_{q}^{2}+\mu_{q}\sigma_{p}^{2}\right)^{2}}{\left(\sigma_{p}^{2}+\sigma_{q}^{2}\right)^{2}}+\frac{\left(\mu_{p}^{2}\sigma_{q}^{2}+\mu_{q}^{2}\sigma_{p}^{2}\right)}{\left(\sigma_{p}^{2}+\sigma_{q}^{2}\right)}\right\} \right]
\]
\[
=\frac{1}{\sqrt{2\pi}}\int_{l}^{u}\;\exp\left[-\frac{1}{2}\frac{\left(\sigma_{p}^{2}+\sigma_{q}^{2}\right)}{2\sigma_{p}^{2}\sigma_{q}^{2}}\left\{ x-\frac{\left(\mu_{p}\sigma_{q}^{2}+\mu_{q}\sigma_{p}^{2}\right)}{\left(\sigma_{p}^{2}+\sigma_{q}^{2}\right)}\right\} ^{2}\right]\exp\left[-\frac{1}{4}\left\{ \frac{\left(\mu_{p}-\mu_{q}\right)^{2}}{\sigma_{p}^{2}+\sigma_{q}^{2}}\right\} \right]
\]
\[
=\sqrt{\frac{2\sigma_{p}^{2}\sigma_{q}^{2}}{\left(\sigma_{p}^{2}+\sigma_{q}^{2}\right)}}\exp\left[-\frac{1}{4}\left\{ \frac{\left(\mu_{p}-\mu_{q}\right)^{2}}{\sigma_{p}^{2}+\sigma_{q}^{2}}\right\} \right]\;\left\{ \Phi\left[\frac{u-\frac{\left(\mu_{p}\sigma_{q}^{2}+\mu_{q}\sigma_{p}^{2}\right)}{\left(\sigma_{p}^{2}+\sigma_{q}^{2}\right)}}{\sqrt{\frac{2\sigma_{p}^{2}\sigma_{q}^{2}}{\left(\sigma_{p}^{2}+\sigma_{q}^{2}\right)}}}\right]-\Phi\left[\frac{l-\frac{\left(\mu_{p}\sigma_{q}^{2}+\mu_{q}\sigma_{p}^{2}\right)}{\left(\sigma_{p}^{2}+\sigma_{q}^{2}\right)}}{\sqrt{\frac{2\sigma_{p}^{2}\sigma_{q}^{2}}{\left(\sigma_{p}^{2}+\sigma_{q}^{2}\right)}}}\right]\right\} 
\]
\[
D_{BC-TN}(p,q)=-\ln\left[\rho\left(p,q\right)\right]
\]
\begin{eqnarray*}
D_{BC-TN}(p,q) & = & \frac{1}{4}\left(\frac{(\mu_{p}-\mu_{q})^{2}}{\sigma_{p}^{2}+\sigma_{q}^{2}}\right)+\frac{1}{4}\ln\left(\frac{1}{4}\left(\frac{\sigma_{p}^{2}}{\sigma_{q}^{2}}+\frac{\sigma_{q}^{2}}{\sigma_{p}^{2}}+2\right)\right)\\
 &  & +\frac{1}{2}\ln\left[\Phi\left(\frac{b-\mu_{p}}{\sigma_{p}}\right)-\Phi\left(\frac{a-\mu_{p}}{\sigma_{p}}\right)\right]+\frac{1}{2}\ln\left[\Phi\left(\frac{d-\mu_{q}}{\sigma_{q}}\right)-\Phi\left(\frac{c-\mu_{q}}{\sigma_{q}}\right)\right]\\
 &  & -\ln\left\{ \Phi\left[\frac{u-\frac{\left(\mu_{p}\sigma_{q}^{2}+\mu_{q}\sigma_{p}^{2}\right)}{\left(\sigma_{p}^{2}+\sigma_{q}^{2}\right)}}{\sqrt{\frac{2\sigma_{p}^{2}\sigma_{q}^{2}}{\left(\sigma_{p}^{2}+\sigma_{q}^{2}\right)}}}\right]-\Phi\left[\frac{l-\frac{\left(\mu_{p}\sigma_{q}^{2}+\mu_{q}\sigma_{p}^{2}\right)}{\left(\sigma_{p}^{2}+\sigma_{q}^{2}\right)}}{\sqrt{\frac{2\sigma_{p}^{2}\sigma_{q}^{2}}{\left(\sigma_{p}^{2}+\sigma_{q}^{2}\right)}}}\right]\right\} 
\end{eqnarray*}
Let, $\nu=\frac{\left(\mu_{p}\sigma_{q}^{2}+\mu_{q}\sigma_{p}^{2}\right)}{\left(\sigma_{p}^{2}+\sigma_{q}^{2}\right)}$
and $\varsigma=\sqrt{\frac{2\sigma_{p}^{2}\sigma_{q}^{2}}{\left(\sigma_{p}^{2}+\sigma_{q}^{2}\right)}}$.
Looking at conditions when $D_{BC-TN}(p,q)\geq D_{BC-N}(p,q)$ gives,
\[
\frac{1}{2}\ln\left[\Phi\left(\frac{b-\mu_{p}}{\sigma_{p}}\right)-\Phi\left(\frac{a-\mu_{p}}{\sigma_{p}}\right)\right]+\frac{1}{2}\ln\left[\Phi\left(\frac{d-\mu_{q}}{\sigma_{q}}\right)-\Phi\left(\frac{c-\mu_{q}}{\sigma_{q}}\right)\right]\geq\ln\left\{ \Phi\left[\frac{u-\nu}{\varsigma}\right]-\Phi\left[\frac{l-\nu}{\varsigma}\right]\right\} 
\]
\[
\sqrt{\left[\Phi\left(\frac{b-\mu_{p}}{\sigma_{p}}\right)-\Phi\left(\frac{a-\mu_{p}}{\sigma_{p}}\right)\right]\left[\Phi\left(\frac{d-\mu_{q}}{\sigma_{q}}\right)-\Phi\left(\frac{c-\mu_{q}}{\sigma_{q}}\right)\right]}\geq\left\{ \Phi\left[\frac{u-\nu}{\varsigma}\right]-\Phi\left[\frac{l-\nu}{\varsigma}\right]\right\} 
\]
For completeness, let us also consider two univariate normal distributions,
$p,q$ with density function, 
\[
f_{p}\left(x\;|\;\mu_{p},\sigma_{p}^{2}\right)=\frac{1}{\sigma_{p}\sqrt{2\pi}}\;e^{-\frac{\left(x-\mu_{p}\right)^{2}}{2\sigma_{p}^{2}}}
\]
Here, $\mu_{p}$ is the mean and $\sigma_{p}^{2}$ is the variance
of the $p$ distribution. The Bhattacharyya coefficient is given by,
\[
\rho\left(p,q\right)=\int\sqrt{f_{p}\left(x\;|\;\mu_{p},\sigma_{p}^{2}\right)f_{q}\left(x\;|\;\mu_{q},\sigma_{q}^{2}\right)}dx
\]
\[
=\int\sqrt{\frac{1}{\sigma_{p}\sqrt{2\pi}}\;e^{-\frac{\left(x^{2}+\mu_{p}^{2}-2x\mu_{p}\right)}{2\sigma_{p}^{2}}}\frac{1}{\sigma_{q}\sqrt{2\pi}}\;e^{-\frac{\left(x^{2}+\mu_{q}^{2}-2x\mu_{q}\right)}{2\sigma_{q}^{2}}}}dx
\]
\[
=\int\sqrt{\frac{1}{\sigma_{p}\sqrt{2\pi}}\;\frac{1}{\sigma_{q}\sqrt{2\pi}}\;e^{-\left\{ \frac{x^{2}\left(\sigma_{p}^{2}+\sigma_{q}^{2}\right)-2x\left(\mu_{p}\sigma_{q}^{2}+\mu_{q}\sigma_{p}^{2}\right)+\left(\mu_{p}^{2}\sigma_{q}^{2}+\mu_{q}^{2}\sigma_{p}^{2}\right)}{2\sigma_{p}^{2}\sigma_{q}^{2}}\right\} }}dx
\]
\[
=\int\sqrt{\frac{1}{\sigma_{p}\sqrt{2\pi}}\;\frac{1}{\sigma_{q}\sqrt{2\pi}}\;e^{-\frac{\left(\sigma_{p}^{2}+\sigma_{q}^{2}\right)}{2\sigma_{p}^{2}\sigma_{q}^{2}}\left\{ x^{2}-2x\frac{\left(\mu_{p}\sigma_{q}^{2}+\mu_{q}\sigma_{p}^{2}\right)}{\left(\sigma_{p}^{2}+\sigma_{q}^{2}\right)}+\frac{\left(\mu_{p}^{2}\sigma_{q}^{2}+\mu_{q}^{2}\sigma_{p}^{2}\right)}{\left(\sigma_{p}^{2}+\sigma_{q}^{2}\right)}\right\} }}dx
\]
\[
=\frac{1}{\sqrt{\sigma_{p}\sigma_{q}}}\;\frac{1}{\sqrt{2\pi}}\int\;e^{-\frac{\left(\sigma_{p}^{2}+\sigma_{q}^{2}\right)}{4\sigma_{p}^{2}\sigma_{q}^{2}}\left\{ x^{2}-2x\frac{\left(\mu_{p}\sigma_{q}^{2}+\mu_{q}\sigma_{p}^{2}\right)}{\left(\sigma_{p}^{2}+\sigma_{q}^{2}\right)}+\frac{\left(\mu_{p}\sigma_{q}^{2}+\mu_{q}\sigma_{p}^{2}\right)^{2}}{\left(\sigma_{p}^{2}+\sigma_{q}^{2}\right)^{2}}-\frac{\left(\mu_{p}\sigma_{q}^{2}+\mu_{q}\sigma_{p}^{2}\right)^{2}}{\left(\sigma_{p}^{2}+\sigma_{q}^{2}\right)^{2}}+\frac{\left(\mu_{p}^{2}\sigma_{q}^{2}+\mu_{q}^{2}\sigma_{p}^{2}\right)}{\left(\sigma_{p}^{2}+\sigma_{q}^{2}\right)}\right\} }dx
\]
\[
=\frac{1}{\sqrt{\sigma_{p}\sigma_{q}}}\sqrt{\frac{2\sigma_{p}^{2}\sigma_{q}^{2}}{\left(\sigma_{p}^{2}+\sigma_{q}^{2}\right)}}\;\sqrt{\frac{\left(\sigma_{p}^{2}+\sigma_{q}^{2}\right)}{2\sigma_{p}^{2}\sigma_{q}^{2}}}\frac{1}{\sqrt{2\pi}}\int\;e^{-\frac{1}{2}\frac{\left(\sigma_{p}^{2}+\sigma_{q}^{2}\right)}{2\sigma_{p}^{2}\sigma_{q}^{2}}\left\{ x-\frac{\left(\mu_{p}\sigma_{q}^{2}+\mu_{q}\sigma_{p}^{2}\right)}{\left(\sigma_{p}^{2}+\sigma_{q}^{2}\right)}\right\} ^{2}-\frac{\left(\sigma_{p}^{2}+\sigma_{q}^{2}\right)}{4\sigma_{p}^{2}\sigma_{q}^{2}}\left\{ -\frac{\left(\mu_{p}\sigma_{q}^{2}+\mu_{q}\sigma_{p}^{2}\right)^{2}}{\left(\sigma_{p}^{2}+\sigma_{q}^{2}\right)^{2}}+\frac{\left(\mu_{p}^{2}\sigma_{q}^{2}+\mu_{q}^{2}\sigma_{p}^{2}\right)}{\left(\sigma_{p}^{2}+\sigma_{q}^{2}\right)}\right\} }dx
\]
\[
=\sqrt{\frac{2\sigma_{p}\sigma_{q}}{\left(\sigma_{p}^{2}+\sigma_{q}^{2}\right)}}\;e^{-\frac{1}{4}\frac{\left(\mu_{p}-\mu_{q}\right)^{2}}{\sigma_{p}^{2}+\sigma_{q}^{2}}}
\]
\[
\left[\because-\frac{\left(\sigma_{p}^{2}+\sigma_{q}^{2}\right)}{4\sigma_{p}^{2}\sigma_{q}^{2}}\left\{ -\frac{\left(\mu_{p}\sigma_{q}^{2}+\mu_{q}\sigma_{p}^{2}\right)^{2}}{\left(\sigma_{p}^{2}+\sigma_{q}^{2}\right)^{2}}+\frac{\left(\mu_{p}^{2}\sigma_{q}^{2}+\mu_{q}^{2}\sigma_{p}^{2}\right)}{\left(\sigma_{p}^{2}+\sigma_{q}^{2}\right)}\right\} \right.=\frac{1}{4\sigma_{p}^{2}\sigma_{q}^{2}}\left\{ \frac{\left(\mu_{p}\sigma_{q}^{2}+\mu_{q}\sigma_{p}^{2}\right)^{2}-\left(\sigma_{p}^{2}+\sigma_{q}^{2}\right)\left(\mu_{p}^{2}\sigma_{q}^{2}+\mu_{q}^{2}\sigma_{p}^{2}\right)}{\left(\sigma_{p}^{2}+\sigma_{q}^{2}\right)}\right\} 
\]
\[
=\frac{1}{4\sigma_{p}^{2}\sigma_{q}^{2}}\left\{ \frac{\mu_{p}^{2}\sigma_{q}^{4}+\mu_{q}^{2}\sigma_{p}^{4}+2\mu_{p}\sigma_{q}^{2}\mu_{q}\sigma_{p}^{2}-\sigma_{p}^{2}\mu_{p}^{2}\sigma_{q}^{2}-\sigma_{q}^{4}\mu_{p}^{2}-\sigma_{p}^{4}\mu_{q}^{2}-\sigma_{q}^{2}\mu_{q}^{2}\sigma_{p}^{2}}{\left(\sigma_{p}^{2}+\sigma_{q}^{2}\right)}\right\} 
\]
\[
=\frac{1}{4\sigma_{p}^{2}\sigma_{q}^{2}}\left\{ \frac{2\mu_{p}\sigma_{q}^{2}\mu_{q}\sigma_{p}^{2}-\sigma_{p}^{2}\mu_{p}^{2}\sigma_{q}^{2}-\sigma_{q}^{2}\mu_{q}^{2}\sigma_{p}^{2}}{\left(\sigma_{p}^{2}+\sigma_{q}^{2}\right)}\right\} =\left.-\frac{1}{4}\left\{ \frac{\left(\mu_{p}-\mu_{q}\right)^{2}}{\sigma_{p}^{2}+\sigma_{q}^{2}}\right\} \right]
\]
The Bhattacharyya distance then becomes,

\[
D_{BC-N}(p,q)=-\ln\left[\rho\left(p,q\right)\right]=-\ln\sqrt{\left\{ \frac{\left(\sigma_{p}^{2}+\sigma_{q}^{2}\right)^{2}}{4\sigma_{p}^{2}\sigma_{q}^{2}}\right\} ^{-\frac{1}{2}}}-\;\ln\left\{ e^{-\frac{1}{4}\frac{\left(\mu_{p}-\mu_{q}\right)^{2}}{\sigma_{p}^{2}+\sigma_{q}^{2}}}\right\} 
\]
\[
=\frac{1}{4}\ln\left(\frac{1}{4}\left(\frac{\sigma_{p}^{2}}{\sigma_{q}^{2}}+\frac{\sigma_{q}^{2}}{\sigma_{p}^{2}}+2\right)\right)+\frac{1}{4}\left(\frac{(\mu_{p}-\mu_{q})^{2}}{\sigma_{p}^{2}+\sigma_{q}^{2}}\right)
\]
\end{doublespace}
\end{proof}
\begin{doublespace}

\subsection{\label{subsec:Proof-of-Proposition: Truncated Multivariate Normal}Proof
of Proposition \ref{prop:Truncated_MN_The-Bhattacharyya-distance}}
\end{doublespace}
\begin{proof}
\begin{doublespace}
Suppose we have two truncated normal distributions $\boldsymbol{p},\boldsymbol{q}$
with density functions, 
\[
f_{\mathbf{p}}\left(x_{1},\ldots,x_{k}\mid\boldsymbol{\mu_{p}},\,\boldsymbol{\Sigma_{p}},\,\boldsymbol{a},\,\boldsymbol{b}\right)=\frac{\exp\left(-\frac{1}{2}\left({\mathbf{x}}-{\boldsymbol{\mu_{p}}}\right)^{\mathrm{T}}{\boldsymbol{\Sigma_{p}}}^{-1}\left({\mathbf{x}}-{\boldsymbol{\mu_{p}}}\right)\right)}{\int_{\boldsymbol{a}}^{\boldsymbol{b}}\exp\left(-\frac{1}{2}\left({\mathbf{x}}-{\boldsymbol{\mu_{p}}}\right)^{\mathrm{T}}{\boldsymbol{\Sigma_{p}}}^{-1}\left({\mathbf{x}}-{\boldsymbol{\mu_{p}}}\right)\right)d\boldsymbol{x};\;\boldsymbol{x}\in\boldsymbol{R}_{\boldsymbol{a}\leq\boldsymbol{x}\leq\boldsymbol{b}}^{k}}
\]
\[
f_{\mathbf{\boldsymbol{q}}}\left(x_{1},\ldots,x_{k}\mid\boldsymbol{\mu_{\boldsymbol{q}}},\,\boldsymbol{\Sigma_{\boldsymbol{q}}},\,\boldsymbol{c},\,\boldsymbol{d}\right)=\frac{\exp\left(-\frac{1}{2}\left({\mathbf{x}}-{\boldsymbol{\mu_{p}}}\right)^{\mathrm{T}}{\boldsymbol{\Sigma_{\boldsymbol{q}}}}^{-1}\left({\mathbf{x}}-{\boldsymbol{\mu_{\boldsymbol{q}}}}\right)\right)}{\int_{\boldsymbol{c}}^{\boldsymbol{d}}\exp\left(-\frac{1}{2}\left({\mathbf{x}}-{\boldsymbol{\mu_{p}}}\right)^{\mathrm{T}}{\boldsymbol{\Sigma_{\boldsymbol{q}}}}^{-1}\left({\mathbf{x}}-{\boldsymbol{\mu_{\boldsymbol{q}}}}\right)\right)d\boldsymbol{x};\;\boldsymbol{x}\in\boldsymbol{R}_{\boldsymbol{\boldsymbol{c}}\leq\boldsymbol{x}\leq\boldsymbol{d}}^{k}}
\]
Here, $\boldsymbol{\mu_{p}}$ is the mean vector and $\boldsymbol{\Sigma_{p}}$
is the symmetric positive definite covariance matrix of the $\boldsymbol{p}$
distribution and the integrals are $k$ dimensional integrals with
lower and upper bounds given by the vectors $\left(\boldsymbol{a},\boldsymbol{b}\right)$
and $\boldsymbol{\boldsymbol{x}\in\boldsymbol{R}_{\boldsymbol{a}\leq\boldsymbol{x}\leq\boldsymbol{b}}^{k}}$
. The Bhattacharyya coefficient is given by,
\[
\rho\left(\boldsymbol{p},\boldsymbol{q}\right)=\int_{\boldsymbol{l=\min\left(a,\boldsymbol{c}\right)}}^{\boldsymbol{u=\min\left(b,d\right)}}\sqrt{f_{\mathbf{p}}\left(x_{1},\ldots,x_{k}\mid\boldsymbol{\mu_{p}},\,\boldsymbol{\Sigma_{p}},\,\boldsymbol{a},\,\boldsymbol{b}\right)f_{\mathbf{\boldsymbol{q}}}\left(x_{1},\ldots,x_{k}\mid\boldsymbol{\mu_{\boldsymbol{q}}},\,\boldsymbol{\Sigma_{\boldsymbol{q}}},\,\boldsymbol{c},\,\boldsymbol{d}\right)}d\boldsymbol{x};\;\boldsymbol{x}\in\boldsymbol{R}_{\boldsymbol{\min\left(a,\boldsymbol{c}\right)}\leq\boldsymbol{x}\leq\boldsymbol{\min\left(b,d\right)}}^{k}
\]
\begin{eqnarray*}
\rho\left(\boldsymbol{p},\boldsymbol{q}\right) & = & \left[\int_{\boldsymbol{a}}^{\boldsymbol{b}}\exp\left(-\frac{1}{2}\left({\mathbf{x}}-{\boldsymbol{\mu_{p}}}\right)^{\mathrm{T}}{\boldsymbol{\Sigma_{p}}}^{-1}\left({\mathbf{x}}-{\boldsymbol{\mu_{p}}}\right)\right)d\boldsymbol{x};\;\boldsymbol{x}\in\boldsymbol{R}_{\boldsymbol{a}\leq\boldsymbol{x}\leq\boldsymbol{b}}^{k}\right]^{-\frac{1}{2}}\\
 &  & \left[\int_{\boldsymbol{c}}^{\boldsymbol{d}}\exp\left(-\frac{1}{2}\left({\mathbf{x}}-{\boldsymbol{\mu_{p}}}\right)^{\mathrm{T}}{\boldsymbol{\Sigma_{\boldsymbol{q}}}}^{-1}\left({\mathbf{x}}-{\boldsymbol{\mu_{\boldsymbol{q}}}}\right)\right)d\boldsymbol{x};\;\boldsymbol{x}\in\boldsymbol{R}_{\boldsymbol{\boldsymbol{c}}\leq\boldsymbol{x}\leq\boldsymbol{d}}^{k}\right]^{-\frac{1}{2}}\\
 &  & \int_{\boldsymbol{l}}^{\boldsymbol{u}}\exp\left(-\frac{1}{4}\left\{ \left({\mathbf{x}-\mathbf{m}}\right)^{\mathrm{T}}\left({\boldsymbol{S}}^{-1}\right)\left({\mathbf{x}-\mathbf{m}}\right)+{\boldsymbol{M}}\right\} \right)d\boldsymbol{x};\;\boldsymbol{x}\in\boldsymbol{R}_{\boldsymbol{\min\left(a,\boldsymbol{c}\right)}\leq\boldsymbol{x}\leq\boldsymbol{\min\left(b,d\right)}}^{k}
\end{eqnarray*}
Here, 
\[
{\boldsymbol{S}}=\left({\boldsymbol{\Sigma_{p}}}^{-1}+{\boldsymbol{\Sigma_{q}}}^{-1}\right)^{-1}={\boldsymbol{\Sigma_{p}}}\left[{\boldsymbol{\Sigma_{q}}}+{\boldsymbol{\Sigma_{p}}}\right]^{-1}{\boldsymbol{\Sigma_{q}}}
\]
\[
{\mathbf{m}}=\left[\left({\boldsymbol{\mu_{p}}}^{\mathrm{T}}{\boldsymbol{\Sigma_{p}}}^{-1}+{\boldsymbol{\mu_{q}}}^{\mathrm{T}}{\boldsymbol{\Sigma_{q}}}^{-1}\right)\left({\boldsymbol{\Sigma_{p}}}^{-1}+{\boldsymbol{\Sigma_{q}}}^{-1}\right)^{-1}\right]^{\mathrm{T}}
\]
\begin{eqnarray*}
{\boldsymbol{M}} & = & \left({\boldsymbol{\mu_{p}}}-{\boldsymbol{\mu_{q}}}\right)^{\mathrm{T}}\left[{\boldsymbol{\Sigma_{p}}}+{\boldsymbol{\Sigma_{q}}}\right]^{-1}\left({\boldsymbol{\mu_{p}}}-{\boldsymbol{\mu_{q}}}\right)
\end{eqnarray*}
\begin{eqnarray*}
\rho\left(\boldsymbol{p},\boldsymbol{q}\right) & = & \left[\frac{1}{\sqrt{(2\pi)^{k}\left(|\boldsymbol{\Sigma_{p}}|\right)}}\int_{\boldsymbol{a}}^{\boldsymbol{b}}\exp\left(-\frac{1}{2}\left({\mathbf{x}}-{\boldsymbol{\mu_{p}}}\right)^{\mathrm{T}}{\boldsymbol{\Sigma_{p}}}^{-1}\left({\mathbf{x}}-{\boldsymbol{\mu_{p}}}\right)\right)d\boldsymbol{x};\;\boldsymbol{x}\in\boldsymbol{R}_{\boldsymbol{a}\leq\boldsymbol{x}\leq\boldsymbol{b}}^{k}\right]^{-\frac{1}{2}}\\
 &  & \left[\frac{1}{\sqrt{(2\pi)^{k}\left(|\boldsymbol{\Sigma_{q}}|\right)}}\int_{\boldsymbol{c}}^{\boldsymbol{d}}\exp\left(-\frac{1}{2}\left({\mathbf{x}}-{\boldsymbol{\mu_{p}}}\right)^{\mathrm{T}}{\boldsymbol{\Sigma_{\boldsymbol{q}}}}^{-1}\left({\mathbf{x}}-{\boldsymbol{\mu_{\boldsymbol{q}}}}\right)\right)d\boldsymbol{x};\;\boldsymbol{x}\in\boldsymbol{R}_{\boldsymbol{\boldsymbol{c}}\leq\boldsymbol{x}\leq\boldsymbol{d}}^{k}\right]^{-\frac{1}{2}}\\
 &  & \left[\exp\left(-\frac{1}{4}{\boldsymbol{M}}\right)\frac{\sqrt{\det\left({\boldsymbol{\Sigma_{p}}}{\boldsymbol{\Sigma}}^{-1}{\boldsymbol{\Sigma_{q}}}\right)}}{\left(|\boldsymbol{\Sigma_{p}}||\boldsymbol{\Sigma_{q}}|\right)^{\frac{1}{4}}}\right]\left[\frac{1}{\sqrt{(2\pi)^{k}\det\left({\boldsymbol{\Sigma_{p}}}{\boldsymbol{\Sigma}}^{-1}{\boldsymbol{\Sigma_{q}}}\right)}}\right.\\
 &  & \left.\int_{\boldsymbol{l}}^{\boldsymbol{u}}\exp\left(-\frac{1}{2}\left\{ \left({\mathbf{x}-\mathbf{m}}\right)^{\mathrm{T}}\left({\boldsymbol{\Sigma_{q}}}^{-1}\left[{\boldsymbol{\Sigma}}\right]{\boldsymbol{\Sigma_{p}}}^{-1}\right)\left({\mathbf{x}-\mathbf{m}}\right)\right\} \right)d\boldsymbol{x};\;\boldsymbol{x}\in\boldsymbol{R}_{\boldsymbol{\min\left(a,\boldsymbol{c}\right)}\leq\boldsymbol{x}\leq\boldsymbol{\min\left(b,d\right)}}^{k}\vphantom{\frac{\sqrt{(2\pi)^{k}\det\left({\boldsymbol{\Sigma_{p}}}^{-1}\right)}}{\sqrt{(2\pi)^{k}\det\left({\boldsymbol{\Sigma_{p}}}^{-1}\right)}}}\right]
\end{eqnarray*}
Here,
\[
\boldsymbol{\Sigma}=\frac{\boldsymbol{\Sigma_{p}}+\boldsymbol{\Sigma_{q}}}{2}
\]
\[
D_{BC-TMN}\left(\boldsymbol{p},\boldsymbol{q}\right)=-\ln\left[\rho\left(\boldsymbol{p},\boldsymbol{q}\right)\right]
\]
\begin{eqnarray*}
D_{BC-TMN}\left(\boldsymbol{p},\boldsymbol{q}\right) & = & \frac{1}{8}(\boldsymbol{\mu_{p}}-\boldsymbol{\mu_{q}})^{T}\boldsymbol{\Sigma}^{-1}(\boldsymbol{\mu_{p}}-\boldsymbol{\mu_{q}})+\frac{1}{2}\ln\,\left(\frac{\det\boldsymbol{\Sigma}}{\sqrt{\det\boldsymbol{\Sigma_{p}}\,\det\boldsymbol{\Sigma_{q}}}}\right)\\
 &  & +\frac{1}{2}\ln\left[\frac{1}{\sqrt{(2\pi)^{k}\left(|\boldsymbol{\Sigma_{p}}|\right)}}\int_{\boldsymbol{a}}^{\boldsymbol{b}}\exp\left(-\frac{1}{2}\left({\mathbf{x}}-{\boldsymbol{\mu_{p}}}\right)^{\mathrm{T}}{\boldsymbol{\Sigma_{p}}}^{-1}\left({\mathbf{x}}-{\boldsymbol{\mu_{p}}}\right)\right)d\boldsymbol{x};\;\boldsymbol{x}\in\boldsymbol{R}_{\boldsymbol{a}\leq\boldsymbol{x}\leq\boldsymbol{b}}^{k}\right]\\
 &  & +\frac{1}{2}\ln\left[\frac{1}{\sqrt{(2\pi)^{k}\left(|\boldsymbol{\Sigma_{q}}|\right)}}\int_{\boldsymbol{c}}^{\boldsymbol{d}}\exp\left(-\frac{1}{2}\left({\mathbf{x}}-{\boldsymbol{\mu_{p}}}\right)^{\mathrm{T}}{\boldsymbol{\Sigma_{\boldsymbol{q}}}}^{-1}\left({\mathbf{x}}-{\boldsymbol{\mu_{\boldsymbol{q}}}}\right)\right)d\boldsymbol{x};\;\boldsymbol{x}\in\boldsymbol{R}_{\boldsymbol{\boldsymbol{c}}\leq\boldsymbol{x}\leq\boldsymbol{d}}^{k}\right]\\
 &  & -\ln\left[\frac{1}{\sqrt{(2\pi)^{k}\det\left({\boldsymbol{\Sigma_{p}}}{\boldsymbol{\Sigma}}^{-1}{\boldsymbol{\Sigma_{q}}}\right)}}\right.\\
 &  & \left.\int_{\boldsymbol{l}}^{\boldsymbol{u}}\exp\left(-\frac{1}{2}\left\{ \left({\mathbf{x}-\mathbf{m}}\right)^{\mathrm{T}}\left({\boldsymbol{\Sigma_{q}}}^{-1}\left[{\boldsymbol{\Sigma}}\right]{\boldsymbol{\Sigma_{p}}}^{-1}\right)\left({\mathbf{x}-\mathbf{m}}\right)\right\} \right)d\boldsymbol{x};\;\boldsymbol{x}\in\boldsymbol{R}_{\boldsymbol{\min\left(a,\boldsymbol{c}\right)}\leq\boldsymbol{x}\leq\boldsymbol{\min\left(b,d\right)}}^{k}\vphantom{\frac{\sqrt{(2\pi)^{k}\det\left({\boldsymbol{\Sigma_{p}}}^{-1}\right)}}{\sqrt{(2\pi)^{k}\det\left({\boldsymbol{\Sigma_{p}}}^{-1}\right)}}}\right]
\end{eqnarray*}
Looking at conditions when $D_{BC-TMN}(\boldsymbol{p},\boldsymbol{q})\geq D_{BC-MN}(\boldsymbol{p},\boldsymbol{q})$
gives,
\begin{eqnarray*}
 &  & \frac{1}{2}\ln\left[\frac{1}{\sqrt{(2\pi)^{k}\left(|\boldsymbol{\Sigma_{p}}|\right)}}\int_{\boldsymbol{a}}^{\boldsymbol{b}}\exp\left(-\frac{1}{2}\left({\mathbf{x}}-{\boldsymbol{\mu_{p}}}\right)^{\mathrm{T}}{\boldsymbol{\Sigma_{p}}}^{-1}\left({\mathbf{x}}-{\boldsymbol{\mu_{p}}}\right)\right)d\boldsymbol{x};\;\boldsymbol{x}\in\boldsymbol{R}_{\boldsymbol{a}\leq\boldsymbol{x}\leq\boldsymbol{b}}^{k}\right]\\
 &  & +\frac{1}{2}\ln\left[\frac{1}{\sqrt{(2\pi)^{k}\left(|\boldsymbol{\Sigma_{q}}|\right)}}\int_{\boldsymbol{c}}^{\boldsymbol{d}}\exp\left(-\frac{1}{2}\left({\mathbf{x}}-{\boldsymbol{\mu_{p}}}\right)^{\mathrm{T}}{\boldsymbol{\Sigma_{\boldsymbol{q}}}}^{-1}\left({\mathbf{x}}-{\boldsymbol{\mu_{\boldsymbol{q}}}}\right)\right)d\boldsymbol{x};\;\boldsymbol{x}\in\boldsymbol{R}_{\boldsymbol{\boldsymbol{c}}\leq\boldsymbol{x}\leq\boldsymbol{d}}^{k}\right]\\
 & \geq & -\ln\left[\frac{1}{\sqrt{(2\pi)^{k}\det\left({\boldsymbol{\Sigma_{p}}}{\boldsymbol{\Sigma}}^{-1}{\boldsymbol{\Sigma_{q}}}\right)}}\right.\\
 &  & \left.\int_{\boldsymbol{l}}^{\boldsymbol{u}}\exp\left(-\frac{1}{2}\left\{ \left({\mathbf{x}-\mathbf{m}}\right)^{\mathrm{T}}\left({\boldsymbol{\Sigma_{q}}}^{-1}\left[{\boldsymbol{\Sigma}}\right]{\boldsymbol{\Sigma_{p}}}^{-1}\right)\left({\mathbf{x}-\mathbf{m}}\right)\right\} \right)d\boldsymbol{x};\;\boldsymbol{x}\in\boldsymbol{R}_{\boldsymbol{\min\left(a,\boldsymbol{c}\right)}\leq\boldsymbol{x}\leq\boldsymbol{\min\left(b,d\right)}}^{k}\vphantom{\frac{\sqrt{(2\pi)^{k}\det\left({\boldsymbol{\Sigma_{p}}}^{-1}\right)}}{\sqrt{(2\pi)^{k}\det\left({\boldsymbol{\Sigma_{p}}}^{-1}\right)}}}\right]
\end{eqnarray*}
For completeness, let us also consider two multivariate normal distributions,
$\boldsymbol{p},\boldsymbol{q}$ ($k$ dimensional random vectors
in our case) where, $\boldsymbol{p}\sim\mathcal{N}(\boldsymbol{\mu_{p}},\,\boldsymbol{\Sigma_{p}})$,
$\boldsymbol{q}\sim\mathcal{N}(\boldsymbol{\mu_{q}},\,\boldsymbol{\Sigma_{q}})$
with density function,
\[
f_{\mathbf{p}}\left(x_{1},\ldots,x_{k}\mid\boldsymbol{\mu_{p}},\,\boldsymbol{\Sigma_{p}}\right)=\frac{1}{\sqrt{(2\pi)^{k}|\boldsymbol{\Sigma_{p}}|}}\exp\left(-\frac{1}{2}\left({\mathbf{x}}-{\boldsymbol{\mu_{p}}}\right)^{\mathrm{T}}{\boldsymbol{\Sigma_{p}}}^{-1}\left({\mathbf{x}}-{\boldsymbol{\mu_{p}}}\right)\right)
\]
Here, $\boldsymbol{\mu_{p}}$ is the mean vector and $\boldsymbol{\Sigma_{p}}$
is the symmetric positive definite covariance matrix of the $\boldsymbol{p}$
distribution. The Bhattacharyya coefficient is given by,
\[
\rho\left(\boldsymbol{p},\boldsymbol{q}\right)=\int\cdots\int\sqrt{f_{\mathbf{p}}\left(x_{1},\ldots,x_{k}\mid\boldsymbol{\mu_{p}},\,\boldsymbol{\Sigma_{p}}\right)f_{\mathbf{q}}\left(x_{1},\ldots,x_{k}\mid\boldsymbol{\mu_{q}},\,\boldsymbol{\Sigma_{q}}\right)}dx_{1}\cdots dx_{k}
\]
\[
\rho\left(\boldsymbol{p},\boldsymbol{q}\right)=\int\cdots\int\sqrt{\frac{1}{(2\pi)^{k}\sqrt{|\boldsymbol{\Sigma_{p}}||\boldsymbol{\Sigma_{q}}|}}\exp\left(-\frac{1}{2}\left\{ \left({\mathbf{x}}-{\boldsymbol{\mu_{p}}}\right)^{\mathrm{T}}{\boldsymbol{\Sigma_{p}}}^{-1}\left({\mathbf{x}}-{\boldsymbol{\mu_{p}}}\right)+\left({\mathbf{x}}-{\boldsymbol{\mu_{q}}}\right)^{\mathrm{T}}{\boldsymbol{\Sigma_{q}}}^{-1}\left({\mathbf{x}}-{\boldsymbol{\mu_{q}}}\right)\right\} \right)}dx_{1}\cdots dx_{k}
\]
\begin{eqnarray*}
\rho\left(\boldsymbol{p},\boldsymbol{q}\right) & = & \int\cdots\int\frac{1}{(2\pi)^{\frac{k}{2}}\left(|\boldsymbol{\Sigma_{p}}||\boldsymbol{\Sigma_{q}}|\right)^{\frac{1}{4}}}\exp\left(-\frac{1}{4}\left\{ \left({\mathbf{x}-\mathbf{m}}\right)^{\mathrm{T}}\left({\boldsymbol{S}}^{-1}\right)\left({\mathbf{x}-\mathbf{m}}\right)+{\boldsymbol{M}}\right\} \right)dx_{1}\cdots dx_{k}
\end{eqnarray*}
\begin{eqnarray*}
\left({\mathbf{x}}-{\boldsymbol{\mu_{p}}}\right)^{\mathrm{T}}{\boldsymbol{\Sigma_{p}}}^{-1}\left({\mathbf{x}}-{\boldsymbol{\mu_{p}}}\right)+\left({\mathbf{x}}-{\boldsymbol{\mu_{q}}}\right)^{\mathrm{T}}{\boldsymbol{\Sigma_{q}}}^{-1}\left({\mathbf{x}}-{\boldsymbol{\mu_{q}}}\right) & =\\
{\mathbf{x}}^{\mathrm{T}}{\boldsymbol{\Sigma_{p}}}^{-1}{\mathbf{x}}-{\boldsymbol{\mu_{p}}}^{\mathrm{T}}{\boldsymbol{\Sigma_{p}}}^{-1}{\mathbf{x}}-{\mathbf{x}}^{\mathrm{T}}{\boldsymbol{\Sigma_{p}}}^{-1}{\boldsymbol{\mu_{p}}}+{\boldsymbol{\mu_{p}}}^{\mathrm{T}}{\boldsymbol{\Sigma_{p}}}^{-1}{\boldsymbol{\mu_{p}}}\\
+{\mathbf{x}}^{\mathrm{T}}{\boldsymbol{\Sigma_{q}}}^{-1}{\mathbf{x}}-{\boldsymbol{\mu_{q}}}^{\mathrm{T}}{\boldsymbol{\Sigma_{q}}}^{-1}{\mathbf{x}}-{\mathbf{x}}^{\mathrm{T}}{\boldsymbol{\Sigma_{q}}}^{-1}{\boldsymbol{\mu_{q}}}+{\boldsymbol{\mu_{q}}}^{\mathrm{T}}{\boldsymbol{\Sigma_{q}}}^{-1}{\boldsymbol{\mu_{q}}}
\end{eqnarray*}
\[
={\mathbf{x}}^{\mathrm{T}}\left({\boldsymbol{\Sigma_{p}}}^{-1}+{\boldsymbol{\Sigma_{q}}}^{-1}\right){\mathbf{x}}-2\left({\boldsymbol{\mu_{p}}}^{\mathrm{T}}{\boldsymbol{\Sigma_{p}}}^{-1}+{\boldsymbol{\mu_{q}}}^{\mathrm{T}}{\boldsymbol{\Sigma_{q}}}^{-1}\right){\mathbf{x}}+{\boldsymbol{\mu_{p}}}^{\mathrm{T}}{\boldsymbol{\Sigma_{p}}}^{-1}{\boldsymbol{\mu_{p}}}+{\boldsymbol{\mu_{q}}}^{\mathrm{T}}{\boldsymbol{\Sigma_{q}}}^{-1}{\boldsymbol{\mu_{q}}}
\]
We want this to be of the form,
\[
\left({\mathbf{x}-\mathbf{m}}\right)^{\mathrm{T}}\left({\boldsymbol{S}}^{-1}\right)\left({\mathbf{x}-\mathbf{m}}\right)+{\boldsymbol{M}}={\mathbf{x}}^{\mathrm{T}}{\boldsymbol{S}}^{-1}{\mathbf{x}}-{\mathbf{m}}^{\mathrm{T}}{\boldsymbol{S}}^{-1}{\mathbf{x}}-{\mathbf{x}}^{\mathrm{T}}{\boldsymbol{S}}^{-1}{\mathbf{m}}+{\mathbf{m}}^{\mathrm{T}}{\boldsymbol{S}}^{-1}{\mathbf{m}}+{\boldsymbol{M}}
\]
Comparing the coefficients of ${\mathbf{x}}$,
\[
{\boldsymbol{S}}^{-1}=\left({\boldsymbol{\Sigma_{p}}}^{-1}+{\boldsymbol{\Sigma_{q}}}^{-1}\right)
\]
\[
{\mathbf{m}}^{\mathrm{T}}{\boldsymbol{S}}^{-1}=\left({\boldsymbol{\mu_{p}}}^{\mathrm{T}}{\boldsymbol{\Sigma_{p}}}^{-1}+{\boldsymbol{\mu_{q}}}^{\mathrm{T}}{\boldsymbol{\Sigma_{q}}}^{-1}\right)\Rightarrow{\mathbf{m}}=\left[\left({\boldsymbol{\mu_{p}}}^{\mathrm{T}}{\boldsymbol{\Sigma_{p}}}^{-1}+{\boldsymbol{\mu_{q}}}^{\mathrm{T}}{\boldsymbol{\Sigma_{q}}}^{-1}\right)\left({\boldsymbol{\Sigma_{p}}}^{-1}+{\boldsymbol{\Sigma_{q}}}^{-1}\right)^{-1}\right]^{\mathrm{T}}
\]
\[
{\boldsymbol{M}}={\boldsymbol{\mu_{p}}}^{\mathrm{T}}{\boldsymbol{\Sigma_{p}}}^{-1}{\boldsymbol{\mu_{p}}}+{\boldsymbol{\mu_{q}}}^{\mathrm{T}}{\boldsymbol{\Sigma_{q}}}^{-1}{\boldsymbol{\mu_{q}}}-{\mathbf{m}}^{\mathrm{T}}{\boldsymbol{S}}^{-1}{\mathbf{m}}
\]
\[
={\boldsymbol{\mu_{p}}}^{\mathrm{T}}{\boldsymbol{\Sigma_{p}}}^{-1}{\boldsymbol{\mu_{p}}}+{\boldsymbol{\mu_{q}}}^{\mathrm{T}}{\boldsymbol{\Sigma_{q}}}^{-1}{\boldsymbol{\mu_{q}}}-\left({\boldsymbol{\mu_{p}}}^{\mathrm{T}}{\boldsymbol{\Sigma_{p}}}^{-1}+{\boldsymbol{\mu_{q}}}^{\mathrm{T}}{\boldsymbol{\Sigma_{q}}}^{-1}\right)\left[{\boldsymbol{S}}\right]^{\mathrm{T}}\left({\boldsymbol{\mu_{p}}}^{\mathrm{T}}{\boldsymbol{\Sigma_{p}}}^{-1}+{\boldsymbol{\mu_{q}}}^{\mathrm{T}}{\boldsymbol{\Sigma_{q}}}^{-1}\right)^{\mathrm{T}}
\]
\[
={\boldsymbol{\mu_{p}}}^{\mathrm{T}}{\boldsymbol{\Sigma_{p}}}^{-1}{\boldsymbol{\mu_{p}}}+{\boldsymbol{\mu_{q}}}^{\mathrm{T}}{\boldsymbol{\Sigma_{q}}}^{-1}{\boldsymbol{\mu_{q}}}-\left({\boldsymbol{\mu_{p}}}^{\mathrm{T}}{\boldsymbol{\Sigma_{p}}}^{-1}+{\boldsymbol{\mu_{q}}}^{\mathrm{T}}{\boldsymbol{\Sigma_{q}}}^{-1}\right)\left[{\boldsymbol{S}}\right]\left({\boldsymbol{\Sigma_{p}}}^{-1}{\boldsymbol{\mu_{p}}}+{\boldsymbol{\Sigma_{q}}}^{-1}{\boldsymbol{\mu_{q}}}\right)
\]
\begin{eqnarray*}
 & = & {\boldsymbol{\mu_{p}}}^{\mathrm{T}}{\boldsymbol{\Sigma_{p}}}^{-1}{\boldsymbol{\mu_{p}}}+{\boldsymbol{\mu_{q}}}^{\mathrm{T}}{\boldsymbol{\Sigma_{q}}}^{-1}{\boldsymbol{\mu_{q}}}-{\boldsymbol{\mu_{p}}}^{\mathrm{T}}{\boldsymbol{\Sigma_{p}}}^{-1}{\boldsymbol{S}}{\boldsymbol{\Sigma_{p}}}^{-1}{\boldsymbol{\mu_{p}}}\\
 &  & -{\boldsymbol{\mu_{q}}}^{\mathrm{T}}{\boldsymbol{\Sigma_{q}}}^{-1}{\boldsymbol{S}}{\boldsymbol{\Sigma_{p}}}^{-1}{\boldsymbol{\mu_{p}}}-{\boldsymbol{\mu_{p}}}^{\mathrm{T}}{\boldsymbol{\Sigma_{p}}}^{-1}{\boldsymbol{S}}{\boldsymbol{\Sigma_{q}}}^{-1}{\boldsymbol{\mu_{q}}}-{\boldsymbol{\mu_{q}}}^{\mathrm{T}}{\boldsymbol{\Sigma_{q}}}^{-1}{\boldsymbol{S}}{\boldsymbol{\Sigma_{q}}}^{-1}{\boldsymbol{\mu_{q}}}
\end{eqnarray*}
Using the result in (Henderson \& Searle 1981, eq-22),
\[
{\boldsymbol{S}}=\left({\boldsymbol{\Sigma_{p}}}^{-1}+{\boldsymbol{\Sigma_{q}}}^{-1}\right)^{-1}={\boldsymbol{\Sigma_{p}}}-{\boldsymbol{\Sigma_{p}}}\left({\boldsymbol{I}}+{\boldsymbol{\Sigma_{q}}}^{-1}{\boldsymbol{\Sigma_{p}}}\right)^{-1}{\boldsymbol{\Sigma_{q}}}^{-1}{\boldsymbol{\Sigma_{p}}}
\]
\[
={\boldsymbol{\Sigma_{p}}}-{\boldsymbol{\Sigma_{p}}}\left({\boldsymbol{\Sigma_{p}}}^{-1}{\boldsymbol{\Sigma_{p}}}+{\boldsymbol{\Sigma_{q}}}^{-1}{\boldsymbol{\Sigma_{p}}}\right)^{-1}{\boldsymbol{\Sigma_{q}}}^{-1}{\boldsymbol{\Sigma_{p}}}
\]
\[
={\boldsymbol{\Sigma_{p}}}-{\boldsymbol{\Sigma_{p}}}{\boldsymbol{\Sigma_{p}}}^{-1}\left({\boldsymbol{\Sigma_{p}}}^{-1}+{\boldsymbol{\Sigma_{q}}}^{-1}\right)^{-1}{\boldsymbol{\Sigma_{q}}}^{-1}{\boldsymbol{\Sigma_{p}}}
\]
\[
\Rightarrow\left({\boldsymbol{\Sigma_{p}}}^{-1}+{\boldsymbol{\Sigma_{q}}}^{-1}\right)^{-1}\left[{\boldsymbol{I}}+{\boldsymbol{\Sigma_{q}}}^{-1}{\boldsymbol{\Sigma_{p}}}\right]={\boldsymbol{\Sigma_{p}}}
\]
\[
\Rightarrow\left({\boldsymbol{\Sigma_{p}}}^{-1}+{\boldsymbol{\Sigma_{q}}}^{-1}\right)^{-1}={\boldsymbol{\Sigma_{p}}}\left[{\boldsymbol{\Sigma_{q}}}^{-1}{\boldsymbol{\Sigma_{q}}}+{\boldsymbol{\Sigma_{q}}}^{-1}{\boldsymbol{\Sigma_{p}}}\right]^{-1}
\]
\[
\Rightarrow\left({\boldsymbol{\Sigma_{p}}}^{-1}+{\boldsymbol{\Sigma_{q}}}^{-1}\right)^{-1}={\boldsymbol{\Sigma_{p}}}\left[{\boldsymbol{\Sigma_{q}}}+{\boldsymbol{\Sigma_{p}}}\right]^{-1}{\boldsymbol{\Sigma_{q}}}
\]
From symmetry we can write,
\[
{\boldsymbol{S}}=\left({\boldsymbol{\Sigma_{p}}}^{-1}+{\boldsymbol{\Sigma_{q}}}^{-1}\right)^{-1}={\boldsymbol{\Sigma_{q}}}-{\boldsymbol{\Sigma_{q}}}\left({\boldsymbol{I}}+{\boldsymbol{\Sigma_{p}}}^{-1}{\boldsymbol{\Sigma_{q}}}\right)^{-1}{\boldsymbol{\Sigma_{p}}}^{-1}{\boldsymbol{\Sigma_{q}}}
\]
\[
={\boldsymbol{\Sigma_{q}}}-{\boldsymbol{\Sigma_{q}}}\left({\boldsymbol{\Sigma_{q}}}^{-1}{\boldsymbol{\Sigma_{q}}}+{\boldsymbol{\Sigma_{p}}}^{-1}{\boldsymbol{\Sigma_{q}}}\right)^{-1}{\boldsymbol{\Sigma_{p}}}^{-1}{\boldsymbol{\Sigma_{q}}}
\]
\[
={\boldsymbol{\Sigma_{q}}}-{\boldsymbol{\Sigma_{q}}}{\boldsymbol{\Sigma_{q}}}^{-1}\left({\boldsymbol{\Sigma_{q}}}^{-1}+{\boldsymbol{\Sigma_{p}}}^{-1}\right)^{-1}{\boldsymbol{\Sigma_{p}}}^{-1}{\boldsymbol{\Sigma_{q}}}
\]
\[
\Rightarrow\left({\boldsymbol{\Sigma_{p}}}^{-1}+{\boldsymbol{\Sigma_{q}}}^{-1}\right)^{-1}\left[{\boldsymbol{I}}+{\boldsymbol{\Sigma_{p}}}^{-1}{\boldsymbol{\Sigma_{q}}}\right]={\boldsymbol{\Sigma_{q}}}
\]
\[
\Rightarrow\left({\boldsymbol{\Sigma_{p}}}^{-1}+{\boldsymbol{\Sigma_{q}}}^{-1}\right)^{-1}={\boldsymbol{\Sigma_{q}}}\left[{\boldsymbol{\Sigma_{p}}}^{-1}{\boldsymbol{\Sigma_{p}}}+{\boldsymbol{\Sigma_{p}}}^{-1}{\boldsymbol{\Sigma_{q}}}\right]^{-1}
\]
\[
\Rightarrow\left({\boldsymbol{\Sigma_{p}}}^{-1}+{\boldsymbol{\Sigma_{q}}}^{-1}\right)^{-1}={\boldsymbol{\Sigma_{q}}}\left[{\boldsymbol{\Sigma_{p}}}+{\boldsymbol{\Sigma_{q}}}\right]^{-1}{\boldsymbol{\Sigma_{p}}}
\]
Using this in the result for ${\boldsymbol{M}}$, 
\begin{eqnarray*}
{\boldsymbol{M}} & = & {\boldsymbol{\mu_{p}}}^{\mathrm{T}}{\boldsymbol{\Sigma_{p}}}^{-1}{\boldsymbol{\mu_{p}}}-{\boldsymbol{\mu_{p}}}^{\mathrm{T}}{\boldsymbol{\Sigma_{p}}}^{-1}{\boldsymbol{S}}{\boldsymbol{\Sigma_{p}}}^{-1}{\boldsymbol{\mu_{p}}}-{\boldsymbol{\mu_{p}}}^{\mathrm{T}}{\boldsymbol{\Sigma_{p}}}^{-1}{\boldsymbol{\Sigma_{p}}}\left[{\boldsymbol{\Sigma_{p}}}+{\boldsymbol{\Sigma_{q}}}\right]^{-1}{\boldsymbol{\Sigma_{q}}}{\boldsymbol{\Sigma_{q}}}^{-1}{\boldsymbol{\mu_{q}}}\\
 &  & +{\boldsymbol{\mu_{q}}}^{\mathrm{T}}{\boldsymbol{\Sigma_{q}}}^{-1}{\boldsymbol{\mu_{q}}}-{\boldsymbol{\mu_{q}}}^{\mathrm{T}}{\boldsymbol{\Sigma_{q}}}^{-1}{\boldsymbol{S}}{\boldsymbol{\Sigma_{q}}}^{-1}{\boldsymbol{\mu_{q}}}-{\boldsymbol{\mu_{q}}}^{\mathrm{T}}{\boldsymbol{\Sigma_{q}}}^{-1}{\boldsymbol{\Sigma_{q}}}\left[{\boldsymbol{\Sigma_{p}}}+{\boldsymbol{\Sigma_{q}}}\right]^{-1}{\boldsymbol{\Sigma_{p}}}{\boldsymbol{\Sigma_{p}}}^{-1}{\boldsymbol{\mu_{p}}}
\end{eqnarray*}
\begin{eqnarray*}
{\boldsymbol{M}} & = & {\boldsymbol{\mu_{p}}}^{\mathrm{T}}{\boldsymbol{\Sigma_{p}}}^{-1}\left[{\boldsymbol{S}}{\boldsymbol{S}}^{-1}-{\boldsymbol{S}}{\boldsymbol{\Sigma_{p}}}^{-1}\right]{\boldsymbol{\mu_{p}}}-{\boldsymbol{\mu_{p}}}^{\mathrm{T}}\left[{\boldsymbol{\Sigma_{p}}}+{\boldsymbol{\Sigma_{q}}}\right]^{-1}{\boldsymbol{\mu_{q}}}\\
 &  & +{\boldsymbol{\mu_{q}}}^{\mathrm{T}}{\boldsymbol{\Sigma_{q}}}^{-1}\left[{\boldsymbol{S}}{\boldsymbol{S}}^{-1}-{\boldsymbol{S}}{\boldsymbol{\Sigma_{q}}}^{-1}\right]{\boldsymbol{\mu_{q}}}-{\boldsymbol{\mu_{q}}}^{\mathrm{T}}\left[{\boldsymbol{\Sigma_{p}}}+{\boldsymbol{\Sigma_{q}}}\right]^{-1}{\boldsymbol{\mu_{p}}}
\end{eqnarray*}
\begin{eqnarray*}
{\boldsymbol{M}} & = & {\boldsymbol{\mu_{p}}}^{\mathrm{T}}{\boldsymbol{\Sigma_{p}}}^{-1}\left[{\boldsymbol{S}}\left\{ {\boldsymbol{S}}^{-1}-{\boldsymbol{\Sigma_{p}}}^{-1}\right\} \right]{\boldsymbol{\mu_{p}}}-{\boldsymbol{\mu_{p}}}^{\mathrm{T}}\left[{\boldsymbol{\Sigma_{p}}}+{\boldsymbol{\Sigma_{q}}}\right]^{-1}{\boldsymbol{\mu_{q}}}\\
 &  & +{\boldsymbol{\mu_{q}}}^{\mathrm{T}}{\boldsymbol{\Sigma_{q}}}^{-1}\left[{\boldsymbol{S}}\left\{ {\boldsymbol{S}}^{-1}-{\boldsymbol{\Sigma_{q}}}^{-1}\right\} \right]{\boldsymbol{\mu_{q}}}-{\boldsymbol{\mu_{q}}}^{\mathrm{T}}\left[{\boldsymbol{\Sigma_{p}}}+{\boldsymbol{\Sigma_{q}}}\right]^{-1}{\boldsymbol{\mu_{p}}}
\end{eqnarray*}
\begin{eqnarray*}
{\boldsymbol{M}} & = & {\boldsymbol{\mu_{p}}}^{\mathrm{T}}{\boldsymbol{\Sigma_{p}}}^{-1}\left[{\boldsymbol{\Sigma_{p}}}\left[{\boldsymbol{\Sigma_{q}}}+{\boldsymbol{\Sigma_{p}}}\right]^{-1}{\boldsymbol{\Sigma_{q}}}{\boldsymbol{\Sigma_{q}}}^{-1}\right]{\boldsymbol{\mu_{p}}}-{\boldsymbol{\mu_{p}}}^{\mathrm{T}}\left[{\boldsymbol{\Sigma_{p}}}+{\boldsymbol{\Sigma_{q}}}\right]^{-1}{\boldsymbol{\mu_{q}}}\\
 &  & +{\boldsymbol{\mu_{q}}}^{\mathrm{T}}{\boldsymbol{\Sigma_{q}}}^{-1}\left[{\boldsymbol{\Sigma_{q}}}\left[{\boldsymbol{\Sigma_{p}}}+{\boldsymbol{\Sigma_{q}}}\right]^{-1}{\boldsymbol{\Sigma_{p}}}{\boldsymbol{\Sigma_{p}}}^{-1}\right]{\boldsymbol{\mu_{q}}}-{\boldsymbol{\mu_{q}}}^{\mathrm{T}}\left[{\boldsymbol{\Sigma_{p}}}+{\boldsymbol{\Sigma_{q}}}\right]^{-1}{\boldsymbol{\mu_{p}}}
\end{eqnarray*}
\begin{eqnarray*}
{\boldsymbol{M}} & = & {\boldsymbol{\mu_{p}}}^{\mathrm{T}}\left[{\boldsymbol{\Sigma_{p}}}+{\boldsymbol{\Sigma_{q}}}\right]^{-1}{\boldsymbol{\mu_{p}}}-{\boldsymbol{\mu_{p}}}^{\mathrm{T}}\left[{\boldsymbol{\Sigma_{p}}}+{\boldsymbol{\Sigma_{q}}}\right]^{-1}{\boldsymbol{\mu_{q}}}\\
 &  & +{\boldsymbol{\mu_{q}}}^{\mathrm{T}}\left[{\boldsymbol{\Sigma_{p}}}+{\boldsymbol{\Sigma_{q}}}\right]^{-1}{\boldsymbol{\mu_{q}}}-{\boldsymbol{\mu_{q}}}^{\mathrm{T}}\left[{\boldsymbol{\Sigma_{p}}}+{\boldsymbol{\Sigma_{q}}}\right]^{-1}{\boldsymbol{\mu_{p}}}
\end{eqnarray*}
\begin{eqnarray*}
{\boldsymbol{M}} & = & \left({\boldsymbol{\mu_{p}}}-{\boldsymbol{\mu_{q}}}\right)^{\mathrm{T}}\left[{\boldsymbol{\Sigma_{p}}}+{\boldsymbol{\Sigma_{q}}}\right]^{-1}\left({\boldsymbol{\mu_{p}}}-{\boldsymbol{\mu_{q}}}\right)
\end{eqnarray*}
Let $\boldsymbol{\Sigma}=\frac{\boldsymbol{\Sigma_{p}}+\boldsymbol{\Sigma_{q}}}{2}$,
\begin{eqnarray*}
\rho\left(\boldsymbol{p},\boldsymbol{q}\right) & = & \int\cdots\int\left[\frac{\sqrt{\det\left({\boldsymbol{\Sigma_{p}}}{\boldsymbol{\Sigma}}^{-1}{\boldsymbol{\Sigma_{q}}}\right)}}{\left(|\boldsymbol{\Sigma_{p}}||\boldsymbol{\Sigma_{q}}|\right)^{\frac{1}{4}}\sqrt{(2\pi)^{k}\det\left({\boldsymbol{\Sigma_{p}}}{\boldsymbol{\Sigma}}^{-1}{\boldsymbol{\Sigma_{q}}}\right)}}\right.\\
 &  & \left.\exp\left(-\frac{1}{4}{\boldsymbol{M}}\right)\exp\left(-\frac{1}{2}\left\{ \left({\mathbf{x}-\mathbf{m}}\right)^{\mathrm{T}}\left({\boldsymbol{\Sigma_{q}}}^{-1}\left[{\boldsymbol{\Sigma}}\right]{\boldsymbol{\Sigma_{p}}}^{-1}\right)\left({\mathbf{x}-\mathbf{m}}\right)\right\} \right)dx_{1}\cdots dx_{k}\vphantom{\frac{\sqrt{\det\left({\boldsymbol{\Sigma_{p}}}\right)}}{\sqrt{\det\left({\boldsymbol{\Sigma_{p}}}\right)}}}\right]
\end{eqnarray*}
\begin{eqnarray*}
\rho\left(\boldsymbol{p},\boldsymbol{q}\right) & = & \frac{\left(|{\boldsymbol{\Sigma_{p}}}||{\boldsymbol{\Sigma_{q}}}|\right)^{\frac{1}{4}}}{\left(|{\boldsymbol{\Sigma}}|\right)^{\frac{1}{2}}}\exp\left(-\frac{1}{4}{\boldsymbol{M}}\right)
\end{eqnarray*}
\begin{eqnarray*}
D_{BC-MN}\left(\boldsymbol{p},\boldsymbol{q}\right) & = & -\ln\left[\rho\left(\boldsymbol{p},\boldsymbol{q}\right)\right]=\frac{1}{8}(\boldsymbol{\mu_{p}}-\boldsymbol{\mu_{q}})^{T}\boldsymbol{\Sigma}^{-1}(\boldsymbol{\mu_{p}}-\boldsymbol{\mu_{q}})+\frac{1}{2}\ln\,\left(\frac{\det\boldsymbol{\Sigma}}{\sqrt{\det\boldsymbol{\Sigma_{p}}\,\det\boldsymbol{\Sigma_{q}}}}\right)
\end{eqnarray*}
\end{doublespace}
\end{proof}
\begin{doublespace}

\subsection{\label{subsec:Proof-of-Proposition: Stein Lemma Generic}Proof of
Proposition \ref{prop:Stein_Lemma_Generic}}
\end{doublespace}
\begin{proof}
\begin{doublespace}
Since the following differential equation is satisfied,
\[
\frac{f_{XY}'\left(t,u\right)}{f_{XY}\left(t,u\right)}=-\frac{g'\left(t,u\right)}{g\left(t,u\right)}+\frac{\left[\mu_{Y}-h\left(u\right)\right]}{g\left(t,u\right)}\quad,\quad t,u\in(a,b)
\]
\[
\Rightarrow\quad f_{XY}'\left(t,u\right)g\left(t,u\right)+g'\left(t,u\right)f_{XY}\left(t,u\right)=\left[\mu_{Y}-h\left(u\right)\right]f_{XY}\left(t,u\right)
\]
\[
\Rightarrow\quad\frac{\partial f_{XY}\left(t,u\right)g\left(t,u\right)}{\partial t}=\left[\mu_{Y}-h\left(u\right)\right]f_{XY}\left(t,u\right)
\]
Integrating with respect to $t$ from $r$ to $b$ and assuming $\underset{t\rightarrow b}{\lim}\;g\left(t,u\right)f_{XY}\left(t,u\right)=0$
shows that for a given $h\left(u\right)$ the value of $g\left(t,u\right)$
uniquely determines the joint distribution of $X$ and $Y$.
\[
\left|f_{XY}\left(t,u\right)g\left(t,u\right)\right|_{r}^{b}=\int_{r}^{b}\left[\mu_{Y}-h\left(u\right)\right]\:f_{XY}\left(t,u\right)\:dt
\]
\[
f_{XY}\left(r,u\right)\:g\left(r,u\right)=\int_{r}^{b}\left[h\left(u\right)-\mu_{Y}\right]\:f_{XY}\left(t,u\right)\:dt
\]
\[
\int_{a}^{b}f_{XY}\left(r,u\right)\:g\left(r,u\right)\:du=\int_{a}^{b}\int_{r}^{b}\left[h\left(u\right)-\mu_{Y}\right]\:f_{XY}\left(t,u\right)\:dt\:du
\]
Similarly, integrating with respect to $t$ from $a$ to $r$ and
assuming $\underset{t\rightarrow a}{\lim}\;g\left(t,u\right)f_{XY}\left(t,u\right)=0$
gives,
\[
f_{XY}\left(r,u\right)\:g\left(r,u\right)=\int_{a}^{r}\left[\mu_{Y}-h\left(u\right)\right]\:f_{XY}\left(t,u\right)\:dt
\]
\[
\int_{a}^{b}f_{XY}\left(r,u\right)\:g\left(r,u\right)\:du=\int_{a}^{b}\int_{a}^{r}\left[\mu_{Y}-h\left(u\right)\right]\:f_{XY}\left(t,u\right)\:dt\:du
\]
Now consider,
\[
\text{Cov}\left[c\left(X\right),h\left(Y\right)\right]=\int_{a}^{b}\int_{a}^{b}\:\left[c\left(t\right)-E\left\{ c\left(X\right)\right\} \right]\:\left[h\left(u\right)-E\left\{ h\left(Y\right)\right\} \right]\:f_{XY}\left(t,u\right)\:dt\:du
\]
\[
=\int_{a}^{b}\int_{a}^{b}\:c\left(t\right)\:\left[h\left(u\right)-\mu_{Y}\right]\:f_{XY}\left(t,u\right)\:dt\:du-\int_{a}^{b}\int_{a}^{b}E\left\{ c\left(X\right)\right\} \:\left(h\left(u\right)-\mu_{Y}\right)\:f_{XY}\left(t,u\right)\:dt\:du
\]
\[
=\int_{a}^{b}\int_{a}^{b}\:c\left(t\right)\:\left[h\left(u\right)-\mu_{Y}\right]\:f_{XY}\left(t,u\right)\:dt\:du\quad\left\{ \because\;\int_{a}^{b}\int_{a}^{b}\:E\left\{ c\left(X\right)\right\} \:\left(h\left(u\right)-\mu_{Y}\right)f_{XY}\left(t,u\right)\:dt\:du=0\right\} 
\]
\[
=\int_{a}^{b}\int_{a}^{b}\:\left[c\left(t\right)-c\left(a\right)\right]\:\left[h\left(u\right)-\mu_{Y}\right]\:f_{XY}\left(t,u\right)\:dt\:du
\]
\[
=\int_{a}^{b}\int_{a}^{b}\:\left[\int_{a}^{t}\:c'\left(r\right)\:dr\right]\:\left[h\left(u\right)-\mu_{Y}\right]\:f_{XY}\left(t,u\right)\:dt\:du
\]
Using Fubini's theorem and interchanging the order of integration,
\[
\text{Cov}\left[c\left(X\right),h\left(Y\right)\right]=\int_{a}^{b}\int_{a}^{b}\:c'\left(r\right)\:\left\{ \int_{r}^{b}\:\left[h\left(u\right)-\mu_{Y}\right]\:f_{XY}\left(t,u\right)\:dt\right\} \:dr\:du
\]
\[
=\int_{a}^{b}\int_{a}^{b}\:c'\left(r\right)\:f_{XY}\left(r,u\right)\:g\left(r,u\right)\:dr\:du
\]
\[
=E\left[c'\left(r\right)\:g\left(r,u\right)\right]=E\left[c'\left(X\right)\:g\left(X,Y\right)\right]
\]
\end{doublespace}
\end{proof}
\begin{doublespace}

\subsection{\label{subsec:Proof-of-Proposition: Distance_Covariance_Relationship}Proof
of Proposition \ref{prop:Distance-Covariance-Relationship}}
\end{doublespace}
\begin{proof}
\begin{doublespace}
We start with the definition of covariance as below,
\[
\text{Cov}\left[c\left(X\right),Y\right]=\int\int\left[c\left(t\right)-E\left\{ c\left(X\right)\right\} \right]\left(u-\mu_{Y}\right)f_{XY}\left(t,u\right)\:dt\:du
\]
\[
=\int\int\left[c\left(t\right)u-E\left\{ c\left(X\right)\right\} u-c\left(t\right)\mu_{Y}+E\left\{ c\left(X\right)\right\} \mu_{Y}\right]f_{XY}\left(t,u\right)\:dt\:du
\]
\[
=\int\int c\left(t\right)u\:f_{XY}\left(t,u\right)\:dt\:du-E\left\{ c\left(X\right)\right\} \mu_{Y}
\]
\[
=\int\int c\left(t\right)\:u\:f_{\left(Y\mid X\right)}\left(u\mid t\right)\:f_{X}\left(t\right)\:dt\:du-E\left\{ c\left(X\right)\right\} \mu_{Y}
\]
\[
=\int\int c\left(t\right)\:u\:f_{\left(X\mid Y\right)}\left(t\mid u\right)\:f_{Y}\left(u\right)\:dt\:du-E\left\{ c\left(X\right)\right\} \mu_{Y}
\]
Set,
\[
c\left(t\right)=t-\sqrt{\frac{f_{Y}\left(t\right)}{f_{X}\left(t\right)}}
\]
\[
\text{Cov}\left[c\left(X\right),Y\right]=\int\int\left\{ t-\sqrt{\frac{f_{Y}\left(t\right)}{f_{X}\left(t\right)}}\right\} \:u\:f_{XY}\left(t,u\right)\:dt\:du-E\left[\left\{ t-\sqrt{\frac{f_{Y}\left(t\right)}{f_{X}\left(t\right)}}\right\} \right]\mu_{Y}
\]
\[
=\text{Cov}\left(X,Y\right)-\int\int\sqrt{\frac{f_{Y}\left(t\right)}{f_{X}\left(t\right)}}\:u\:f_{\left(Y\mid X\right)}\left(u\mid t\right)\:f_{X}\left(t\right)\:dt\:du+\mu_{Y}\rho\left(f_{X},f_{Y}\right)
\]
\[
=\text{Cov}\left(X,Y\right)-\int\sqrt{\frac{f_{Y}\left(t\right)}{f_{X}\left(t\right)}}\:E\left[Y\mid X=t\right]\:f_{X}\left(t\right)\:dt+\mu_{Y}\rho\left(f_{X},f_{Y}\right)
\]
\[
=\text{Cov}\left(X,Y\right)-E\left[\sqrt{\frac{f_{Y}\left(t\right)}{f_{X}\left(t\right)}}Y\right]+\mu_{Y}\rho\left(f_{X},f_{Y}\right)
\]
\[
=\text{Cov}\left(X,Y\right)-\int\int\sqrt{\frac{f_{Y}\left(t\right)}{f_{X}\left(t\right)}}\:u\:f_{\left(X\mid Y\right)}\left(t\mid u\right)\:f_{Y}\left(u\right)\:dt\:du+\mu_{Y}\rho\left(f_{X},f_{Y}\right)
\]
\end{doublespace}
\end{proof}
\begin{doublespace}

\subsection{\label{subsec:Proof-of-Proposition: Asset Pricing}Proof of Proposition
\ref{prop:The-asset-pricing}}
\end{doublespace}
\begin{proof}
\begin{doublespace}
Proved within the main body of the article.
\end{doublespace}
\end{proof}
\begin{doublespace}

\section{\label{sec:R-Code-Snippets} Appendix: R Code Snippets}
\end{doublespace}

\begin{doublespace}
\inputencoding{latin9}\begin{lstlisting}
#Johnson Linderstrauss Lemma
minimumDimension = function(numberOfPoints, errorTolerance){
	result = (4 * log(numberOfPoints)) / 
				(((errorTolerance^2)/2) - ((errorTolerance^3)/3));
	result = ceiling(result);
	return(result);
};

#Johnson Linderstrauss Lemma
reduceDimension = function(inputTable, errorTolerance,minDimension){
	nRows = dim(inputTable)[1];
	nColumns = dim(inputTable)[2];
	randomMatrix = matrix(rnorm(nColumns*minDimension,
		mean=0,sd=1/sqrt(minDimension)), nColumns, minDimension);
	cat(paste("\t\tRandom Dim Reduction Matrix : ", 
		dim(randomMatrix)[1],"*",dim(randomMatrix)[2],"\n")); 
	inputTable = (inputTable%*%randomMatrix);
	return(inputTable);
};

#Overflow Bug Fix to the bhattacharya.dist() function available in library("fps")
bhattacharyyaDistanceEigenValues = function (mu1, mu2, Sigma1, Sigma2)  {   
	aggregatesigma <- (Sigma1 + Sigma2)/2;   
	d1 <- mahalanobis(mu1, mu2, aggregatesigma)/8;   
	#d2 <- log(det(as.matrix(aggregatesigma))/sqrt(det(as.matrix(Sigma1)) *    
	#det(as.matrix(Sigma2))))/2;   
	eigenAggregate=log(eigen(as.matrix(aggregatesigma))$values);   
	eigenOne=(log(eigen(as.matrix(Sigma1))$values)/2);   
	eigenTwo=(log(eigen(as.matrix(Sigma2))$values)/2);
	d2 <- sum(eigenAggregate)-sum(eigenOne)-sum(eigenTwo);   
	out <- d1 + (d2/2);   
	return(out); 
};

#Perform PCA Rotation with different number of components
pcaRotation = function(inputTable,significantDigits=3,numberOfCOmponents=-1,
	returnTruncated=FALSE,transposeIfNeeded=TRUE){
	numRows = dim(inputTable)[1];
	numColumns = dim(inputTable)[2];
	transposeDone = FALSE;
	if(transposeIfNeeded){
		if(numRows<numColumns){
		inputTable = (t(inputTable));
		numRows = dim(inputTable)[1];
		numColumns = dim(inputTable)[2];
		transposeDone = TRUE;
		}
	}
	pcaResult = prcomp(inputTable);
	#Plot is extremely useful. This graphic is highly insightful
	plot(cumsum(pcaResult$sdev^2/sum(pcaResult$sdev^2)));

	if(numberOfCOmponents==-1){
		cummulativeContribution = cumsum(pcaResult$sdev^2/sum(pcaResult$sdev^2));
		cummulativeContribution = round(diff(cummulativeContribution),
			significantDigits);
		pcaSignifcantDimension = sum(cummulativeContribution>0)+1;
		cat(paste("\t\tCalculating Contributions and Retaining: ", 
		pcaSignifcantDimension,"out of: ",numColumns," total Dimensions\n"));
		if(!returnTruncated) {
			inputTable = (pcaResult$x[,1:pcaSignifcantDimension]) %*% 
				(t(pcaResult$rotation[,1:pcaSignifcantDimension]));
			#and add the center (and re-scale) back to data
		if(pcaResult$scale != FALSE){
			inputTable <- scale(inputTable, center = FALSE , 
						scale=1/pcaResult$scale)
		}
		if(pcaResult$center != FALSE){
			inputTable <- scale(inputTable, center = -1 * pcaResult$center, 
						scale=FALSE)
		}
		}else {
			inputTable = (pcaResult$x[,1:pcaSignifcantDimension]);
		}
		} else {
			cat(paste("\t\tComponent Count Received and Retaining: ", 
				numberOfCOmponents,"out of: ",numColumns,
												" total Dimensions\n"));
			numberOfCOmponents = min(numberOfCOmponents,numColumns);
			cat(paste("\t\tMinimum dimension: ", 
				numberOfCOmponents,"out of: ",numColumns,
												" total Dimensions\n"));
			inputTable = (pcaResult$x[,1:numberOfCOmponents]);
		}
	if(transposeDone){
		inputTable = (t(inputTable));
	}
	return (inputTable);
};
\end{lstlisting}
\inputencoding{utf8}\end{doublespace}

\end{document}